\documentclass[11pt]{article}
\usepackage[usenames,dvipsnames,svgnames,table]{xcolor} 
\usepackage{kotex}
\usepackage{graphicx}
\usepackage[normalem]{ulem}
\usepackage{amssymb, amsmath, mathtools}
\usepackage{verbatim}
\usepackage[round]{natbib}
\usepackage[colorlinks=true,allcolors=black]{hyperref}
\usepackage[affil-it]{authblk}
\usepackage{mathrsfs}
\usepackage{here}
\usepackage{makecell}
\usepackage{tikz}
\usepackage[shortlabels]{enumitem}
\usepackage{yfonts}
\usepackage{comment}
\usepackage{longtable}
\usepackage{bm, color, amsthm}
\usetikzlibrary{circuits.logic.US,circuits.logic.IEC,fit}
\usepackage{multirow}
\usepackage{float} 

\usepackage{bibunits}
\defaultbibliographystyle{abbrvnat} 
\defaultbibliography{refs}

\usepackage{aliascnt}

\usepackage[noabbrev,capitalise]{cleveref}
\creflabelformat{equation}{#2\textup{#1}#3}

\usepackage{color, colortbl}
\usepackage{booktabs}
\usepackage{geometry}
\usepackage{array}
\usepackage{caption}
\usepackage{subcaption}
\usepackage{dsfont}
\definecolor{Yellow}{rgb}{1,1,0.6}
\usepackage{setspace}
\usepackage{microtype}   

\usepackage{algorithm}
\usepackage{algpseudocode}
\usepackage{adjustbox}
\usepackage{lineno} 

\algrenewcommand\algorithmicrequire{\textbf{Input:}}
\algrenewcommand\algorithmicensure{\textbf{Output:}}

\hypersetup{
 colorlinks,
 citecolor=Blue,
 linkcolor=Blue,
 urlcolor=Blue}
 
\newtheorem{theorem}{Theorem}[section]

\newaliascnt{lemma}{theorem}
\newtheorem{lemma}[lemma]{Lemma}
\aliascntresetthe{lemma}

\newaliascnt{definition}{theorem}
\newtheorem{definition}[definition]{Definition}
\aliascntresetthe{definition}

\newaliascnt{proposition}{theorem}
\newtheorem{proposition}[proposition]{Proposition}
\aliascntresetthe{proposition}

\newaliascnt{condition}{theorem}
\newtheorem{condition}[condition]{Condition}
\aliascntresetthe{condition}

\newaliascnt{corollary}{theorem}
\newtheorem{corollary}[corollary]{Corollary}
\aliascntresetthe{corollary}

\newaliascnt{result}{theorem}

\aliascntresetthe{result}

\newaliascnt{remark}{theorem}

\aliascntresetthe{remark}

\newaliascnt{remarks}{theorem}

\aliascntresetthe{remarks}

\crefname{theorem}{Theorem}{Theorems}
\Crefname{theorem}{Theorem}{Theorems}

\crefname{lemma}{Lemma}{Lemmas}
\Crefname{lemma}{Lemma}{Lemmas}

\crefname{definition}{Definition}{Definitions}
\Crefname{definition}{Definition}{Definitions}

\crefname{proposition}{Proposition}{Propositions}
\Crefname{proposition}{Proposition}{Propositions}

\crefname{condition}{Condition}{Conditions}
\Crefname{condition}{Condition}{Conditions}

\crefname{corollary}{Corollary}{Corollaries}
\Crefname{corollary}{Corollary}{Corollaries}

\crefname{result}{Result}{Results}
\Crefname{result}{Result}{Results}

\crefname{remark}{Remark}{Remarks}
\Crefname{remark}{Remark}{Remarks}

\crefname{remarks}{Remarks}{Remarks}
\Crefname{remarks}{Remarks}{Remarks}

\DeclareMathOperator*{\argmin}{argmin}
\DeclareMathOperator*{\argmax}{argmax}
\newcommand{\E}{\mathbb{E}}

\newcommand{\V}{{\rm\mathbb{V}ar}}
\newcommand{\C}{{\rm\mathbb{C}ov}}

\newcommand{\bea}{\begin{eqnarray*}}
	\newcommand{\eea}{\end{eqnarray*}}
\newcommand{\bean}{\begin{eqnarray}}
	\newcommand{\eean}{\end{eqnarray}}
\newcommand{\benu}{\begin{enumerate}}
	\newcommand{\eenu}{\end{enumerate}}

\newcommand{\bbR}{\mathbb{R}}
\newcommand{\bbN}{\mathbb{N}}

\newcommand{\cM}{\mathcal{M}}
\newcommand{\cN}{\mathcal{N}}

\newcommand{\cX}{\mathcal{X}}

\newcommand{\cA}{\mathcal{A}}

\providecommand{\keywords}[1]{\textbf{\textit{Keywords---}} #1}

\newcommand{\jc}[1]{\textcolor{magenta}{#1}}

\title{Robust Bayesian Predictive Model Selection using \\ Bregman Divergence}
\author[1]{Jongwoo Choi\footnote{Corresponding author, E-mail: cjw7779@gmail.com}}
\author[2]{Neil A. Spencer}
\author[2]{Dipak K. Dey}
\affil[1]{Department of Biostatistics, Harvard University}
\affil[2]{Department of Statistics, University of Connecticut}

\date{} 

\begin{document}
\begin{bibunit}

\maketitle
\vspace{-1em}

\begin{abstract}
Predictive Bayesian model comparison often relies on leave-one-out (LOO) cross-validation criteria such as the expected log predictive density (ELPD). 
However, model rankings can be overly sensitive to outliers and tail mismatch because ELPD is based on the log score. 
We propose a score-matched generalized ELPD framework that replaces the log score by a Bregman scoring rule to update model parameters through a generalized posterior and to evaluate LOO predictive utility.
Candidate posterior predictive distributions are ranked by out-of-sample utility under the chosen scoring rule, yielding a direct proper-score generalization of standard ELPD.
We focus especially on the $\beta$-divergence family, where $\beta$ controls the sensitivity of predictive comparison to low-density observations.
Under model misspecification, the procedure asymptotically selects the model whose predictive distribution is closest to the data-generating process under the chosen Bregman divergence. 
A simulation study and applications to microbial and forensic data show that the generalized ELPD can change the selected model through reduced sensitivity to low-density observations.
\end{abstract}


\keywords{Bayesian inference; Bayesian model selection; Bregman divergence; Robustness; Model misspecification}


\section{Introduction}\label{sec:intro}
Model selection is ubiquitous in statistical analysis. When the goal is prediction, a natural way to compare competing models is to consider how well they forecast new, unseen observations from the same underlying population \citep{Lindley1968, vehtari2012survey}. This task is known as predictive model comparison, and it ranks models by their expected out-of-sample predictive performance. A widely used approach to predictive comparison is leave-one-out (LOO) cross-validation, which estimates each model's expected log predictive density (ELPD; \citealp{vehtari2017practical}). These tools evaluate models through their implied predictive distribution, targeting the model with the best predictive performance---even if none captures the true data-generating process (DGP) exactly. This aligns with the $\mathcal{M}$-open regime \citep{bernardo1994bayesian}.

Despite their broad success, ELPD-based criteria can behave undesirably in the presence of outliers or heavy-tailed observations. To make this concrete, consider the contaminated normal DGP $q(x) = (1-\epsilon) \, \cN(x; \,0, 1) + \epsilon \, \cN(x; \, 0, 10^2)$, where $\epsilon = 0.2$ represents a $20\%$ contamination component with much larger variance. Consider two candidate models: $M_1: \cN(0,1)$, which captures the central bulk, and $M_2: t_2(3,1)$. When models are compared using ELPD-LOO, a small fraction of extreme observations can dominate the comparison. In this example, sufficiently extreme draws from the outlying component can pull ELPD toward $M_2$, even though $M_2$ fits the bulk near $0$ poorly. Thus, a model can win largely by protecting itself against rare extremes, at the expense of representing the central patterns that drive most predictions. 


\begin{figure}
    \centering
    \includegraphics[width=0.6\linewidth]{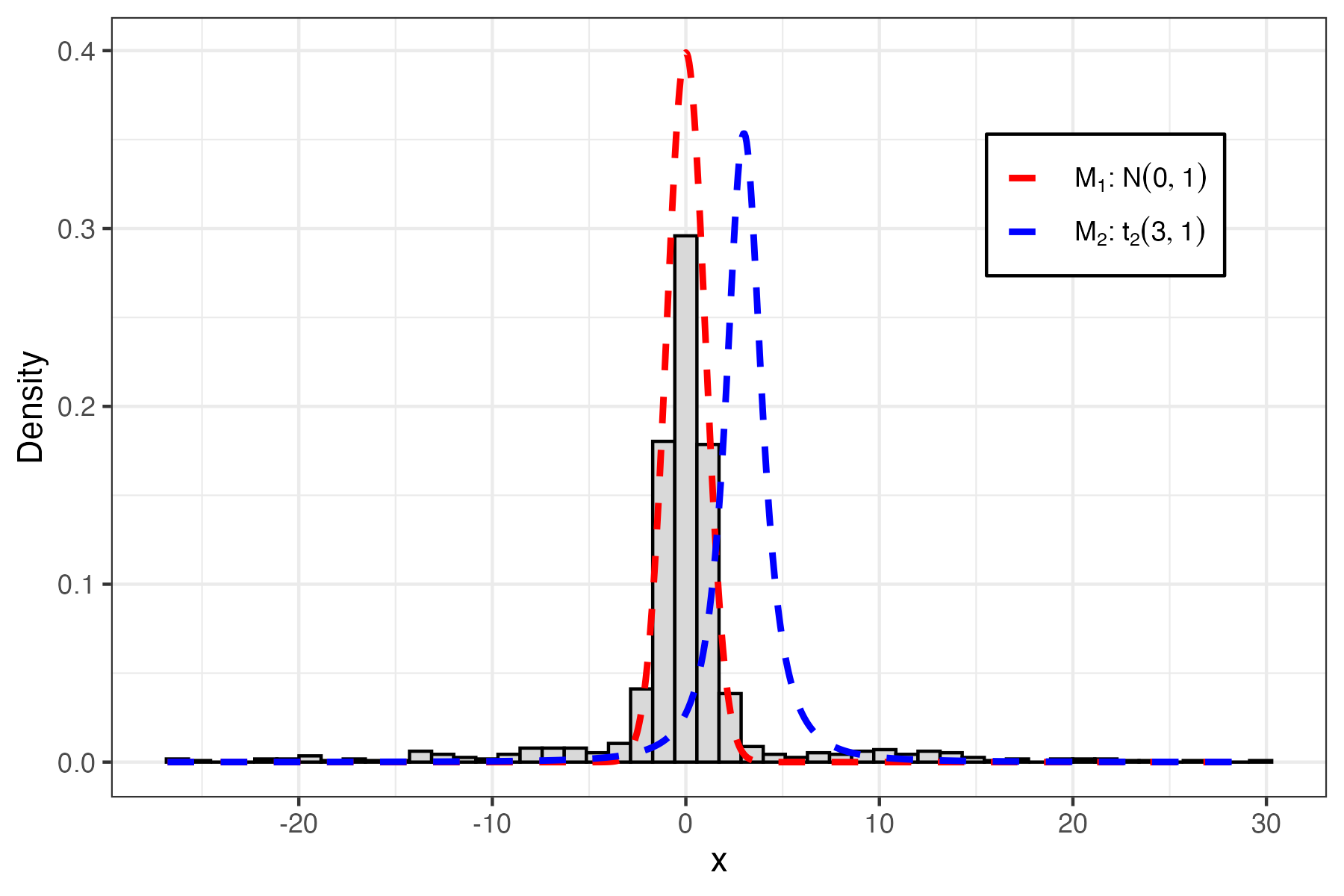}
    \captionsetup{font=small}
    \caption{Histogram shows $n=1000$ simulated data from $q(x) = (1-\epsilon) \, \cN(x; \,0, 1) + \epsilon \, \cN(x; \, 0, 10^2)$, with overlays of $M_1: \cN(0, 1)$ (center-correct, light-tailed) and $M_2: t_2(3, 1) $ (miscentered, heavy-tailed). }
    \label{fig:fig_toy}
\end{figure}

This example reflects a broader limitation of log-predictive-density-based comparison. In expectation under the DGP, ELPD targets the model whose predictive distribution provides the closest approximation to the DGP in Kullback--Leibler (KL) divergence \citep{sawa1978information, vehtari2012survey}. Since the KL divergence is highly sensitive to regions where a predictive density assigns extremely small probability, model comparison can be overly influenced by outliers or tail mismatch \citep{jewson2018principles}. This motivates predictive criteria that reduce sensitivity to tail observations while remaining decision-theoretic in the $\mathcal{M}$-open regime.

Generalized Bayes updating \citep{bissiri2016general} provides a principled way to update prior beliefs using losses rather than likelihoods, and scoring-rule-based methods \citep{gneiting2007strictly} have been developed to address robustness, model misspecification, and 
intractable-likelihood settings \citep{giummole2019objective, knoblauch2022optimization, matsubara2022robust, pacchiardi2024generalized}. Recent work on stability 
suggests that posterior predictive distributions, rather than parameter posteriors, are often the primary objects of interest for robust inference and decision making \citep{jewson2024stability}. From this predictive perspective, model comparison and combination methods evaluate posterior predictive distributions through their out-of-sample utility, typically using proper scoring rules and LOO approximations \citep{piironen2017comparison, yao2018using}.

These developments suggest a natural question for robust predictive model selection: can we replace the log-score target in LOO predictive comparison by a non-log scoring rule, while using the same criterion to update parameters, construct posterior predictives, and evaluate out-of-sample predictive utility? Such a framework should preserve the $\cM$-open predictive interpretation of ELPD, but allow us to adjust sensitivity to outliers and tail mismatch. This paper develops such a generalized predictive comparison framework and studies its robustness, asymptotic target, and practical computation.

\subsection{Contributions of this paper}

We pursue the goal above by replacing the KL target implicit in standard predictive criteria with a broader class of discrepancies. Concretely, we propose a predictive model selection framework based on Bregman divergences (BD; \citealp{bregman1967relaxation}), which include KL divergence as a special case and are naturally linked to proper scoring rules \citep{grunwald2004game, gneiting2007strictly}. The scoring rule provides a decision-theoretic utility for evaluating predictive distributions and also yields a principled loss for generalized Bayes updating \citep{pacchiardi2024generalized}.

This creates a natural predictive workflow. For each candidate model, we update the prior over its parameters using the Bregman score induced by the chosen BD, form the posterior predictive density, and evaluate that density by LOO under the same score. This yields a generalized predictive accuracy criterion, which we call generalized ELPD (g-ELPD) recovering standard ELPD as a special case. 


A key advantage of the Bregman formulation is that the generalized posterior predictive can be obtained by simply marginalizing over the generalized parameter posterior, exactly as in standard Bayesian machinery, while remaining the Bayes action under the same divergence used for updating and evaluation. This tractability is not a general feature of arbitrary divergence-based losses. 

Throughout, we emphasize the $\beta$-divergence \citep{basu1998robust}, also known as density power divergence \citep{ghosh2016robust}, as a practically useful subclass of Bregman divergences that yields robustness under contamination and misspecification. This divergence has been increasingly used in recent years, with applications including model criticism and robust parameter inference \citep{goh2014bayesian, knoblauch2018doubly, knoblauch2022optimization, girardi2020robust, sugasawa2020robust, jewson2024stability}. In our framework, $\beta$ acts as an outlier-sensitivity parameter: $\beta=1$ recovers the usual log-score comparison, while moderate $\beta>1$ 
reduces the leverage of observations assigned very small predictive density.

Our contributions are summarized as follows.
\begin{enumerate}
    \item We formulate Bayesian predictive model comparison as a score-matched proper-scoring-rule decision problem. For each candidate model, the same Bregman score is used for generalized Bayes updating, posterior predictive construction, and LOO predictive evaluation. 
    We also show that for Bregman scores, the ordinary posterior predictive distribution is the Bregman Bayes action for the posterior distribution over model-implied predictive densities.

    \item Under misspecification, we establish posterior concentration, posterior predictive consistency, and model selection consistency under bounded-score regularity conditions. The limiting selected model is the candidate whose pseudo-true predictive density minimizes the chosen Bregman divergence to the data-generating process. We also characterize the limiting target obtained when the updating and evaluation scores are not matched.
    
    \item For the $\beta$-divergence family, we formalize the robustness mechanism through bounded pairwise score contributions and a contamination-stability bound. These results explain how $\beta$-gELPD limits the observation-level leverage of low predictive-density observations relative to log-score ELPD. We also adapt PSIS-LOO to compute the proposed criterion. 

    

    
\end{enumerate}

\subsection{Related work}


Predictive criteria have long played a central role in Bayesian model comparison, assessing models by their out-of-sample predictive performance \citep{stone1974cross, geisser1975predictive, gelfand1992model, bernardo1994bayesian, vehtari2012survey}. This perspective continues to motivate active research on predictive model comparison, including stacking, uncertainty quantification, and projection-based methods \citep{piironen2017comparison, yao2018using, piironen2020projective, sivula2025uncertainty, mclatchie2025advances}. When the primary goal is prediction, model averaging and stacking can be preferable to selecting a single model \citep{hoeting1999bayesian, yao2018using}. In many applications, however, the decision problem is to select a single model -- for interpretability, for reporting and inference under a single generative model, or formal comparison between competing models. 

Classical robust Bayesian analysis often focuses on sensitivity to modeling choices, especially the prior, and develops methods to assess or control how posterior inferences change under perturbations \citep{berger1994overview}. However, as \citet{berger1985statistical} notes, priors cannot typically be specified without error, particularly when elicitation time or information is limited. Motivated by such concerns and by ideas from frequentist robust statistics \citep{huber1981robust}, more recent work targets robustness to model misspecification and outliers through several complementary strategies: replacing the likelihood by a loss \citep{bissiri2016general}, modifying or tempering the likelihood \citep{grunwald2012safe, Bhattacharya2019fractional, miller2019robust}, and adopting divergence-based remedies \citep{hooker2014bayesian, ghosh2016robust, nakagawa2020robust, matsubara2022robust}.

Within the divergence-based literature, a common approach is to generalize Bayesian updating by substituting the implicit KL divergence target with an alternative divergence (see, for example, \citealp{dey1994robust, grunwald2004game, jewson2018principles}). Prominent robust choices include the $\beta$-divergence \citep{basu1998robust, amari2009alpha, ghosh2016robust}. This line of work primarily provides principled robust alternatives to likelihood-based updating for parameter inference. The closest contribution to ours in this inference direction is \citet{jewson2024stability}, which studies stability properties of $\beta$-divergence generalized Bayes updating, with a particular focus on stability of posterior predictive inferences under perturbations of the likelihood specification and the DGP. In contrast, our focus is on the predictive model comparison across competing models.

Our work is different from existing Bregman-divergence-based model selection approaches \citep{goh2021bayesian}, whose primary targets and guarantees are developed under a more $\cM$-closed regime (i.e., assuming one of the candidate models actually recovers the true DGP) and need not coincide with the $\cM$-open predictive objective considered here. To our knowledge, existing methods do not jointly provide: divergence-based generalized updating, predictive model comparison under the same non-log scoring rule, and an asymptotic theory for the resulting score-matched predictive selection criterion.


The paper is structured as follows: \cref{sec:pre} provides the necessary background tools and \cref{sec:meth} presents our Bregman-geometric predictive model selection framework. \cref{sec:theory} develops the asymptotic theory, including model selection consistency. \cref{sec:comp} presents the PSIS-based computation. \cref{sec:examples} revisits the motivating example in \cref{fig:fig_toy} and then presents two applications, one to thermal performance curve model selection in microbial ecology and one to spatial modeling in forensic footwear analysis. 

\section{Preliminaries}\label{sec:pre}
This section reviews the tools and notation used throughout the paper. \cref{sec:pre:notation} sets notation, \cref{sec:pre:decision} frames predictive model selection as a decision problem -- namely, choosing a predictive distribution to maximize the expected out-of-sample utility under the true DGP. \cref{sec:pre:scoring} introduces proper scoring rules and Bregman scoring rules. \cref{sec:pre:gen_bayes} reviews generalized Bayes updating, which yields score-based posterior distributions over parameters.

\subsection{Notation and setup}\label{sec:pre:notation}
Let $(\cX, \cA)$ be the sample space and let $X_1, X_2, \ldots \in \cX$ be independent and identically distributed random variables with common distribution $G$. When convenient, we write $X$ for a generic draw from $G$. Assume $G$ admits a density $g$ with respect to a $\sigma$-finite measure $\lambda$ on $\cX$, for example, the Lebesgue measure for continuous observations or the counting measure for discrete observations. For each $n \in \bbN$, write $X_{1:n} = (X_1, \ldots, X_n)$ and let $x_{1:n}$ denote an observed realization. Let $\tilde{X}$ denote an unseen observation drawn from $G$, independent of $X_{1:n}$.


Consider a finite set of candidate models $\cM = \{M_1, \ldots, M_K\}$. We do not assume $G$ coincides with one of the candidate models, that is, all $K$ models may be misspecified. Model $M_k$ is specified by a family of densities $\{f_k(\cdot; \theta_k): \theta_k \in \Theta_k\}$ (with respect to $\lambda$) and a prior probability measure $\Pi_k$ on $\Theta_k$. Here, $\Theta_k \subseteq \bbR^{d_k}$ is the parameter space of finite dimension $d_k$. When $\Pi_k$ admits a density with respect to Lebesgue measure on $\Theta_k$, we write $\Pi_k(d\theta_k) = \pi_k(\theta_k)d\theta_k$. Throughout the paper, we present results using densities whenever they exist.

\subsection{Predictive model selection as a decision problem}\label{sec:pre:decision}
Predictive model selection aims to choose, from a finite set of candidate models $\{M_k\}_{k=1}^K$, the model whose predictive distribution performs best for unseen observations drawn from the same DGP. After observing $x_{1:n}$, each candidate model $M_k$ induces a predictive distribution for a new observation $\tilde{X}$; we write its predictive density (with respect to $\lambda$) as $p_k(\cdot \mid x_{1:n})$. Let $u(\cdot,\cdot)$ be a utility function \citep{berger1985statistical} such that $u\big(p_k(\cdot \mid x_{1:n}\big), \tilde{X})$ is integrable under the true DGP $G$. That is, $u(p, \tilde{x})$ provides a measure of predictive performance when we use the predictive density $p$ and the realized outcome is $\tilde{x}$. 

Our target is to select the model whose predictive distribution maximizes expected utility under the true DGP:
\begin{equation}
    \widehat M = \argmax_{k \in \{1,\ldots, K\}} \E \Big( u\big( p_k(\cdot \mid x_{1:n}), \tilde{X} \big) \Big) = \argmax_{k \in \{1,\ldots, K\}} \int_{\cX} u\big(p_k(\cdot \mid x_{1:n}), \tilde{x}\big) g(\tilde{x}) \lambda( d\tilde{x}).
\end{equation}
This objective can be viewed through a decision-theoretic framework \citep{bernardo1994bayesian, vehtari2012survey}, where the actions are the candidate predictive distributions $\big\{p_k(\cdot \mid x_{1:n})\big\}_{k=1}^K$ and $u$ is the associated utility. This framing is particularly natural in the $\cM$-open regime, where none of the models $M_k$ are assumed to be correctly specified: the goal is not to identify a ``true'' model, but rather to compare competing predictive behaviors via out-of-sample utility. In practice, the expected utility is often estimated by leave-one-out cross-validation (LOO-CV; \citealp{geisser1975predictive, 
vehtari2017practical}).

\subsection{Proper Scoring Rules and Bregman Divergence}\label{sec:pre:scoring}
The predictive model comparison problem is determined by the choice of utility function $u(\cdot,\cdot)$ used to evaluate candidate predictive distributions with respect to the DGP. A standard choice in a Bayesian context is to take $u$ to be a (proper) scoring rule \citep{gneiting2007strictly}, which evaluates probabilistic predictions by assigning a numerical score based on the predictive distribution and the realized observation, and can be interpreted as a utility to be maximized in expectation \citep{bernardo1994bayesian}. 

Following the notation of \citet{gneiting2007strictly}, a scoring rule is a function $S: \mathcal{P} \times \cX \to \bbR$, where $\mathcal{P}$ is a collection of probability measures on $\cX$. For any $P \in \mathcal{P}$ and realized outcome $x \in \cX$, the quantity $S(P, x)$ measures the quality of the predictive distribution $P$ when $x$ occurs. When $P$ admits a density $p$ with respect to $\lambda$, we may write $S(p,x)$. For $P, Q \in \mathcal{P}$, define the expected score under $Q$ by 
\begin{equation*}
    S(P,Q) := \int_{\cX} S(P, x) Q(dx) = \int_\cX S(P,x) q(x) \lambda(dx),
\end{equation*}
when $Q$ has density $q$ with respect to $\lambda$. The scoring rule $S$ is proper (relative to $\mathcal{P}$) if for all $P, Q \in \mathcal{P}$, $S(Q,Q) \ge S(P,Q)$ and it is strictly proper if equality holds only when $P = Q$. Thus, under a strictly proper scoring rule, the expected score $S(P,Q)$ is uniquely maximized by $P=Q$. 

In this paper, we compare predictive distributions by maximizing expected score, so we now specialize to a strictly proper family that we will use throughout.
\begin{definition}[Bregman scoring rule; \citet{grunwald2004game}]\label{def:bregman_score}
Let $\phi:[0,\infty) \to \bbR$ be strictly convex and continuously differentiable. Let $P \in \mathcal P$ be absolutely continuous with respect to $\lambda$ with density $p$. The (separable) Bregman scoring rule generated by $\phi$ is
\begin{equation}\label{eq:bregman_score}
    S^\phi (P, x) = \phi'(p(x)) - \int_\cX \Big( \phi'(p(t)) \, p(t) - \phi(p(t)) \Big) \lambda (dt),
\end{equation}
for $x\in\cX$, whenever the integral is finite.
\end{definition}
A commonly used member of this family is obtained by taking $\phi(u) = \tfrac{u^\beta}{\beta(\beta - 1)}$ for $\beta>0$, $\beta\neq 1$. Substituting this choice into \cref{eq:bregman_score} yields the associated $\beta$-score 
\begin{equation}\label{eq:beta_score}
    S^\beta(P,x) = \frac{p(x)^{\beta-1}}{\beta-1} - \frac{1}{\beta}\int_\cX p(t)^\beta\,\lambda(dt),
\end{equation}
whenever the integral is finite. The log score is recovered in the continuous limit $\beta\to 1$, up to an additive constant independent of $x$. Through the choice of $\beta$, one can control sensitivity to discrepancies in low-density regions. When $\beta>1$, the observation-specific term $p(x)^{\beta-1}/(\beta-1)$ remains bounded as $p(x) \to 0$, unlike the log score. Thus, low predictive density observations have reduced influence, making $\beta>1$ useful for robust predictive comparison. Values $\beta<1$ have the opposite behavior and are therefore not the focus of this paper.


Proper scoring rules induce a corresponding discrepancy between predictive distributions. In particular, each proper score defines a notion of ``closeness'' between predictive distributions (via its induced divergence), so maximizing expected predictive utility is equivalent to minimizing this discrepancy to the DGP. Specifically, define the generalized entropy $H(Q) := S(Q, Q) = \E_{X \sim Q}\big(S(Q, X)\big)$, and define the associated divergence as $D(P, Q) := H(Q) - S(P, Q)$. If $S$ is proper, then $D(P,Q) \ge 0$ for all $P,Q \in \mathcal{P}$, and if $S$ is strictly proper then $D(P,Q)=0$ if and only if $P=Q$. Hence, for fixed $Q$, maximizing the expected score $S(P,Q)$ over $P$ is equivalent to minimizing $D(P,Q)$, since $H(Q)$ does not depend on $P$. For the Bregman score $S^\phi$, the induced divergence coincides with the separable Bregman divergence \citep{bregman1967relaxation} generated by the same $\phi$:
\begin{equation}\label{eq:bregman_div}
    D^\phi(P, Q) = D^\phi(q \| p) := \int_\cX \Big( \phi(q(x)) - \phi(p(x)) - \phi'(p(x))\big( q(x) - p(x) \big) \Big) \lambda (dx),
\end{equation}
whenever the integral is finite. Consequently, maximizing expected Bregman score is equivalent to minimizing $D^\phi(\cdot\|\cdot)$ to the DGP. Bregman divergences generalize many familiar divergence measures, including KL divergence, squared Euclidean distance, and Itakura–Saito distance \citep{banerjee2005clustering}. This generality is useful because it lets us work within one flexible divergence family, so we can choose a loss that matches the problem (e.g., robustness to outliers) rather than committing to a single default such as KL divergence.

The $\beta$-score in \cref{eq:beta_score} corresponds to the density power divergence \citep{basu1998robust} under the reparameterization:
\begin{equation}\label{eq:beta_div}
    D^\beta(q \| p) = \int_\cX \left( \frac{q(x)^\beta}{\beta(\beta - 1)} + \frac{p(x)^\beta}{\beta} - \frac{q(x) p(x)^{\beta - 1}}{\beta - 1} \right) \lambda(dx),
\end{equation}
with $\beta=1$ case defined by continuity, recovering KL divergence as $\beta \to 1$, and the $\beta \to 0$ limit recovering the Itakura--Saito divergence.

\subsection{Generalized Bayesian updating}\label{sec:pre:gen_bayes}
Recall from \cref{sec:pre:decision} that predictive model selection compares the posterior predictive distributions induced by the candidate models. Therefore, for each candidate model $M_k$, we first need a principled way to update its parameter distribution after observing the data; this is a necessary step on the way to forming posterior predictives for comparison. Consider the setup in \cref{sec:pre:notation}. In each candidate model $M_k$, parameter inference amounts to updating a prior $\pi_k(\theta_k)$ on $\Theta_k$ after observing $x_{1:n}$. Standard Bayesian updating yields the posterior
\begin{equation}
    \pi_{k,n}(\theta_k \mid x_{1:n}) = \frac{ \pi_k(\theta_k) \exp \bigg(\sum_{i=1}^n \log f_k(x_i; \theta_k) \bigg)}{ \int_{\Theta_k}  \pi_k(\vartheta_k) \exp \bigg(\sum_{i=1}^n \log f_k(x_i; \vartheta_k) \bigg) d\vartheta_k }.
\end{equation}
In $\cM$-open regime, the posterior asymptotically concentrates (under certain conditions) within the set of parameter values that maximize the expected log score -- equivalently, that minimize KL divergence from the true DGP to the model family \citep{berk1966limiting, Bunke1998}. 

Since our goal is to choose the model whose predictive distribution minimizes a Bregman divergence to the DGP, it is natural to use a parameter inference strategy that targets parameters optimizing the same induced Bregman score. Thus, we adopt a loss-based updating method known as generalized Bayes updating \citep{bissiri2016general}.
Let $\ell: \cX \times \Theta_k \to \bbR$ be a loss under model $M_k$. The generalized posterior up to a normalizing constant is 
\begin{equation}\label{eq:gen_post_loss}
    \pi^\ell_{k,n}(\theta_k \mid x_{1:n}) \propto \pi_k(\theta_k) \, \exp \bigg( -w \sum_{i=1}^n \ell\big(x_i, \theta_k \big) \bigg),
\end{equation}
where $w >0$ is a temperature parameter related to the information in the data relative to the information in $\pi_k(\theta_k)$. We set $w=1$ throughout to focus on the effect of the scoring rule itself. Temperature calibration is an important issue in generalized Bayes; however, recent work suggests that predictive performance can be relatively insensitive to $w$ over positive compact ranges once posterior concentration occurs \citep{mclatchie2025predictive}. Also, since we use scoring rules as a utility as in \cref{sec:pre:scoring}, we can express generalized Bayes directly in scoring-rule form \citep{giummole2019objective, pacchiardi2024generalized}. 


The updating via \cref{eq:gen_post_loss} is coherent (i.e., Bayesian additivity) in the sense that sequential updating over observations yields the same result as updating once on the joint data \citep{bissiri2016general}. \citet{miller2021asymptotic} establishes the concentration and asymptotic normality results for \cref{eq:gen_post_loss}.


\section{Robust Bayesian predictive model selection using Bregman Divergence}\label{sec:meth}
This section develops the proposed predictive model selection framework. \cref{sec:meth:breg_post} defines the Bregman posterior and the associated posterior predictive density for each fixed model $M_k$, obtained by averaging the model densities $f_k(\cdot; \theta_k)$ over $\theta_k$ under the Bregman posterior. We then show that this predictive density is the Bregman centroid, or Bayes action of those model densities. \cref{sec:meth:gelpd} defines generalized ELPD (g-ELPD) by evaluating this predictive density with the same Bregman score in a leave-one-out scheme. \cref{sec:meth:beta} specializes the framework to the $\beta$-divergence family and derives the robustness properties used for outlier and tail-sensitive model comparison.



\subsection{Bregman posterior predictive as a geometric center}\label{sec:meth:breg_post}


In \cref{sec:pre:gen_bayes} we introduced the generalized Bayes updating as a framework for parameter inference driven by a loss. Here, we take $\ell(x_i, \theta_k) = -S^\phi\big(f_k(\cdot;\theta_k), x_i\big)$, where $S^\phi$ is the Bregman scoring rule in \cref{eq:bregman_score} induced by a fixed strictly convex function $\phi$. Then \cref{eq:gen_post_loss} becomes
\begin{equation}\label{eq:breg_post}
    \pi_{k,n}^{\phi}(\theta_k \mid x_{1:n}) \propto \pi_k(\theta_k) \, \exp \bigg( \sum_{i=1}^n S^{\phi}\big(f_k(\cdot;\theta_k),x_i\big) \bigg).
\end{equation}
We refer to \cref{eq:breg_post} as the Bregman posterior induced by $S^\phi$. When $S^\phi \big(f_k(\cdot;\theta_k),x_i \big)=\log f_k(x_i; \theta_k)$ up to an additive term not depending on $\theta_k$, \cref{eq:breg_post} recovers the standard Bayesian posterior.

The corresponding Bregman posterior predictive density under model $M_k$ is:
\begin{equation}\label{eq:breg_pred}
    p_{k,n}^\phi(\tilde x \mid x_{1:n}) := \int_{\Theta_k} f_k(\tilde{x}; \theta_k) \, \pi_{k,n}^{\phi}(\theta_k \mid x_{1:n}) \, d\theta_k.
\end{equation}
The Bregman information identity in \cref{prop:bd_identity} applies directly here. In \cref{eq:breg_pred}, $p_{k,n}^\phi(\tilde x \mid x_{1:n})$ is the posterior average of $f_k(\cdot; \theta_k)$ with respect to the Bregman posterior $\pi_{k,n}^{\phi}(\theta_k \mid x_{1:n})$. Hence it is not only the usual posterior predictive mixture; it is the Bregman centroid of the model densities $f_k(\cdot; \theta_k)$ averaged with respect to $\pi_{k,n}^{\phi}(\theta_k \mid x_{1:n})$, under the same divergence used for updating and evaluation. This connection is precisely what makes the Bregman posterior predictive geometry-coherent. This is formalized in \cref{prop:breg_cent}. 

\begin{proposition}\label{prop:breg_cent}
    Fix a model $M_k$, and let $p_{k,n}^\phi(\cdot \mid x_{1:n})$ be defined as in \cref{eq:breg_pred}. Suppose the Bregman divergences below are finite. Define
    \begin{equation*}
        \mathcal I_{k,n}^\phi = \int_{\Theta_k} D^\phi \big(f_k(\cdot; \theta_k) \, \big\| \, p_{k,n}^\phi(\cdot \mid x_{1:n}) \big) \pi_{k,n}^{\phi}(\theta_k \mid x_{1:n}) \, d\theta_k
    \end{equation*}
    Then, for any density $r$,
    \begin{equation*}
        \int_{\Theta_k} D^\phi \big(f_k(\cdot; \theta_k) \, \|\,  r \big) \pi_{k,n}^{\phi}(\theta_k \mid x_{1:n}) \, d\theta_k = \mathcal I_{k,n}^\phi + D^\phi \big(  p_{k,n}^\phi(\cdot \mid x_{1:n}) \, \| \, r \big).
    \end{equation*}
    Consequently, 
    \begin{equation*}
        p_{k,n}^\phi(\cdot \mid x_{1:n}) = \argmin_r \int_{\Theta_k} D^\phi \big(  f_k(\cdot; \theta_k) \, \| \, r \big) \pi_{k,n}^{\phi}(\theta_k \mid x_{1:n}) \, d\theta_k.
    \end{equation*}
    If $\phi$ is strictly convex, the minimizer is unique as a predictive density. The same identity holds for the leave-one-out posterior predictive $p_{k,n-1}^\phi(\cdot \mid x_{-i})$.
\end{proposition}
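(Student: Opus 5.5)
The plan is the standard Bregman-information (compensation) identity, applied pointwise in $x$ and then integrated over $\cX$. Write $\bar p := p_{k,n}^\phi(\cdot \mid x_{1:n})$ and $f_\theta := f_k(\cdot;\theta_k)$, and let $\pi := \pi_{k,n}^\phi(\cdot\mid x_{1:n})$. By \cref{eq:breg_pred}, $\bar p(x) = \int f_\theta(x)\,\pi(d\theta)$ for every $x$.

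The first step is an exact algebraic decomposition of the integrand of \cref{eq:bregman_div}. Fix $x$ and write $a = f_\theta(x)$, $b=\bar p(x)$, $c=r(x)$. Then
\begin{equation*}
\big[\phi(a)-\phi(c)-\phi'(c)(a-c)\big] = \big[\phi(a)-\phi(b)-\phi'(b)(a-b)\big] + \big[\phi(b)-\phi(c)-\phi'(c)(b-c)\big] + \big(\phi'(b)-\phi'(c)\big)(a-b).
\end{equation*}
This is verified by direct expansion: the $\phi$ terms telescope, and the linear terms cancel.

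The second step integrates this identity in $\theta$ against $\pi$ and then in $x$ against $\lambda$. The first bracket produces $\mathcal I_{k,n}^\phi$. The second bracket does not depend on $\theta$, so it produces $D^\phi(\bar p\,\|\,r)$. The cross term vanishes, because for each fixed $x$ we have $\int (f_\theta(x)-\bar p(x))\,\pi(d\theta)=0$ by the definition of $\bar p$, while $\phi'(b)-\phi'(c)$ is constant in $\theta$.

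The main obstacle is justifying the interchange of the $\theta$- and $x$-integrals (Fubini/Tonelli). The plan here is to use the finiteness hypothesis. Each of the two brackets is nonnegative by convexity of $\phi$, so Tonelli applies to both of them freely. Consequently the cross term, which is the difference of the left side and these two nonnegative pieces, is integrable whenever the divergences in the statement are finite. Integrating the cross term in $\theta$ first then gives zero. If needed, one can add the mild assumption that $\int\!\int |\phi'(\bar p)-\phi'(r)|\,|f_\theta-\bar p|\,d\pi\,d\lambda<\infty$, which I would state under the paper's phrase ``suppose the Bregman divergences below are finite.''

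For the consequences, $\mathcal I_{k,n}^\phi$ does not depend on $r$, and $D^\phi(\bar p\|r)\ge 0$, so the expected divergence is minimized at $r=\bar p$. Strict convexity of $\phi$ makes the pointwise integrand of $D^\phi(\bar p\|r)$ strictly positive wherever $\bar p(x)\neq r(x)$. Hence $D^\phi(\bar p\|r)=0$ forces $r=\bar p$ $\lambda$-a.e., which gives uniqueness as a density. The leave-one-out statement follows by repeating the argument verbatim with $x_{1:n}$ replaced by $x_{-i}$, since the argument used only that the predictive is the $\pi$-mixture of the $f_\theta$ and never the specific form of the posterior.
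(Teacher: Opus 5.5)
Your proof is correct and is essentially the paper's argument. The paper writes $D^\phi(f_k(\cdot;\theta_k)\|r)-D^\phi(f_k(\cdot;\theta_k)\|p_{k,n}^\phi)$ as one integral and removes the terms linear in $f_k(x;\theta_k)-p_{k,n}^\phi(x)$ by posterior averaging; this is your three-point decomposition with the cross term $(\phi'(b)-\phi'(c))(a-b)$ isolated. Your Tonelli/Fubini justification for swapping the $\theta$- and $x$-integrals, via nonnegativity of the two Bregman brackets and the finiteness hypothesis, is left implicit in the paper, so your version is slightly more careful.
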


See \cref{supp:proof:breg_cent} for the proof.
\cref{prop:breg_cent} shows that the posterior predictive density $p_{k,n}^\phi(\cdot \mid x_{1:n})$ is not just a mixture obtained from the posterior. Under $D^\phi$, it is the Bayes action of the model densities $f_k(\cdot; \theta_k)$ when $\theta_k$ is distributed according to the Bregman posterior $\pi_{k,n}^\phi(\theta_k \mid x_{1:n})$. The quantity $\mathcal I_{k,n}^\phi$ measures the posterior average Bregman divergence from the model-specific predictive densities to their Bregman centroid. The term $D^\phi \big(  p_{k,n}^\phi(\cdot \mid x_{1:n}) \, \| \, r \big)$ is the excess posterior Bregman risk incurred by using an arbitrary density $r$ instead of the Bayes action. Thus, after updating by the Bregman score, the posterior predictive density is the decision-theoretically coherent predictive action under the corresponding Bregman divergence. 

\subsection{Generalized LOO predictive model selection (g-ELPD)}\label{sec:meth:gelpd}
Recall \cref{sec:pre:decision}, where predictive model comparison is formulated as maximizing expected utility. Given observed data $x_{1:n}$, we take this utility to be the same Bregman scoring rule $S^\phi$ that is used to define the Bregman posterior in \cref{sec:meth:breg_post}, so that parameter inference and predictive model evaluation are conducted by the same induced divergence. Define the expected predictive utility of model $M_k$ by
\begin{equation}\label{eq:exp_util}
    \bar u_{n,k} := \E\Big( S^{\phi}\bigl( p_{k,n}^{\phi}(\cdot \mid x_{1:n}),\,\tilde X \big) \Big),
\end{equation}
where $\tilde{X} \sim G$ is independent of $X_{1:n}$ and the expectation is taken over $\tilde X$. As reviewed in \cref{sec:pre:scoring}, since $S^\phi$ is strictly proper, maximizing $\bar u_{n,k}$ is equivalent to minimizing the induced divergence from $g$:
\begin{equation}
    \argmax_{k \in \{1,\ldots, K\}} \bar u_{n,k} = \argmin_{k \in \{1,\ldots, K\}} D^\phi\big( g \, \| \, p_{k,n}^{\phi}(\cdot \mid x_{1:n}) \big).
\end{equation}

Since $g$ is unknown, \cref{eq:exp_util} cannot be directly computed. Hence, we estimate it via leave-one-out cross-validation (LOO-CV). For each observation $i \in \{1,\ldots,n\}$, let $x_{-i} = (x_1,\ldots,x_{i-1},x_{i+1},\ldots,x_n)$ denote the leave-one-out subsample. Define the LOO posterior predictive density as:
\begin{equation}\label{eq:breg_pred_loo}
    p_{k,n-1}^\phi(x_i \mid x_{-i}) := \int_{\Theta_k} f_k(x_i; \theta_k)\; \pi_{k,n-1}^\phi(\theta_k \mid x_{-i}) \, d\theta_k,
\end{equation}
where $\pi_{k,n-1}^\phi(\theta_k \mid x_{-i})$ is the Bregman posterior computed from $x_{-i}$ via \cref{eq:breg_post}. The empirical LOO utility for model $M_k$ is then 
\begin{align}\label{eq:loo_utility}
    \hat{u}^{\mathrm{loo}}_{n,k} &:= \frac{1}{n} \sum_{i=1}^n S^\phi\big( p_{k,n-1}^\phi(\cdot \mid x_{-i}),  x_i \big).
\end{align}
We refer to \cref{eq:loo_utility} as the generalized ELPD (g-ELPD)\footnote{We define g-ELPD on a per-observation scale, i.e., as $\tfrac{1}{n}\sum_{i=1}^n$ of pointwise predictive scores; multiplying by $n$ yields the conventional summed ELPD used in \citet{vehtari2017practical}, without affecting model comparisons.} associated with the score $S^\phi$. When $\phi = u \log u$, the associated Bregman score is the log score (i.e., $S^\phi(p,x) = \log p(x)$), so \cref{eq:loo_utility} reduces to the usual LOO estimate of ELPD; for a general Bregman score, it estimates the corresponding expected out-of-sample predictive score. The resulting selection rule chooses the model with the highest estimated utility:
\begin{equation}\label{eq:select_rule}
    \widehat M_n := \argmax_{k \in \{1,\ldots, K\}} \hat{u}^{\mathrm{loo}}_{n,k}.
\end{equation}

Applying \cref{prop:breg_cent} to the leave-one-out posterior $\pi_{k,n-1}^\phi(\theta_k \mid x_{-i})$ shows that $p_{k,n-1}^\phi(\cdot \mid x_{-i})$ is the Bregman Bayes action for representing the model densities $f_k(\cdot; \theta_k)$ under $\pi^\phi_{k,n-1}(\theta_k\mid x_{-i})$. Therefore, each term in g-ELPD evaluates the Bregman-centroid predictive density for the corresponding held-out observation. 

This centering property is not a generic feature of arbitrary divergences. It is a characteristic feature of Bregman geometry: weighted averages are right Bregman centroids, and the associated Bregman information can be written as a divergence from the centroid \citep{banerjee2005clustering, Frigyik2008, Chodrow2025}. Thus, using a Bregman score allows us to retain the ordinary Bayesian posterior predictive mixture while giving it a decision-theoretic interpretation as the Bayes action under the same divergence used for updating and predictive evaluation. 

\subsection{Specialization to $\beta$-divergence}\label{sec:meth:beta}
A practically important and widely used choice is the $\beta$-divergence family in \cref{eq:beta_div}. When $\phi(p)=\tfrac{p^\beta}{\beta(\beta-1)}$, the associated Bregman scoring rule yields an explicit pointwise form of \cref{eq:loo_utility}: 
\begin{equation}\label{eq:loo_beta}
    \hat{u}^{\mathrm{loo}}_{n,k} = \frac{1}{n} \sum_{i=1}^n \bigg(\frac{p_{k,n-1}^\phi(x_i \mid x_{-i})^{\beta-1}}{\beta -1} - \frac{1}{\beta} \int_\cX p_{k,n-1}^\phi(t \mid x_{-i})^\beta \lambda(dt)  \bigg).
\end{equation}
The log-score case is recovered in the limit as $\beta \to 1$. We assume that the relevant $\beta$-power integrals are finite so that the score is well defined. The first term in \cref{eq:loo_beta} replaces the log predictive density by a power transformation, while the second integral term ensures the score is strictly proper. For $\beta>1$, observations assigned extremely small predictive density have reduced pointwise contribution relative to the log score. This weakens the influence of isolated outlying observations and tail mismatch in predictive comparison. \cref{prop:beta_robust} makes this robustness mechanism precise for pairwise score comparisons.

\begin{proposition}\label{prop:beta_robust}
    Let $\beta > 1$, and let $p$ and $q$ be densities with respect to $\lambda$. Suppose that $p$ and $q$ are uniformly bounded, i.e., $0 \le p(x), q(x) \le B$ for all $x \in \cX$, for some $B < \infty$, and that $\int_\cX p(t)^\beta \lambda(dt) < \infty$, $\int_\cX q(t)^\beta \lambda(dt) < \infty$. Define $\Delta^\beta_{p,q}(x) = S^\beta(p,x) - S^\beta(q, x)$. Then, for every $x \in \cX$,
    \begin{equation*}
        \lvert \Delta_{p,q}^\beta(x) \rvert \le C^\beta(p,q) ,
    \end{equation*}
    where 
    \begin{equation*}
        C^\beta(p,q) = \frac{2B^{\beta-1}}{\beta-1} + \frac{1}{\beta} \int_\cX \lvert p(t)^\beta - q(t)^\beta \rvert \lambda(dt).
    \end{equation*}
    In particular, the pointwise contribution of any single observation to the $\beta$-score comparison between $p$ and $q$ is uniformly bounded. By contrast, the log-score comparison $\log p(x) - \log q(x)$ is generally unbounded when either density can be arbitrarily close to zero. 
\end{proposition}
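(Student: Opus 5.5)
The plan is to write out $\Delta^\beta_{p,q}(x)$ explicitly from \cref{eq:beta_score}, split it into a pointwise part and an integral part, and bound each separately with the triangle inequality. Substituting the $\beta$-score gives
\begin{equation*}
    \Delta^\beta_{p,q}(x) = \frac{p(x)^{\beta-1} - q(x)^{\beta-1}}{\beta-1} \;-\; \frac{1}{\beta}\int_\cX \big( p(t)^\beta - q(t)^\beta \big)\,\lambda(dt).
\end{equation*}
The finiteness assumptions $\int p^\beta\,d\lambda<\infty$ and $\int q^\beta\,d\lambda<\infty$ make both scores well defined and real-valued. They also make the difference of integrals equal the integral of the difference, with $\int |p^\beta - q^\beta|\,d\lambda \le \int p^\beta\,d\lambda + \int q^\beta\,d\lambda < \infty$. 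So $C^\beta(p,q)$ is finite.

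For the pointwise term, I will use $\beta>1$. Then $u\mapsto u^{\beta-1}$ is nonnegative and nondecreasing on $[0,\infty)$, so $0\le p(x)^{\beta-1},\,q(x)^{\beta-1}\le B^{\beta-1}$ for every $x$. Hence
\begin{equation*}
    \big|p(x)^{\beta-1}-q(x)^{\beta-1}\big| \le p(x)^{\beta-1}+q(x)^{\beta-1}\le 2B^{\beta-1},
\end{equation*}
and dividing by $\beta-1>0$ gives the first term of $C^\beta(p,q)$. A sharper bound $B^{\beta-1}$ holds, since both values lie in $[0,B^{\beta-1}]$, but the stated constant is what the proposition needs. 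One edge case to note: when $p(x)=0$ and $1<\beta$, we have $0^{\beta-1}=0$, so nothing degenerates. This is exactly where $\beta>1$ is used.

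For the integral term, $\big|\frac{1}{\beta}\int(p^\beta-q^\beta)\,d\lambda\big|\le\frac{1}{\beta}\int|p^\beta-q^\beta|\,d\lambda$. Adding the two bounds gives $|\Delta^\beta_{p,q}(x)|\le C^\beta(p,q)$ uniformly in $x$. The bound depends only on $B$, $\beta$, $p$ and $q$, not on $x$, which gives the claimed uniform bound on single-observation contributions.

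For the contrast with the log score, I will give a simple witness. If $q(x_0)>0$ and $p(x)\to 0$ along a sequence of points with $q$ bounded below there, then $\log p(x)-\log q(x)\to-\infty$. A concrete case is two normal densities with different variances, where the log ratio is an unbounded quadratic in $x$. There is no real obstacle in this argument. The only care needed is to justify measurability and finiteness of $\int|p^\beta-q^\beta|\,d\lambda$, and to keep the $\beta>1$ monotonicity step explicit.
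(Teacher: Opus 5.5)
Your proof is correct and matches the paper's argument: expand $\Delta^\beta_{p,q}(x)$, bound the pointwise term using $\lvert p(x)^{\beta-1}-q(x)^{\beta-1}\rvert \le p(x)^{\beta-1}+q(x)^{\beta-1}\le 2B^{\beta-1}$, and bound the integral term by the triangle inequality. Your extra remarks are also correct and go beyond the paper's proof: the sharper constant $B^{\beta-1}/(\beta-1)$, the finiteness of $\int_\cX \lvert p(t)^\beta-q(t)^\beta\rvert\,\lambda(dt)$, and the Gaussian example of an unbounded log-ratio.
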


See \cref{supp:proof:beta_robust} for the proof. \cref{prop:beta_robust} gives a direct robustness mechanism for g-ELPD comparisons using $\beta$-divergence. A single low-density observation can dominate a log-score difference, whereas its contribution to a $\beta$-score comparison is bounded. Thus, g-ELPD using $\beta$-score reduces the extent to which pairwise model-comparison differences can be driven by isolated tail observations. A direct contamination bound follows from the same argument and is given as \cref{cor:contam_bound}. 

\begin{corollary}\label{cor:contam_bound}
    Let $g$ and $h$ be densities with respect to $\lambda$, and let $g_\epsilon = (1-\epsilon)g + \epsilon h$ where $0 \le \epsilon \le 1$. Under the conditions of \cref{prop:beta_robust},
    \begin{equation}
        \big\lvert \E_{g_\epsilon}\big(\Delta_{p,q}^\beta(X) \big) - \E_{g}\big(\Delta_{p,q}^\beta(X) \big) \big\rvert \le 2\epsilon C^\beta(p,q).
    \end{equation}
    Consequently, if $\E_g\big(\Delta_{p,q}^\beta(X)\big) > 2\epsilon C^\beta(p,q)$, then  $\E_{g_\epsilon}\big(\Delta_{p,q}^\beta(X)\big)>0$.
\end{corollary}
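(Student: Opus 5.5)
The plan is to use linearity of expectation in the mixing distribution, followed by the uniform pointwise bound of \cref{prop:beta_robust}. Since $g_\epsilon = (1-\epsilon)g + \epsilon h$, the expectation under $g_\epsilon$ of any function that is integrable under both $g$ and $h$ splits as
\begin{equation*}
    \E_{g_\epsilon}\big(\Delta_{p,q}^\beta(X)\big) = (1-\epsilon)\E_{g}\big(\Delta_{p,q}^\beta(X)\big) + \epsilon\,\E_{h}\big(\Delta_{p,q}^\beta(X)\big).
\end{equation*}
Subtracting $\E_g(\Delta_{p,q}^\beta(X))$ from both sides gives the difference of interest:
\begin{equation*}
    \E_{g_\epsilon}\big(\Delta_{p,q}^\beta(X)\big) - \E_{g}\big(\Delta_{p,q}^\beta(X)\big) = \epsilon\Big(\E_{h}\big(\Delta_{p,q}^\beta(X)\big) - \E_{g}\big(\Delta_{p,q}^\beta(X)\big)\Big).
\end{equation*}

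The first step is to check integrability, which is the only point needing care. Under the hypotheses of \cref{prop:beta_robust}, $\Delta_{p,q}^\beta$ is measurable; indeed it equals $(p^{\beta-1}-q^{\beta-1})/(\beta-1)$ plus a finite constant. It is also bounded in absolute value by the finite constant $C^\beta(p,q)$ for every $x$. It is therefore integrable against any probability density, in particular against $g$, $h$ and $g_\epsilon$, so the splitting above is legitimate.

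Next, applying $\lvert \Delta_{p,q}^\beta(x)\rvert \le C^\beta(p,q)$ pointwise gives $\lvert \E_h(\Delta_{p,q}^\beta(X))\rvert \le C^\beta(p,q)$ and likewise under $g$. The triangle inequality then bounds the right-hand side of the display by $\epsilon\,(C^\beta + C^\beta) = 2\epsilon C^\beta(p,q)$, which is the claimed bound. (A sharper constant is possible, but it is not needed.)

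For the consequence, I would write
\begin{equation*}
    \E_{g_\epsilon}\big(\Delta_{p,q}^\beta(X)\big) \ge \E_g\big(\Delta_{p,q}^\beta(X)\big) - 2\epsilon C^\beta(p,q),
\end{equation*}
whose right-hand side is strictly positive under the stated assumption. No step here is a real obstacle. The only thing to verify explicitly is that the bound of \cref{prop:beta_robust} holds for every $x$, with a constant not depending on the data distribution. That is what lets the same constant control both the $g$- and $h$-expectations, even when $h$ places its mass in regions where $p$ or $q$ is nearly zero.
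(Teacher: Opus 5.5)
Your proposal is correct and follows essentially the same route as the paper: split $\E_{g_\epsilon}$ by linearity over the mixture, write the difference as $\epsilon\bigl(\E_h(\Delta^\beta_{p,q}(X)) - \E_g(\Delta^\beta_{p,q}(X))\bigr)$, and bound each expectation by $C^\beta(p,q)$ using the pointwise bound of \cref{prop:beta_robust}. You also check integrability explicitly, which the paper leaves implicit.
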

See \cref{supp:proof:contam_bound} for the proof. In words, the bound says that a small amount of contamination can only have a limited effect on a pairwise $\beta$-gELPD comparison. Thus, a positive pairwise $\beta$-score margin is preserved under $\epsilon$-contamination whenever the clean margin is larger than $2\epsilon C^\beta(p,q)$. In the LOO model comparison setting, this result applies conditionally to pairs such as $p = p_{k,n-1}^\beta (\cdot \mid x_{-i})$, $q = p_{l,n-1}^\beta(\cdot \mid x_{-i})$.  

\section{Asymptotic Theory}\label{sec:theory}
In this section, we study the large-sample behavior of generalized Bayesian updating based on the Bregman score $S^\phi$ and the resulting predictive model comparison procedure in the $\mathcal M$-open setting. General posterior concentration results for generalized Bayes posteriors are already well-established under regularity conditions (see, e.g., \citealt{miller2021asymptotic, martin2022direct}). Also, there is substantial work on robust Bayes parameter estimation using $\beta$-divergence, including asymptotics for estimators \citep{ghosh2016robust}. These results primarily concern within-model learning of parameters. Here, we connect such generalized updating to predictive model comparison across competing misspecified models. 

In \cref{sec:theory:model_selection}, we show that Bregman-score updating concentrates around the parameter value that minimizes the induced Bregman divergence to the data-generating process. We then show that the corresponding posterior predictive density converges to the pseudo-true predictive density. Finally, we prove that g-ELPD LOO evaluated using the same score consistently selects the model with the largest limiting Bregman predictive utility. 
In \cref{sec:theory:incoherent}, we study score-mismatched updating and evaluation and characterize the hybrid limiting target that results when the two scores differ.

Recall the notations in \cref{sec:pre}. We make the randomness in the sample explicit in order to state the limits and to distinguish random quantities from their deterministic targets. We first state the regularity conditions used in the asymptotic results. \cref{condition1} is a bounded-score regularity condition. It covers, for example, the $\beta$-score with $\beta>1$ when the model densities are bounded. The log score is recovered as a limiting case as $\beta \to 1$, but it is not covered by the bounded-score arguments below because $\phi'(u) = 1 + \log u$ is unbounded as $u \downarrow 0$.

\begin{condition}\label{condition1}
    Fix a model $M$ with density $f(\cdot; \theta)$ where $\theta \in \Theta \subset \bbR^{d}$. Assume $\Theta$ is compact. Let $X_1, X_2, \ldots$ be i.i.d. from the DGP $g$. Define
    \[
        h(\theta) = \E \big( S^{\phi}\big(f(\cdot; \theta), X\big)\big), \quad h_n(\theta) = \frac{1}{n}\sum_{i=1}^n S^{\phi}\big(f(\cdot; \theta), X_i \big),
    \]
    where $S^\phi$ is the Bregman scoring rule in \cref{def:bregman_score}. Assume:
    
    \item[(a)](unique pseudo-true parameter and separation) There exists a unique maximizer $\theta^* \in \mathrm{int}(\Theta)$ such that for every $\epsilon >0$,
    $$
        \sup_{\|\theta - \theta^*\| \ge \epsilon} h(\theta) < h(\theta^*).
    $$
    
    \item[(b)](Bounded model densities) There exists $C_f >0$ such that $0 \le f(x; \theta) \le C_f$ for all $x \in \cX$, $\theta \in \Theta$.
    
    \item[(c)](Regularity of $\phi$ for equicontinuity and finiteness) $\phi$ is strictly convex and continuously differentiable on $[0, C_f]$ with $\phi(0) =0$. Moreover, the integrand appearing in the normalization term of the Bregman score $m(u) := \phi'(u)u - \phi(u)$ is Lipschitz on $[0,C_f]$.
    
    
    \item[(d)](Uniform continuity in parameter) For any $\epsilon >0$, there exists $\delta >0$ such that if $\|\theta-\theta'\| < \delta$ then 
    \[
        \sup_{x \in \cX} |f(x; \theta) - f(x; \theta')| < \epsilon.
    \]
    
    \item[(e)](Prior positivity near $\theta^*$) The prior $\pi$ defined on $\Theta$ assigns strictly positive mass to every open neighborhood of the pseudo-true parameter $\theta^*$ defined in (a).
\end{condition}
We note that the compactness assumption is used only as a sufficient regularity condition for the concentration argument; the definition and computation of g-ELPD do not depend on this assumption.

Under \cref{condition1}, the \emph{pseudo-true parameter} for model $M_k$ is
\begin{equation}\label{eq:pseudo_true}
    \theta_k^* := \argmax_{\theta_k \in \Theta_k} \E \Big( S^\phi\big(f_k(\cdot; \theta_k), X\big)\Big) = \argmin_{\theta_k \in \Theta_k} D^\phi \big(g \| f_k(\cdot; \theta_k)\big).
\end{equation}
Define the limiting (population) predictive utility of model $M_k$ by 
\begin{equation}\label{eq:lim_ut}
    \bar{u}_k := \E \Big( S^\phi \big(f_k(\cdot; \theta_k^*), \tilde{X} \big) \Big).
\end{equation}
Thus, selecting the model with the largest $\bar{u}_k$ is equivalent to selecting the model whose best approximating density is closest to $g$ in Bregman divergence:
\begin{equation}\label{eq:ut_bd}
    \argmax_{k \in \{1,\ldots,K\}} \bar{u}_k = \argmin_{k \in \{1,\ldots,K\}} D^\phi\big(g \| f_k(\cdot; \theta_k^*)\big).
\end{equation}
We call any minimizer in \cref{eq:ut_bd} the optimal model under the scoring rule $S^\phi$.


\subsection{Model selection consistency}\label{sec:theory:model_selection}
We now state the asymptotic guarantees. For simplicity, we drop the model index $k$ in the statements below; the results hold for each $k$. First, \cref{thm:posterior_consistency} shows that the Bregman posterior puts asymptotically all its mass in any neighborhood of $\theta^*$. 

\begin{theorem}\label{thm:posterior_consistency}(Bregman Posterior Concentration)
    Assume \cref{condition1} (a)--(e). Then, for every $\eta >0$,
    \[
        \pi^\phi_{n} \big( \{\theta \in \Theta: \|\theta - \theta^* \| \ge \eta \} \mid X_{1:n} \big) \xrightarrow[n \to \infty]{\mathrm{a.s.}} 0.
    \]
\end{theorem}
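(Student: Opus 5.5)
The plan is the standard uniform-law-of-large-numbers plus separation argument for Gibbs posteriors. First, I will establish that $h_n \to h$ uniformly on $\Theta$ almost surely. Write
\[
S^\phi\big(f(\cdot;\theta),x\big) = \phi'\big(f(x;\theta)\big) - \int_\cX m\big(f(t;\theta)\big)\lambda(dt).
\]
The second term is deterministic in $\theta$, so it cancels in $h_n - h$. Hence
\[
h_n(\theta)-h(\theta) = \frac1n\sum_{i=1}^n \phi'\big(f(X_i;\theta)\big) - \E\,\phi'\big(f(X;\theta)\big).
\]
By \cref{condition1}(b),(c), $\phi'$ is continuous on the compact $[0,C_f]$, hence bounded and uniformly continuous there. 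Combined with (d), the family $\{x\mapsto \phi'(f(x;\theta)):\theta\in\Theta\}$ is then uniformly bounded and equicontinuous in $\theta$, uniformly in $x$.

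Compactness of $\Theta$ gives a finite $\delta$-net $\theta_1,\dots,\theta_J$. At each net point the strong law applies, and the oscillation between net points is controlled by the modulus of continuity. This is a bracketing argument, and it yields $\sup_\Theta|h_n-h|\to0$ a.s.

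I also need $h$ to be finite and continuous. For this I check that $\theta\mapsto\int m(f(t;\theta))\,d\lambda$ is finite and continuous. Since $m(0)=-\phi(0)=0$ and $m$ is Lipschitz with constant $L$, we have $|m(f)|\le L f$, so the integral is at most $L$ because $f$ integrates to one.

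Continuity of this integral is where I expect the main obstacle. Condition (d) gives sup-norm continuity of $f(\cdot;\theta)$, while Lipschitzness gives
\[
\int \big|m(f_\theta)-m(f_{\theta'})\big| \le L\int |f_\theta-f_{\theta'}|.
\]
Sup-norm closeness does not give $L^1$ closeness on an infinite-measure space. I would handle this with Scheffé's lemma: pointwise convergence $f_{\theta'}\to f_\theta$ together with $\int f=1$ implies $L^1$ convergence. Since this holds along every sequence $\theta'\to\theta$, the integral term is continuous. In any case this term does not enter $h_n-h$, so it only matters for continuity and finiteness of $h$, and strictly only for the neighborhood bound below.

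Next comes the posterior ratio bound. Fix $\eta>0$ and let $A_\eta=\{\|\theta-\theta^*\|\ge\eta\}$. By \cref{condition1}(a), the gap $\gamma:=h(\theta^*)-\sup_{A_\eta}h>0$. By continuity of $h$, there is a neighborhood $U$ of $\theta^*$ on which $h>h(\theta^*)-\gamma/4$, and $\pi(U)>0$ by (e). On the a.s. event where eventually $\sup|h_n-h|<\gamma/4$, we have $h_n \le h(\theta^*)-3\gamma/4$ on $A_\eta$ and $h_n \ge h(\theta^*)-\gamma/2$ on $U$. Therefore
\[
\pi^\phi_n(A_\eta\mid X_{1:n}) \le \frac{\int_{A_\eta}\pi\,e^{n h_n}}{\int_U \pi\, e^{n h_n}} \le \frac{e^{-n\gamma/4}}{\pi(U)} \to 0,
\]
which proves the claim.
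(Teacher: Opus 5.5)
Your proof is correct. It shares its two technical ingredients with the paper: the finite-net uniform strong law for $h_n$, and Scheff\'e's lemma to obtain continuity of the normalization term $C(\theta)=\int m(f(t;\theta))\,\lambda(dt)$.

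The difference is in the last step. The paper verifies the hypotheses of Theorem 3 of Miller (2021) and lets that theorem deliver concentration. Those hypotheses are:
\begin{itemize}
\item pointwise a.s.\ convergence of $h_n$;
\item equicontinuity of $\{h_n\}$ on the compact $\Theta$;
\item the separation in Condition (a);
\item prior positivity.
\end{itemize}
You instead prove concentration directly through the Laplace-type ratio bound $\pi_n^\phi(A_\eta\mid X_{1:n})\le e^{-n\gamma/4}/\pi(U)$. Your version is self-contained, and it gives an explicit exponential rate on the almost-sure event where $\sup_\Theta|h_n-h|<\gamma/4$. The paper's route is shorter on the page once its lemmas are in place, but it hides the mechanism inside a cited result.

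Your observation that $C(\theta)$ cancels in $h_n-h$ also saves some work. Continuity of $C$ is then needed only to make $h$ continuous at $\theta^*$, which is what produces $U$ with $\pi(U)>0$. The paper, by contrast, carries $C$ through its equicontinuity lemma for the full $h_n$.

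The one point you leave implicit is that the normalizer $Z_n$ is finite and positive. This follows immediately from the uniform bounds on $\phi'$ and $C$ that you already establish.
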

All proofs are collected in the Supplementary Material \cref{supp:proofs}. Given that the posterior concentrates near $\theta^*$, the posterior predictive density becomes uniformly close to the model density at $\theta^*$. \cref{thm:ppc} connects the \cref{thm:posterior_consistency} to the predictive quantities used in model comparison. 

\begin{theorem}\label{thm:ppc}(Posterior predictive consistency)
    Assume Condition \ref{condition1} (a)--(e). Let $\pi_n^\phi(d\theta \mid X_{1:n})$ be the Bregman posterior defined in \cref{eq:breg_post}, and define the posterior predictive density by
    \[
        p_n^\phi(x \mid X_{1:n}) := \int_\Theta f(x; \theta) \pi_n^\phi(d\theta \mid X_{1:n}).
    \]
    Then,
    \[
        \sup_{x \in \cX} \big| p_n^\phi(x \mid X_{1:n}) - f(x; \theta^*) \big| \xrightarrow[n \to \infty]{\mathrm{a.s.}} 0.
    \]
\end{theorem}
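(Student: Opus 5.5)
The argument is a direct consequence of \cref{thm:posterior_consistency} combined with the uniform continuity in \cref{condition1}(d) and the uniform bound in \cref{condition1}(b).

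Fix a sample path on the probability-one event where the conclusion of \cref{thm:posterior_consistency} holds for every rational $\eta>0$. This event has probability one as a countable intersection, and monotonicity in $\eta$ extends the conclusion to all $\eta>0$. All remaining steps are deterministic on this path.

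Since $\pi_n^\phi(\cdot \mid X_{1:n})$ is a probability measure, for each $x$ we can write
\[
p_n^\phi(x \mid X_{1:n}) - f(x;\theta^*) = \int_\Theta \big(f(x;\theta)-f(x;\theta^*)\big)\,\pi_n^\phi(d\theta \mid X_{1:n}).
\]
Given $\epsilon>0$, choose $\delta>0$ from \cref{condition1}(d) so that $\|\theta-\theta^*\|<\delta$ implies $\sup_x|f(x;\theta)-f(x;\theta^*)|<\epsilon$. Split $\Theta$ into $B_\delta=\{\|\theta-\theta^*\|<\delta\}$ and its complement $B_\delta^c$.

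On $B_\delta$ the integrand is bounded by $\epsilon$ uniformly in $x$, so this piece contributes at most $\epsilon$. On $B_\delta^c$, \cref{condition1}(b) gives $|f(x;\theta)-f(x;\theta^*)|\le C_f$ for all $x$ and $\theta$, since both densities lie in $[0,C_f]$. This piece therefore contributes at most $C_f\,\pi_n^\phi(B_\delta^c \mid X_{1:n})$.

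Taking the supremum over $x$ yields
\[
\sup_{x}\big|p_n^\phi(x\mid X_{1:n})-f(x;\theta^*)\big| \le \epsilon + C_f\,\pi_n^\phi(B_\delta^c\mid X_{1:n}).
\]
By \cref{thm:posterior_consistency} with $\eta=\delta$, the second term tends to $0$ on the fixed path. Hence the $\limsup$ of the left side is at most $\epsilon$. Since $\epsilon$ is arbitrary, the limit is $0$ almost surely.

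Two technical points need care. First, the integral defining $p_n^\phi$ must be well defined, which requires $\theta\mapsto f(x;\theta)$ to be measurable. This follows from the continuity implied by \cref{condition1}(d). Second, the null set must be chosen independently of $\epsilon$, which is why the countable-intersection step is done at the outset.

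The real substance of the argument sits in \cref{thm:posterior_consistency}, which we take as given. Within this proof there is no serious obstacle. The only point to verify is that the uniformity in $x$ genuinely comes from the sup-norm form of \cref{condition1}(d) together with the global bound $C_f$, rather than from any pointwise convergence argument.
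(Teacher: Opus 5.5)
Your proposal is correct and follows essentially the same route as the paper's proof. Both split the posterior integral at $\|\theta-\theta^*\|<\delta$ using the sup-norm continuity in \cref{condition1}(d), bound the remainder by the uniform density bound times the posterior mass of the complement, and let \cref{thm:posterior_consistency} send that term to zero before letting $\epsilon\to 0$; your sharper constant $C_f$ (instead of $2C_f$) and your explicit choice of a single probability-one event are minor refinements of details the paper leaves implicit.
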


Then, for each fixed model $M_k$ the leave-one-out g-ELPD estimator $\hat{u}^{\mathrm{loo}}_{n,k}$ in \cref{eq:loo_utility} is a consistent estimator of its population mean $\bar{u}_k$ (see \cref{supp:proof:loo_consistency}).  Finally, we state our main theorem of model selection consistency.

\begin{theorem}\label{sec:theory:consistency}(Model selection consistency)
    Assume that \cref{condition1} holds for each candidate model $M_k$, $k=1,\ldots, K$. Suppose that the limiting utilities $\{\bar u_k\}_{k=1}^K$ have a unique maximizer $M^* \in \cM$. Then, as $n \to \infty$,
    \[
         \widehat{M}_n := \argmax_{k \in \{1,\ldots,K\}} \hat{u}^{\mathrm{loo}}_{n,k} \, \xrightarrow[n \to \infty]{\mathrm{p}} \, \argmax_{k \in \{1,\ldots,K\}} \bar{u}_k := M^*. 
    \]
\end{theorem}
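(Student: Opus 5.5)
The plan is to reduce model selection consistency to per-model consistency of the LOO utilities, and then use the fact that an argmax over a finite set is stable under uniform perturbations smaller than half the gap. Let $\delta := \bar u_{M^*} - \max_{k \neq M^*} \bar u_k > 0$, which is positive because the maximizer is unique. On the event
\[
E_n := \Big\{ \max_{1\le k\le K} \big| \hat u^{\mathrm{loo}}_{n,k} - \bar u_k \big| < \delta/2 \Big\},
\]
every $k \neq M^*$ satisfies
\[
\hat u^{\mathrm{loo}}_{n,k} < \bar u_k + \delta/2 \le \bar u_{M^*} - \delta/2 < \hat u^{\mathrm{loo}}_{n,M^*},
\]
so $\widehat M_n = M^*$. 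Hence $P(\widehat M_n \neq M^*) \le \sum_{k=1}^K P\big(|\hat u^{\mathrm{loo}}_{n,k} - \bar u_k| \ge \delta/2\big)$. Since $K$ is finite, it suffices to show $\hat u^{\mathrm{loo}}_{n,k} \to \bar u_k$ in probability (in fact almost surely) for each fixed $k$.

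For a fixed model, drop the index $k$ and decompose
\[
\hat u^{\mathrm{loo}}_n - \bar u = \frac1n\sum_{i=1}^n \Big[ S^\phi\big(p^\phi_{n-1}(\cdot\mid X_{-i}), X_i\big) - S^\phi\big(f(\cdot;\theta^*), X_i\big)\Big] + \Big[\frac1n\sum_{i=1}^n S^\phi\big(f(\cdot;\theta^*),X_i\big) - \bar u\Big].
\]
The second bracket tends to zero almost surely by the strong law of large numbers. The summands are bounded, since $f(\cdot;\theta^*)\le C_f$ and $\phi'$ is continuous on $[0,C_f]$, and the normalizing integral is a finite constant. For the first bracket, use the Bregman score structure. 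Write
\[
|S^\phi(p,x) - S^\phi(q,x)| \le |\phi'(p(x)) - \phi'(q(x))| + \int |m(p) - m(q)|\,d\lambda .
\]
The first term is controlled by the uniform continuity of $\phi'$ on $[0,C_f]$ together with $\sup_x|p-q|$; note that every posterior predictive is bounded by $C_f$ because it is a mixture. The second term is at most $L_m \int |p-q|\,d\lambda$ by the Lipschitz property in \cref{condition1}(c).

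The main obstacle is showing that the perturbation error goes to zero uniformly over the $n$ leave-one-out posteriors, not just for the full posterior as in \cref{thm:ppc}. I would redo the concentration argument with a uniform bound. Let $h_n^{(-i)}$ be the empirical score on $X_{-i}$. By boundedness of the score, $\sup_\theta |h_n^{(-i)}(\theta) - h_n(\theta)| = O(1/n)$ uniformly in $i$. The proof of \cref{thm:posterior_consistency} rests on uniform convergence $\sup_\theta|h_n - h|\to 0$ together with separation and prior positivity. It therefore yields
\[
\max_i \pi^\phi_{n-1}\big(\|\theta-\theta^*\|\ge\eta \mid X_{-i}\big) \to 0 \quad \text{a.s.},
\]
and hence, by the argument of \cref{thm:ppc} using \cref{condition1}(d),
\[
\max_i \sup_x \big|p^\phi_{n-1}(x\mid X_{-i}) - f(x;\theta^*)\big| \to 0 .
\]

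The $L^1$ term needs separate care. When $\lambda$ is infinite, sup-norm convergence alone does not give $\int|p-q|\,d\lambda\to0$. I would handle this via Scheffé's lemma: both functions are probability densities and converge pointwise, so $L^1$ convergence follows. To make this uniform in $i$, I would bound
\[
\int \big|p^\phi_{n-1}(\cdot\mid X_{-i}) - f(\cdot;\theta^*)\big|\,d\lambda \le \int \big\|f(\cdot;\theta) - f(\cdot;\theta^*)\big\|_{L^1}\,\pi^\phi_{n-1}(d\theta\mid X_{-i}).
\]
This tends to zero uniformly in $i$ if $\theta\mapsto f(\cdot;\theta)$ is $L^1$-continuous at $\theta^*$, which again holds by Scheffé. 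Because the integrand is bounded by $2$, concentration then makes the bound uniformly small.

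Combining these pieces, the first bracket is bounded by a maximum over $i$ that tends to zero almost surely. This gives per-model consistency, and the finite-union bound above completes the proof.
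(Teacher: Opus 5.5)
Your proposal is correct. Its outer layer, the union bound over the $K$ models plus the half-gap argument, is the same as the paper's. The difference is in the per-model step $\hat u^{\mathrm{loo}}_{n,k}\to\bar u_k$.

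\textbf{The paper's route.} The paper never controls the $n$ leave-one-out posteriors simultaneously. It writes $\hat u^{\mathrm{loo}}_{n,k}$ as a conditional expectation over a uniformly random held-out index $L_n$, and uses Jensen and Markov to reduce everything to $\E|\Delta_n|\to 0$. It then invokes exchangeability (\cref{supp:proof:lemma5}) to replace $(X_{-L_n},X_{L_n})$ by $(X_{1:n-1},X_n)$. After that only the single sequence $p^\phi_{n-1}(\cdot\mid X_{1:n-1})$ matters, so \cref{thm:ppc} applies verbatim and dominated convergence finishes. This sidesteps what you identified as the main obstacle, but it only gives convergence in probability.

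\textbf{Your route.} You handle the uniformity over $i$ directly. This buys a stronger conclusion: $\hat u^{\mathrm{loo}}_{n,k}\to\bar u_k$ almost surely, hence $\widehat M_n=M^*$ eventually almost surely. The cost is that you cannot cite \cref{thm:posterior_consistency} as stated, since the paper proves it through Miller's Theorem 3 as a black box. You would therefore have to write out the direct exponential-ratio concentration bound for the perturbed functions $h_n^{(-i)}$. That argument works with the ingredients you list: uniform SLLN, separation, and prior positivity.

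\textbf{A cheaper way to get your uniform statement.} By the case-deletion identity, $\pi^\phi_{n-1}(d\theta\mid X_{-i})$ has density proportional to $\exp\bigl(-S^\phi(f(\cdot;\theta),X_i)\bigr)$ with respect to the full posterior. By the uniform score bound this factor lies in $[e^{-B},e^{B}]$. Hence
$\pi^\phi_{n-1}(A\mid X_{-i})\le e^{2B}\,\pi^\phi_n(A\mid X_{1:n})$ for every $i$ and every Borel set $A$. The same bound holds for posterior integrals of nonnegative functions such as $\theta\mapsto\|f(\cdot;\theta)-f(\cdot;\theta^*)\|_{L^1}$. So the uniform-in-$i$ concentration, and through your mixture bound the uniform sup-norm and $L^1$ controls, follow immediately from full-data concentration. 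With this shortcut your route is about as economical as the paper's, while still giving the almost-sure conclusion.
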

That is, the selected model $\widehat{M}_n$ converges in probability to the model whose best-fitting density minimizes the Bregman divergence to the true data-generating process $g$ as the number of data points increases. This establishes that our predictive model selection procedure is asymptotically consistent in targeting the optimal model under the Bregman scoring rule.


\subsection{Limiting target under score-mismatched updating and evaluation}\label{sec:theory:incoherent}

Our theoretical guarantees in \cref{sec:theory:model_selection} show that using the same Bregman scoring rule for parameter learning and predictive evaluation yields consistency for the optimal model under that score. However, it is natural to question if such a score-matched procedure is truly necessary. For instance, would combining Bregman-driven model selection with standard Bayesian parameter updating be sufficient? The following result shows that such a score-mismatched procedure is consistent, but it answers a different decision problem: the evaluation score is applied to the pseudo-true density induced by the updating score.

Consider two Bregman scores $S^{\phi_1}$ and $S^{\phi_2}$. For each model $M_k$, let $p^{\phi_2}_{k,n-1}(\cdot \mid X_{-i})$ denote the LOO posterior predictive computed from the $\phi_2$-Bregman posterior (i.e., the same construction as in \cref{eq:breg_pred_loo}, but with $\phi = \phi_2$). Define the score-mismatched LOO estimator
\begin{equation}\label{eq:uloo_incoherent_def}
    \hat{u}^{\mathrm{loo},(1|2)}_{n,k} := \frac{1}{n}\sum_{i=1}^n S^{\phi_1}\big(p^{\phi_2}_{k,n-1}(\cdot \mid X_{-i}), X_i\big).
\end{equation}
Let $\theta^{*(2)}_k$ denote the $\phi_2$-pseudo-true parameter for model $M_k$:
\begin{equation}
    \theta^{*(2)}_k := \argmax_{\theta_k \in \Theta_k} \E\Big(S^{\phi_2}\big( f_k(\cdot;\theta_k), X \big) \Big) = \argmin_{\theta_k \in \Theta_k} D^{\phi_2}\big( g \| f_k(\cdot; \theta_k) \big),
\end{equation}
in direct analogy with \cref{eq:pseudo_true}. Set
\begin{equation}\label{eq:uloo_incoherent_target}
    \bar{u}^{(1|2)}_{k} := \E\Big( S^{\phi_1}\big( f_k(\cdot; \theta^{*(2)}_k), X \big) \Big).
\end{equation}

\begin{theorem}\label{thm:loo_incoherent}
    For every $k$, assume that \cref{thm:ppc} holds for the updating score $S^{\phi_2}$. Also assume that the model densities are uniformly bounded as in \cref{condition1}(b), and that the evaluation score $S^{\phi_1}$ satisfies the regularity condition in \cref{condition1}(c). Then,
    \begin{itemize}
        \item[(a)] $\hat{u}^{\mathrm{loo},(1|2)}_{n,k} \xrightarrow[n \to \infty]{\mathrm{p}} \bar{u}^{(1|2)}_{k}$.
        \item[(b)] The limiting model comparison target is 
        \[
            \argmax_{k \in \{1,\ldots,K\}} \bar{u}^{(1|2)}_{k} = \argmin_{k \in \{1,\ldots,K\}} D^{\phi_1}\big(g \| f_k(\cdot; \theta^{*(2)}_k)\big).
        \]
    \end{itemize}
\end{theorem}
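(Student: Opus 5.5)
Part (a) reduces to a deterministic approximation plus a law of large numbers. Write $p_i := p^{\phi_2}_{k,n-1}(\cdot \mid X_{-i})$ and $f^* := f_k(\cdot;\theta^{*(2)}_k)$, and split
\[
\hat{u}^{\mathrm{loo},(1|2)}_{n,k} - \bar u^{(1|2)}_k = \frac1n\sum_{i=1}^n \Big( S^{\phi_1}(p_i, X_i) - S^{\phi_1}(f^*, X_i)\Big) + \Big(\frac1n\sum_{i=1}^n S^{\phi_1}(f^*, X_i) - \bar u^{(1|2)}_k\Big).
\]
The second term tends to zero almost surely by the strong law of large numbers. Its summands are bounded because $0\le f^*\le C_f$ by \cref{condition1}(b). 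Continuity of $\phi_1'$ on $[0,C_f]$ from \cref{condition1}(c) then gives $\sup_{u\in[0,C_f]}|\phi_1'(u)|<\infty$. The normalizing integral $\int m_1(f^*)\,d\lambda$ is a finite constant; this uses $m_1(0)=0$ (since $\phi_1(0)=0$), Lipschitzness of $m_1$, and $\int f^*\,d\lambda = 1$, which together give $|m_1(u)|\le L u$.

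For the first term I would prove a uniform Lipschitz-type bound for $S^{\phi_1}$ in the sup norm. For densities $p,r$ bounded by $C_f$,
\[
|S^{\phi_1}(p,x)-S^{\phi_1}(r,x)| \le \omega_{\phi_1'}\big(\|p-r\|_\infty\big) + L\int |p-r|\,d\lambda,
\]
where $\omega_{\phi_1'}$ is the modulus of uniform continuity of $\phi_1'$ on $[0,C_f]$. Each $p_i$ is a mixture of the $f_k(\cdot;\theta)$, so it is also bounded by $C_f$.

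\textbf{Main obstacle.} The integral term requires $L^1$ closeness, but \cref{thm:ppc} only supplies sup-norm closeness. I would get $L^1$ convergence by Scheffé's lemma. Sup-norm convergence gives pointwise convergence $p_i\to f^*$, and all functions involved are probability densities, so $\int|p_i - f^*|\,d\lambda\to 0$.

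The other difficulty is that $p_i$ depends on $i$ and on $n$. I would handle this with a deterministic sup bound over leave-one-out posteriors: show $\max_{1\le i\le n}\sup_x|p_i(x)-f^*(x)|\to 0$ in probability, and likewise for the $L^1$ distance. Two routes are available:
\begin{itemize}
\item[(i)] Rerun the proof of \cref{thm:ppc} with empirical averages over $X_{-i}$. These differ from $h_n$ by at most $2\sup|S^{\phi_2}|/n$ uniformly in $\theta$, so the same concentration argument applies uniformly in $i$.
\item[(ii)] Assume \cref{thm:ppc} in a form that covers leave-one-out posteriors, as the statement implicitly does.
\end{itemize}
I would also need Scheffé in a uniform form. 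The cleanest way is to bound $\int|p_i-f^*|$ via $2\int(f^*-p_i)_+$, and then control that by dominated convergence applied to $\sup_i|p_i-f^*|\wedge f^*$, using the bound $(f^*-p_i)_+\le f^*$. With these bounds the first term tends to zero in probability, which proves (a).

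For (b), fix $k$ and use the definitions from \cref{sec:pre:scoring}: $D^{\phi_1}(g\|f^*) = H^{\phi_1}(g) - \E_g S^{\phi_1}(f^*,X)$. The entropy $H^{\phi_1}(g)$ does not depend on $k$, so the argmax of $\bar u^{(1|2)}_k$ over $k$ equals the argmin of $D^{\phi_1}(g\|f_k(\cdot;\theta^{*(2)}_k))$. Finiteness of $H^{\phi_1}(g)$ is needed here, and I would state it as an implicit assumption.
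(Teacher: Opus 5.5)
Your proof of (a) is correct in substance, but it handles the $i$-dependence of $p_i$ differently from the paper. The paper, via the proof of \cref{supp:proof:loo_consistency}, never needs uniformity in $i$. It writes the LOO average as a conditional expectation over a uniform random index $L_n$ and uses Jensen and Markov to reduce to $\E|\Delta_n|\to 0$. It then invokes exchangeability (\cref{supp:proof:lemma5}), so only the full-data predictive $p^{\phi_2}_{k,n-1}(\cdot\mid X_{1:n-1})$ against an independent holdout $X_n$ remains. The bare conclusion of \cref{thm:ppc} together with \cref{supp:proof:lemma4} then gives $L^1$, hence in-probability, convergence.

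Your worst-case bound $\max_{i\le n}\sup_x|p_i-f^*|\to 0$ needs more than that conclusion. A statement about a single i.i.d.\ sample does not control a maximum over a triangular array of LOO posteriors. So your route (ii) is an extra assumption, not something the statement implicitly provides; the exchangeability step is exactly what makes it unnecessary.

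Route (i) does work under \cref{condition1} for $\phi_2$, but ``the same concentration argument applies uniformly in $i$'' should be made precise. The cleanest way is to note that $\pi^{\phi_2}_{k,n-1}(d\theta\mid X_{-i})\propto \exp\{-S^{\phi_2}(f_k(\cdot;\theta),X_i)\}\,\pi^{\phi_2}_{k,n}(d\theta\mid X_{1:n})$. If $|S^{\phi_2}|\le B_2$ (\cref{supp:proof:lemma0}(b) for $\phi_2$), then $\max_i\pi^{\phi_2}_{k,n-1}(A\mid X_{-i})\le e^{2B_2}\,\pi^{\phi_2}_{k,n}(A\mid X_{1:n})$ for every Borel $A$. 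The proof of \cref{thm:ppc} then transfers verbatim.

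In exchange, your route yields almost-sure convergence in (a), which is stronger than what the paper states. This uses your uniform Scheff\'e device: set $\epsilon_n:=\max_i\sup_x|p_i-f^*|$, so that $\int|p_i-f^*|\,d\lambda=2\int(f^*-p_i)_+\,d\lambda\le 2\int\min(f^*,\epsilon_n)\,d\lambda$, which is correct.

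Part (b) is the same argument as the paper's. Your remark that $H^{\phi_1}(g)$ must be finite for the identity to be meaningful is a fair point that the paper leaves implicit.
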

\cref{thm:loo_incoherent} implies that $\hat{u}^{\mathrm{loo},(1|2)}_{n,k}$ is a consistent estimator of the mismatched population utility $\bar{u}^{(1|2)}_{k}$. This utility evaluates the $\phi_1$-score at the $\phi_2$-pseudo-true density $f_k(\cdot; \theta_k^{*(2)})$, rather than at the $\phi_1$-pseudo-true density. Therefore, a score-mismatched procedure is consistent for its own limiting objective, but this objective need not select the model closest to $g$ under $D^{\phi_1}$, nor the model closest under $D^{\phi_2}$. This provides the formal reason for the score-matched construction used in the proposed g-ELPD criterion.

\section{Computation}\label{sec:comp}

A naive approach to evaluating \cref{eq:loo_utility} would require refitting the Bregman posterior $nK$ times, once for each held-out observation under each candidate model. Concretely, for each $i$ we would compute the LOO Bregman posterior $\pi_k^\phi(\theta_k\mid x_{-i})$ by Markov chain Monte Carlo or, in low-dimensional settings, deterministic grid evaluation. We would then form $p_{k,n-1}^\phi(\cdot\mid x_{-i})$, and evaluate $S^\phi\big(p_{k,n-1}^\phi(\cdot\mid x_{-i}),x_i\big)$. For many applications, this is computationally expensive for large $n$. Moreover, the LOO posteriors $\{\pi_k^\phi(\cdot\mid x_{-i})\}_{i=1}^n$ are often close to the full-data posterior $\pi_k^\phi(\cdot\mid x_{1:n})$, suggesting substantial  redundancy. The same redundancy motivates Pareto-smoothed importance sampling leave-one-out  (PSIS-LOO) cross-validation in standard Bayesian predictive model comparison \cite{vehtari2017practical, vehtari2024pareto}. We adapt this idea to the Bregman posterior in \cref{eq:breg_post}.

Our construction mirrors the standard PSIS-LOO, with the likelihood contribution used in ordinary Bayes replaced by the Bregman score contribution defining the generalized posterior. Under the ordinary Bayesian posterior with conditionally independent observations, the leave-one-out posterior satisfies $\pi(\theta \mid x_{-i}) \propto \pi(\theta \mid x_{1:n}) / f(x_i; \theta) $, which yields raw importance ratios proportional to $1/f(x_i; \theta)$. In the Bregman posterior, the analogous case-deletion identity is obtained directly from \cref{eq:breg_post}. For model $M_k$, the full-data Bregman posterior satisfies 
\begin{equation*}
    \pi_{k,n}^\phi(\theta_k \mid x_{1:n}) \propto \pi_k(\theta_k) \exp \Big( \sum_{m=1}^n S^\phi \big(f_k(\cdot; \theta_k), x_m) \Big).
\end{equation*}
Consequently, 
\begin{equation*}
    \pi_{k,n-1}^\phi (\theta_k \mid x_{-i}) \propto \pi_{k,n}^\phi (\theta_k \mid x_{1:n}) \exp \Big(\! -S^\phi \big(f_k(\cdot; \theta_k), x_i \big) \Big).
\end{equation*}
Thus, if $\theta_k^{(1)}, \ldots, \theta_k^{(J)}$ are draws from the full-data Bregman posterior, the raw case-deletion ratios are $\omega_{i,k}^{(j)} = \exp\Big(\! -S^\phi \big(f_k(\cdot; \theta_k^{(j)}), x_i \big) \Big)$, for $j=1,\ldots,J$. In the log-score special case, these reduce to $\omega_{i,k}^{(j)} \propto 1/f_k(x_i; \theta_k^{(j)})$. Although written for posterior draws, the same case-deletion identity also applies to deterministic grid approximations: grid weights replace Monte Carlo weights, and the ratios $\exp \Big(\! -S^\phi \big(f_k(\cdot; \theta_k), x_i \big) \Big)$, or their PSIS-smoothed versions, are applied before normalization.

Since these ratios can have heavy tails, PSIS replaces the upper tail of the empirical ratio distribution by a smoothed tail obtained from a fitted generalized Pareto distribution, producing stabilized importance weights and a diagnostic $\hat k$ that quantifies tail heaviness. For each $i$ and $k$, we apply Pareto smoothing to the raw ratios $\{\omega_{i,k}^{(j)}\}_{j=1}^J$ and normalize the smoothed ratios to obtain weights $\tilde \rho^{(j)}_{i,k}$, with $\sum_{j=1}^J \tilde \rho^{(j)}_{i,k} = 1$. The leave-one-out posterior predictive density is then approximated by 
\begin{equation*}
    \hat p^\phi_{k,n-1}(\cdot\mid x_{-i}) = \sum_{j=1}^J \tilde \rho^{j}_{i,k} f_k(\cdot;\theta_k^{(j)}).
\end{equation*}
The g-ELPD contribution is obtained by evaluating $S^\phi \big( \hat p^\phi_{k,n-1}(\cdot\mid x_{-i}), x_i \big) $. 

For the $\beta$-divergence family, the pointwise LOO contribution is therefore 
\begin{equation}\label{eq:beta_score_comp}
S^\beta\!\left(\hat p^\phi_{k,n-1}(\cdot \mid x_{-i}),x_i\right) 
= \frac{ \Big(\sum_{j=1}^J \tilde \rho^j_{i,k} f_k(x_i;\theta_k^{(j)})\Big)^{\beta-1}}{\beta-1} - \frac{1}{\beta} \int_\cX \Big( \sum_{j=1}^J \tilde \rho^j_{i,k} f_k(t;\theta_k^{(j)}) \Big)^\beta \lambda(dt).
\end{equation}
The first term requires density evaluations at the held-out observation $x_i$. The second term requires the $\beta$-power integral. If the observation space $\cX$ is discrete and finite, this integral is an exact finite sum over $\cX$. For countable $\cX$, we approximate the sum by truncating the support at negligible posterior predictive tail mass. For a continuous $\cX$, we approximate it numerically, for example by grid methods over the relevant support.

PSIS provides the tail-shape diagnostic $\hat k_{i,k}$ for each held-out observation and model. Following the standard PSIS-LOO practice, when $\hat k_{i} > \min(1 - \tfrac{1}{\log_{10}(J)}, 0.7)$ (typically for $J>2000$), the full-data Bregman posterior may be a poor proposal for the corresponding LOO posterior. When such cases occur, exact refitting for the problematic observations or $K$-fold cross-validation provides a more reliable alternative.

\section{Examples}\label{sec:examples}
\subsection{Contaminated Normal versus Miscentered Heavy-tailed Model}
In this simulation, we return to the motivating example from \cref{sec:intro} and study it as a full score-matched Bayesian predictive model comparison problem. Data are generated from the contaminated normal distribution $q(x) = (1-\epsilon)\cN(x; 0,1) + \epsilon\cN(x; 0, 10^2)$, with $\epsilon=0.2$. Thus, most observations come from a central $\cN(0,1)$ component, while a small fraction come from a much more dispersed contamination component. 

To demonstrate the utility of our g-ELPD framework, we compare two misspecified candidate models. The first model is a normal location model: $X_i \mid \mu \sim \cN(\mu,1)$, with prior distribution $\mu \sim \cN(0,5^2)$. This model can learn the central location of the bulk of the data, but it cannot represent the heavy contamination tail. The second model is a miscentered heavy-tailed model: $X_i \mid \nu \sim t_\nu(3,1)$, with prior distribution $\nu -1 \sim \mathrm{Lognormal}(\log 5, 1)$. This model can adapt its tail thickness through $\nu$, but its location is fixed at $3$ and is therefore systematically shifted away from the central component of the data. We consider several values of $\beta$ close to one to examine how small departures from the log score change the comparison. 


Computation of the g-ELPD for this example proceeds as follows. For each value of $\beta$, both models are updated using the corresponding $\beta$-score Bregman posterior and evaluated using the same $\beta$-score LOO predictive utility. Since both models have one-dimensional parameter spaces, we use deterministic grid approximations. For the normal, $\beta=1$ uses the conjugate closed-form LOO posterior predictive distribution, while $\beta>1$ uses direct LOO calculation on the grid $\mu \in [-6, 6]$ with $501$ equally spaced points. 
For the $t$ model, we work on the grid $\eta = \log(\nu-1) \in [\log(0.01), \log(79)]$, also with $501$ equally spaced points. The t-model LOO contributions are first computed by PSIS using the generalized case-deletion ratios induced by the $\beta$-posterior; since the posterior is represented on a grid, the ratios are applied to the full-data grid posterior weights. When the Pareto-$\hat k$ diagnostic exceeds $1-1/\log_{10}(501)$, the corresponding pointwise LOO contribution is replaced by a direct deterministic grid calculation over $\eta$. For $\beta>1$, the integral term in \cref{eq:loo_beta} is approximated by a rectangular grid over $z \in [-120,120]$ with $1001$ equally spaced points. 

\begin{figure}
    \centering
    \includegraphics[width=\linewidth]{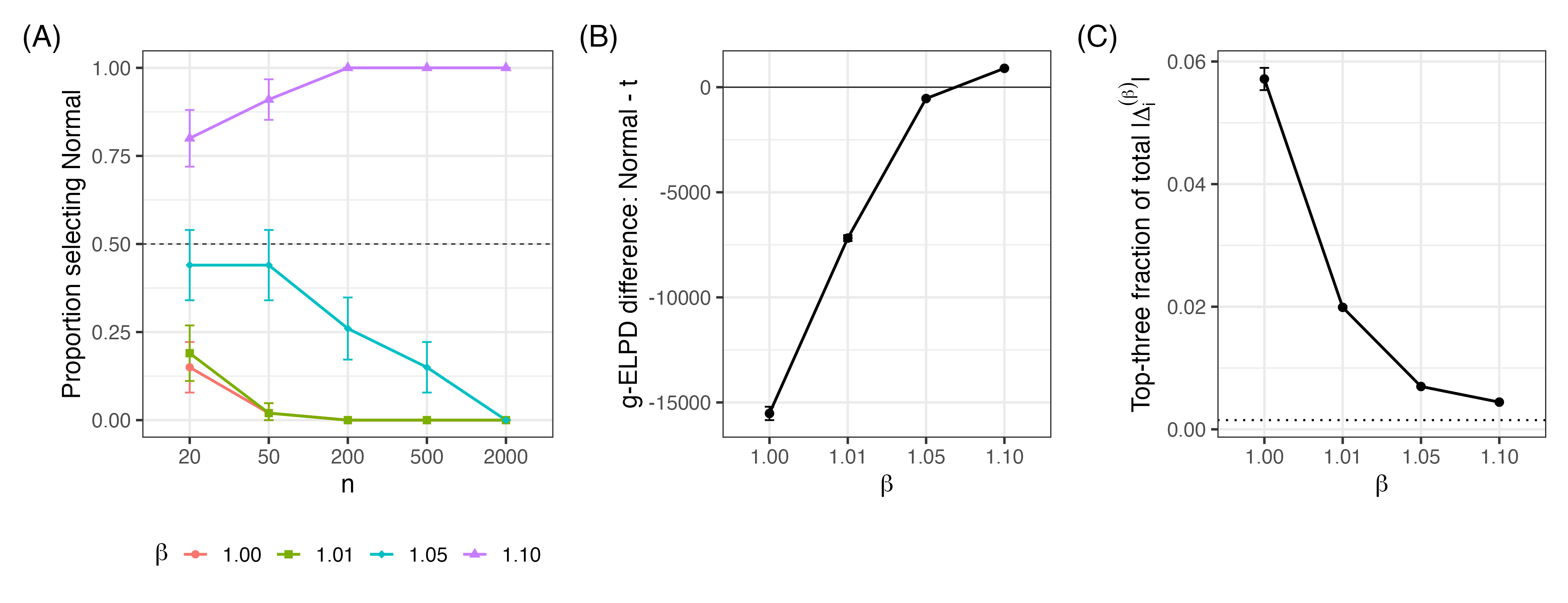}
    \captionsetup{font=small}
    \caption{Robust predictive model selection in the contaminated normal simulation. (A) Proportion of 100 replications in which the normal model has larger g-ELPD, across sample sizes and values of $\beta$. Error bars show two Monte Carlo standard errors. (B) Mean signed pairwise g-ELPD difference, normal minus t, for $n=2000$. Positive values favor the normal model. (C) Mean fraction of the total absolute pointwise difference contributed by the three observations with the largest $\lvert \Delta_i^{(\beta)} \rvert$, for $n=2000$. The dotted horizontal line marks the lower bound $3/n$. Increasing $\beta$ reduces the concentration of the comparison among a few
    extreme observations and shifts selection toward the normal model.}
    \label{fig:normal_t_sim}
\end{figure}

We consider sample sizes $n \in \{20, 50, 200, 500, 2000\}$, and $\beta \in \{1, 1.01, 1.05, 1.1\}$. For each pair $(n,\beta)$, we run $100$ independent replications. In each replication, we compute the g-ELPD for both candidate models and record whether the normal model has the larger value. \cref{fig:normal_t_sim} summarizes these selection frequencies and the corresponding pairwise comparison mechanism.
Panel (A) shows how the selected model changes with both $n$ and $\beta$. As $n$ grows, the log-score comparison increasingly selects the t model, whereas the larger values of $\beta$ increasingly select the normal model. Thus, the selected model depends on the scoring rule used in the g-ELPD criterion.
Panel (B) focuses on the signed g-ELPD difference, normal minus t, for $n=2000$. This shows the margin of the pairwise predictive utility. As $\beta$ increases, the signed contrast moves from strongly favoring the t model toward favoring the normal model, showing that the change in selection is driven by a change in the pairwise predictive utility.

Let $\Delta_i^{(\beta)} = S^\phi (p^\beta_{\mathrm{N},-i}, x_i) - S^\phi (p^\beta_{\mathrm{T},-i}, x_i)$, where $p^\beta_{\mathrm{N},-i}$ and $p^\beta_{\mathrm{T},-i}$ are the leave-one-out score-matched posterior predictive densities under the normal and t models, respectively. Panel (C) shows the fraction of the total absolute pairwise contrast contributed by the three observations with the largest $\lvert \Delta_i^{(\beta)} \rvert$, again for $n=2000$. This fraction decreases substantially as $\beta$ increases, showing that the change in the pairwise g-ELPD margin is accompanied by a reduction in the concentration of the comparison among a few extreme observations. Thus, the example demonstrates the bounded-contribution mechanism in \cref{prop:beta_robust}: increasing $\beta$ reduces the influence of low-predictive-density observations and shifts the score-matched comparison away from tail protection alone.

\subsection{Thermal Performance Curves in Microbial Ecology}\label{sec:realdata}
Modeling bacterial growth rate as a function of temperature is important for understanding how microbial populations respond to environmental change. 
Thermal Performance Curve (TPC) models are widely used for this study, which describe how biological performance increases with temperature up to an optimum and then declines near the upper thermal limit \citep{sinclair2016can}. 
A key practical issue is that different TPC models can give similar fits over the central temperature range but differ substantially in how they represent behavior near the lower and upper temperature limits. This makes TPC model selection a natural setting for evaluating robust predictive criteria.

We analyze the \textit{Pseudomonas putida} specific growth-rate dataset from \citet{ratkowsky2005unifying}, included in the thermal performance compilation of \citet{kontopoulos2024no}. The dataset contains $n=85$ observations across temperatures ranging from $-0.39^\circ$C to $39.21^\circ$C. Let $T_i$ denote the temperature and $y_i$ the corresponding scaled growth-rate observation. Many studies show that no single TPC model is universally optimal across traits and taxa despite the proliferation of their use \citep{kontopoulos2024no, kellermann2019comparing, angilletta2006estimating}, motivating dataset-specific model comparison among plausible TPC families.

We compare three representative low-dimensional TPC models from \citet{kontopoulos2024no}: Bilinear, Simplified Briere I, and Analytis--Kontodimas. The Bilinear model can adapt sharply to behavior near the bounds of the temperature range, whereas the other two models are smooth, bounded TPC models with lower and upper thermal limits. The three mean functions are given in \cref{tab:tpc-models}. This contrast is biologically meaningful for bacterial-specific growth rates, which may exhibit gradual changes over the central temperature range but rapid decline near thermal limits (\cref{fig:tpc_fits}(A)). 

\begin{table}
\centering
\captionsetup{font=small}
\renewcommand{\arraystretch}{1.35}
\setlength{\tabcolsep}{8pt}
\begin{tabular}{lcl}
\toprule
Model & Parameters & Mean function $B(T)$ \\
\midrule
Bilinear & 4 & $ B(T) =
\begin{cases}
B_{pk} \cdot \dfrac{T-T_{\min}}{T_{\mathrm{pk}}-T_{\min}}, & T_{\min}<T\le T_{\mathrm{pk}},\\
B_{pk} \cdot \dfrac{T_{\max}-T}{T_{\max}-T_{\mathrm{pk}}}, & T_{\mathrm{pk}}<T<T_{\max}.
\end{cases}
$ \\

Simplified Briere I & 3 & $a (T-T_{\min})\sqrt{T_{\max}-T}$ \\

Analytis--Kontodimas & 3 & $a(T-T_{\min})^2(T_{\max}-T)$ \\
\bottomrule
\end{tabular}

\vspace{0.5em}
\caption{Candidate thermal performance curve models used in the \textit{Pseudomonas putida} application. Here $B(T)$ denotes the mean scaled growth rate at temperature $T$; $T_{\min}$, $T_{\max}$, $T_{\mathrm{pk}}$ denote the lower thermal limit, upper thermal limit, and peak temperature, respectively; $B_{\mathrm{pk}}$ denotes the peak growth rate; and $a >0$ is a scale parameter. For all three models, the displayed function is used for $T_{\min} < T < T_{\max}$, with $B(T)=0$ outside this interval.}
\label{tab:tpc-models}
\end{table}

We fit each model in Stan \citep{carpenter2017stan} using the score-matched Bregman posterior and compute g-ELPD LOO by the PSIS approximation described in \cref{sec:comp}. All Stan fits showed good convergence diagnostics: all $\hat R$ values were below $1.01$ and effective sample sizes were large enough across models and values of $\beta$. PSIS diagnostics $\hat k$ were generally stable across $\beta$.

\begin{figure}
    \centering
    \includegraphics[width=\linewidth]{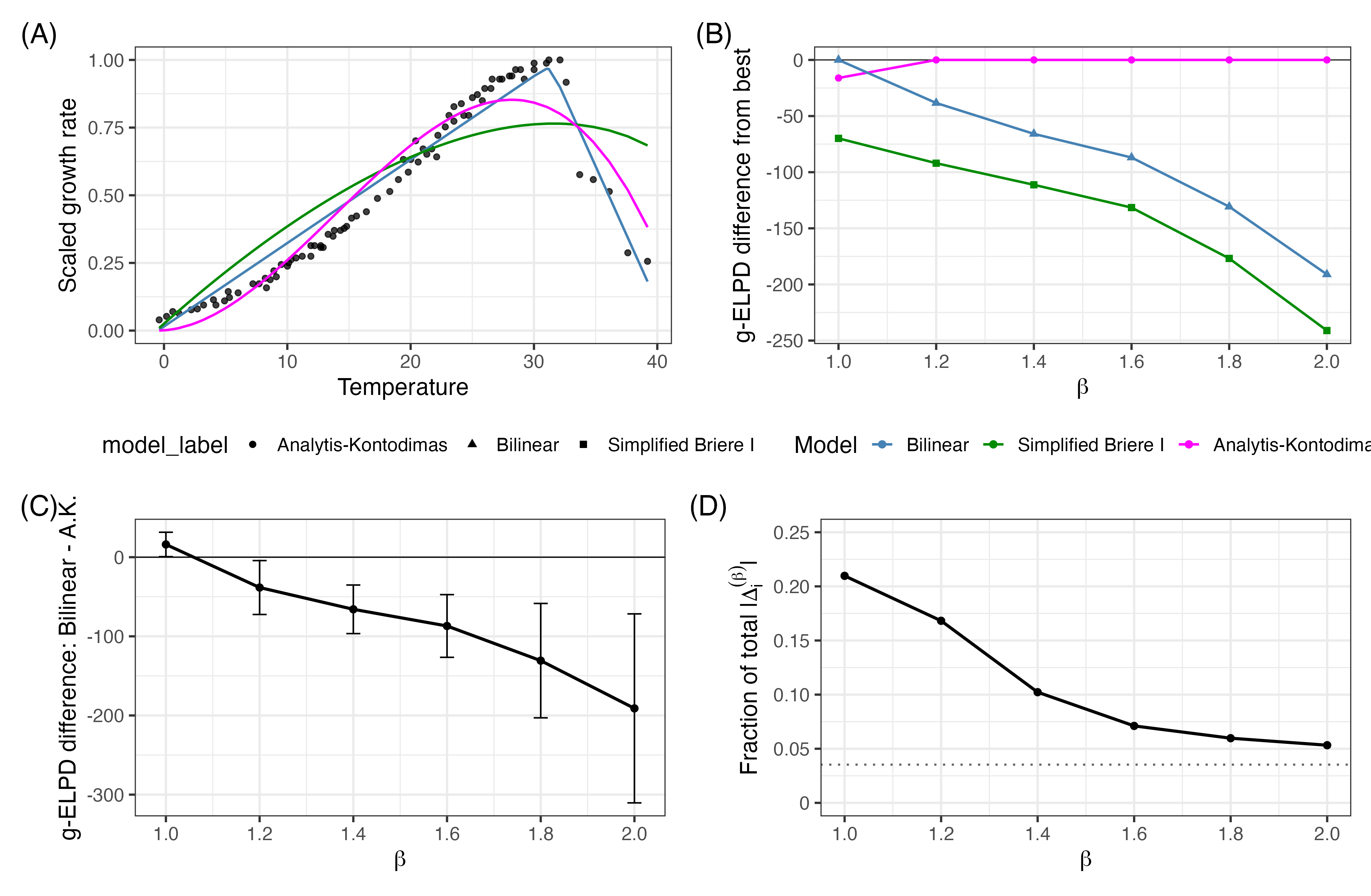}
    \captionsetup{font=small}
    \caption{Robust predictive model selection for the \textit{Pseudomonas putida} TPC dataset. (A) Normalized growth-rate observations and posterior mean fits of the three candidate TPC models under the log score ($\beta=1$); (B) Model-specific g-ELPD differences from the best model at each $\beta$; (C) Signed pairwise g-ELPD difference of Bilinear minus Analytis--Kontodimas, with $2$ times the standard error; positive values favor Bilinear and negative values favor Analytis--Kontodimas; (D) Fraction of the total absolute pointwise difference contributed by the three observations with the largest $|\Delta_i^{(\beta)}|$. The horizontal dotted line marks the lower bound, $3/85$, corresponding to equal absolute contributions across all observations. Standard ELPD selects Bilinear at $\beta=1$, whereas g-ELPD selects Analytis--Kontodimas for all $\beta > 1$ considered. The reduction in the largest pointwise contributions is consistent with \cref{prop:beta_robust}}
    \label{fig:tpc_fits}
\end{figure}

\cref{fig:tpc_fits} summarizes the TPC comparison\footnote{In this section, we report g-ELPD differences on the conventional summed scale}. Panel (B) shows model-specific g-ELPD differences from the best model at each value of $\beta$. Under the log score, Bilinear has the largest predictive utility, while Analytis--Kontodimas is the second-best. For every $\beta > 1$ considered, however, Analytis--Kontodimas becomes the selected model with the g-ELPD gap increasing as $\beta$ grows. Simplified Briere I is included as a standard smooth TPC alternative, but it is not competitive for this dataset.

To understand why the selected model changes, we examine the pairwise comparison between Bilinear and Analytis--Kontodimas. Let $\Delta_i^{(\beta)} = S^\phi (p^\beta_{\mathrm{Bilinear},-i}, y_i) - S^\phi (p^\beta_{\mathrm{AK},-i}, y_i)$, where $p^\beta_{\mathrm{Bilinear},-i}$ and $p^\beta_{\mathrm{AK},-i}$ are the leave-one-out Bregman posterior predictive densities under the Bilinear and Analytis--Kontodimas models. \cref{fig:tpc_fits}(C) shows the signed pairwise g-ELPD difference, Bilinear minus Analytis--Kontodimas, with two standard-error bars. Positive values favor Bilinear and negative values favor Analytis--Kontodimas. At $\beta=1$, the difference is $16.09$ with standard error $7.70$. Once $\beta>1$, the sign reverses and the estimated comparison consistently favors Analytis--Kontodimas.

\cref{fig:tpc_fits}(D) shows the fraction of the total absolute pairwise contrast contributed by the three observations with the largest $|\Delta_i^{(\beta)}|$.  Under the log score, these three observations account for $21\%$ of the model comparison. This fraction decreases steadily as $\beta$ increases, reaching $5.3\%$ at $\beta=2$. Thus, the change in model ranking is accompanied by a reduction in the concentration of the pairwise comparison among a few high-leverage observations. This is the empirical mechanism predicted by \cref{prop:beta_robust}.

\subsection{Randomly Acquired Characteristics on Shoe Soles}

An important statistical problem in the evaluation of forensic footwear evidence is modeling the spatial distribution of randomly acquired characteristics (RACs) -- the cuts, holes, and other forms of damage that accumulate on shoe soles through use \citep{KELLETT2026100673}. The locations of RACs play a key role in distinguishing prints made by shoes of the same make, brand, and size. Inaccurate models may underestimate the probability of coincidental matches and thereby overstate the probative value of footwear evidence.

\begin{figure}[H]
    \centering
    \includegraphics[width=0.7\linewidth]{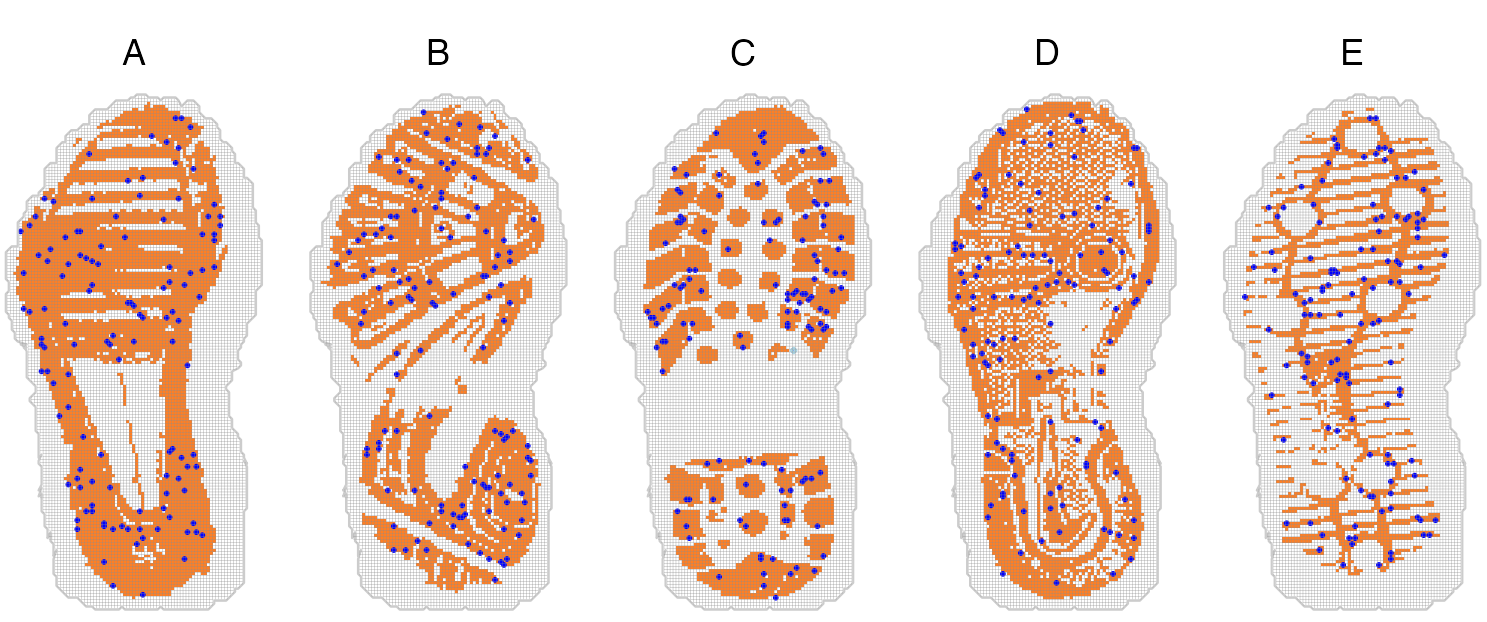}
    \captionsetup{font=small}
    \caption{The binarized contact grid and RAC locations for the five shoe treads. Orange tiles depict the presence of contact surface. Blue points indicate the RAC locations, with light blue distinguishing a single RAC located nonadjacent to contact surface.}
    \label{fig:shoe_data}
\end{figure}

In this study, we consider a synthetic version of the JESA database \citep{wiesner2020dataset}, made publicly available by \cite{spencer2020bayesian}, to perform a g-ELPD comparison of competing models for the spatial distribution of RACs. Figure~\ref{fig:shoe_data} presents the five shoe treads $s \in \left\{A, B, C, D, E\right\}$ for which data are available. Each tread is represented as a binarized image of a laboratory-generated test impression of the shoe, standardized to a common grid consisting of $J = 11472$ locations. This representation is treated as the shoe's contact surface: a value of 1 indicates that the shoe sole contacts the ground (shown in orange), and 0 indicates that it does not (shown as white). For each tread $s$, 125 RAC coordinates $y_{1,s}, \ldots, y_{125,s}$ are available (shown in blue) for a total of 625 observed locations.

We consider the task of modeling the shoe-specific spatial distributions $\lambda_s$ of RAC locations for each shoe $s$, where $y_{i,s} \overset{\text{i.i.d.}}{\sim} \lambda_s$. By convention, $\lambda_s$ is assumed to be piecewise constant at the resolution of the contact surface. Inferring $\lambda_s$ thus amounts to estimating a discrete distribution over the $11472$ locations. Four candidate models for this task are outlined in Table~\ref{tab:shoe-models}, formulated as generalized linear models following \citet{manna2026scalable}.

\begin{table}[H]
\centering
\captionsetup{font=small}
\renewcommand{\arraystretch}{1.35}
\setlength{\tabcolsep}{8pt}

\begin{tabular}{l c l @{\;}l}
\toprule
Model & Parameters & \multicolumn{2}{l}{RAC Probabilities} \\
\midrule

Uniform & 0
& $\lambda_s(y)$
& $\propto 1$
\\

Spatial & 138
& $\lambda_s(y)$
& $\propto \exp\left(\alpha^{\text{spatial}}_{r(y)}\right)$
\\

Contact & 6
& $\lambda_s(y)$
& $\propto \exp\left(\alpha^{\text{contact}}_{1 + c_s(y)}\right)$
\\
Spatial + Contact & 144
& $\lambda_s(y)$
& $\propto \exp\left(
\alpha^{\text{contact}}_{1 +c_s(y)}
+
\alpha^{\text{spatial}}_{r(y)}
\right)$
\\
\bottomrule
\end{tabular}
\vspace{0.5em}
\caption{Candidate Models for the RAC distributions. 
}
\label{tab:shoe-models}
\end{table}
Here, $\alpha^{\text{spatial}} \in \mathbb{R}^{138}$ captures a spatial trend in RAC probabilities that is shared across shoes. The function $r$ maps each location $y$ to one of 138 distinct $10 \times 10$ regions within which the spatial effect is assumed constant. An ICAR \citep{besag1991bayesian} prior is placed on $\alpha^{\text{spatial}}$ to promote spatial smoothness, with neighboring regions defined as adjacent. Similarly, $\alpha^{\text{contact}} \in \mathbb{R}^{6}$ models the effect of the contact surface on RAC probabilities. The function $c_s(y) \in \left\{0,\ldots,5\right\}$ counts the number of contact surface locations among $y$ and its four immediate neighbors. To encode an assumption that RACs can be observed only on the contact surface \citep{kaplan2022location}, $\alpha^{\text{contact}}_1$ is fixed at $-500$, effectively assigning zero probability wherever $c_s(y) = 0$. The remaining five effects are assigned independent $\text{Normal}(0, 6^2)$ priors.

We fit each model in Stan \citep{carpenter2017stan} using the score-matched Bregman posterior, achieving satisfactory performance on Stan's MCMC diagnostics. We then computed g-ELPD LOO using the PSIS approach described in \cref{sec:comp}. Because $y$ takes values in a finite grid, the sum for the second term in \cref{eq:beta_score_comp} can be computed without truncation.

The PSIS diagnostic $\hat k$ exceeded the threshold for a single RAC on shoe C under the Spatial + Contact model. To ensure an accurate LOO score for this observation, we reran the Stan code with this RAC held-out and directly computed its g-ELPD LOO score from the resulting posterior draws. The offending observation was the only RAC in the dataset located at a location for which $c_s(y) = 0$. It can thus be viewed as an outlier due to its violation of the assumption that RACs only appear on the contact surface. In Figure~\ref{fig:shoe_data}, this RAC can be seen near the lower-right edge of the forefoot on Shoe C, distinguished using light blue.

\begin{figure}
\centering

\begin{subfigure}{0.48\linewidth}
    \caption{Predictive probabilities under $\beta = 1$}
    \label{sub:shoe_predictives2}
    \includegraphics[width=\linewidth]{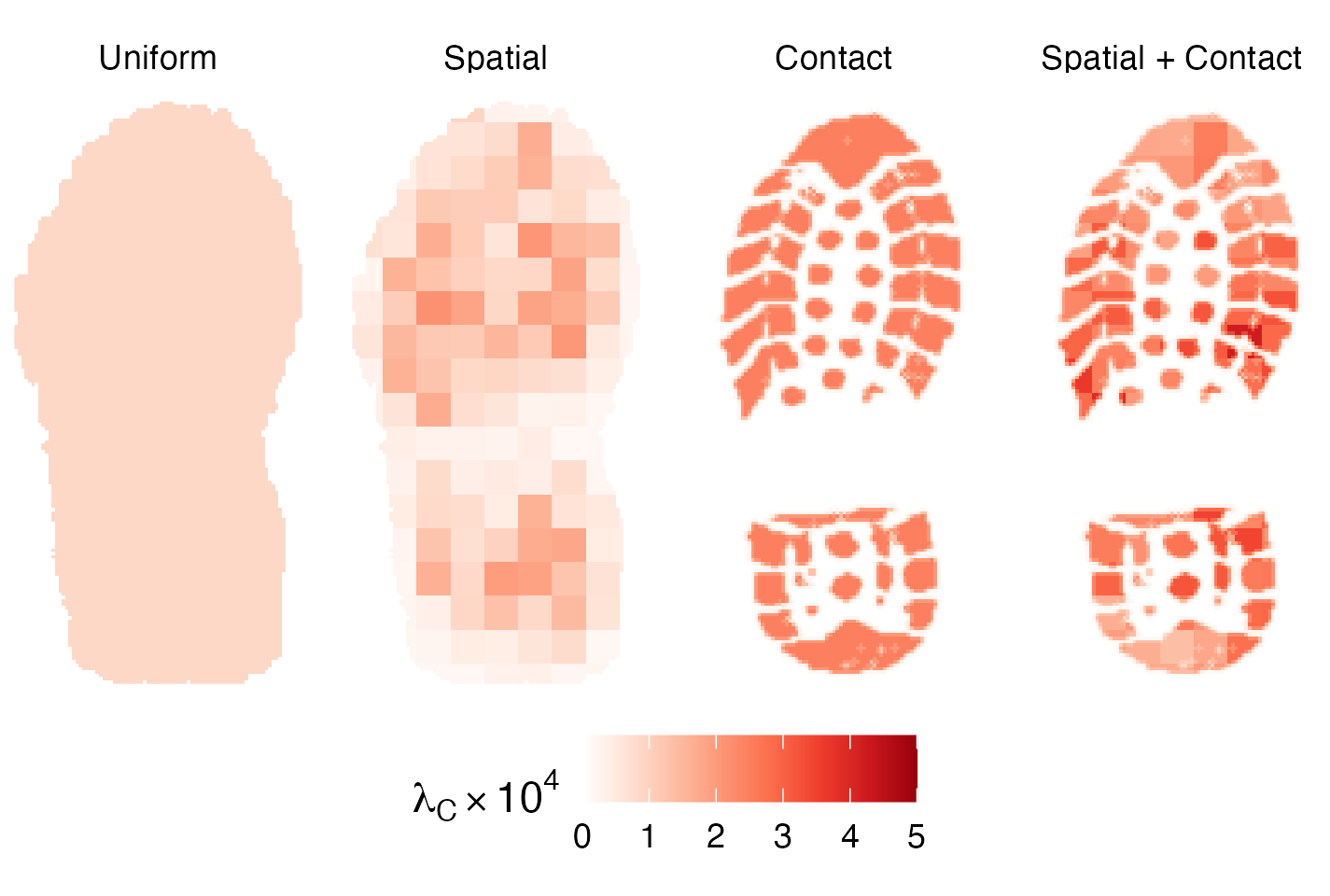}
\end{subfigure}\hfill
\begin{subfigure}{0.48\linewidth}
    \caption{Model-specific g-ELPD differences}
    \label{sub:shoe_comparison}
    \includegraphics[width=\linewidth]{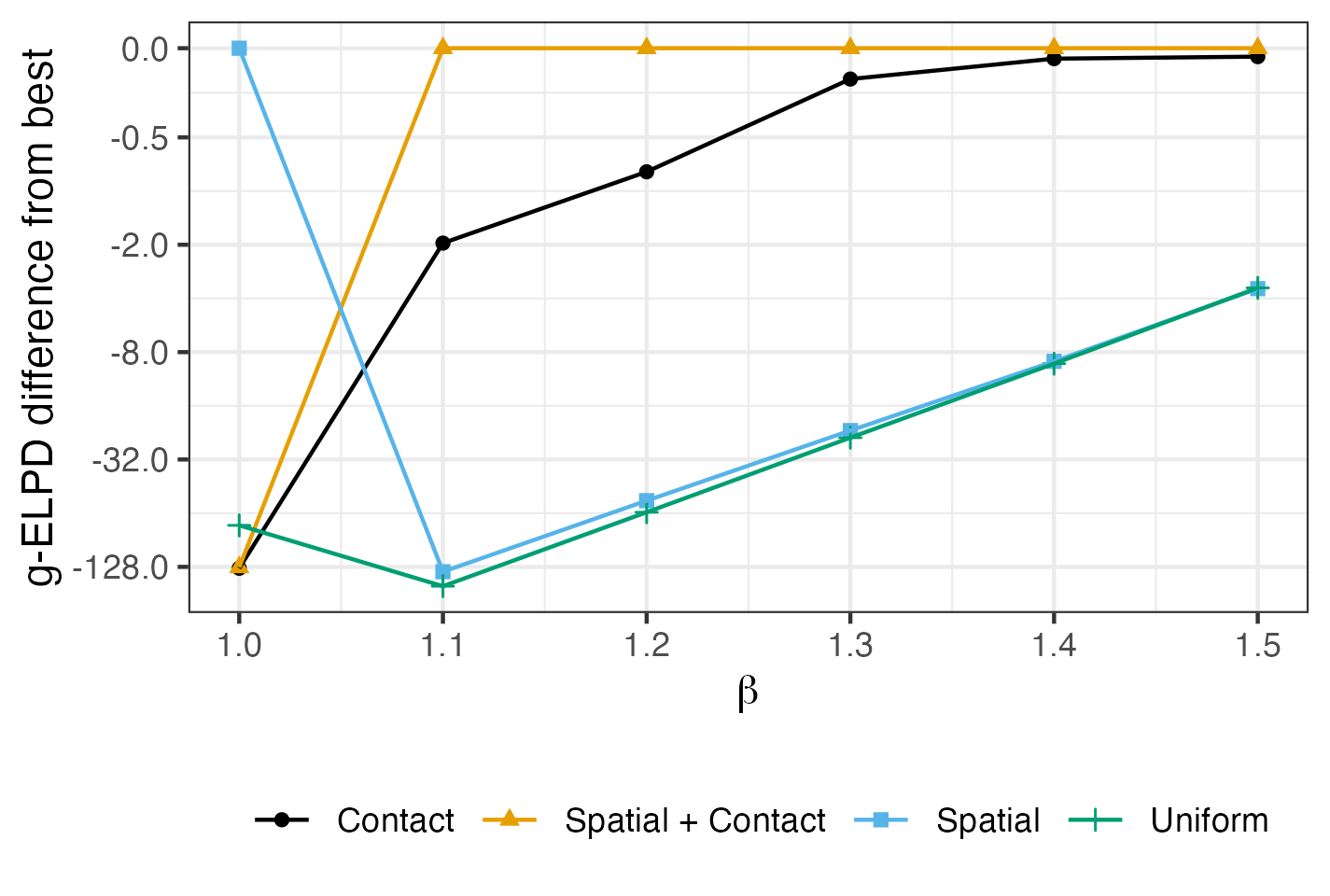}
\end{subfigure}

\vspace{1em}

\begin{subfigure}{0.48\linewidth}
    \caption{Spatial versus Spatial + Contact}
    \label{sub:shoe_minus}
    \includegraphics[width=\linewidth]{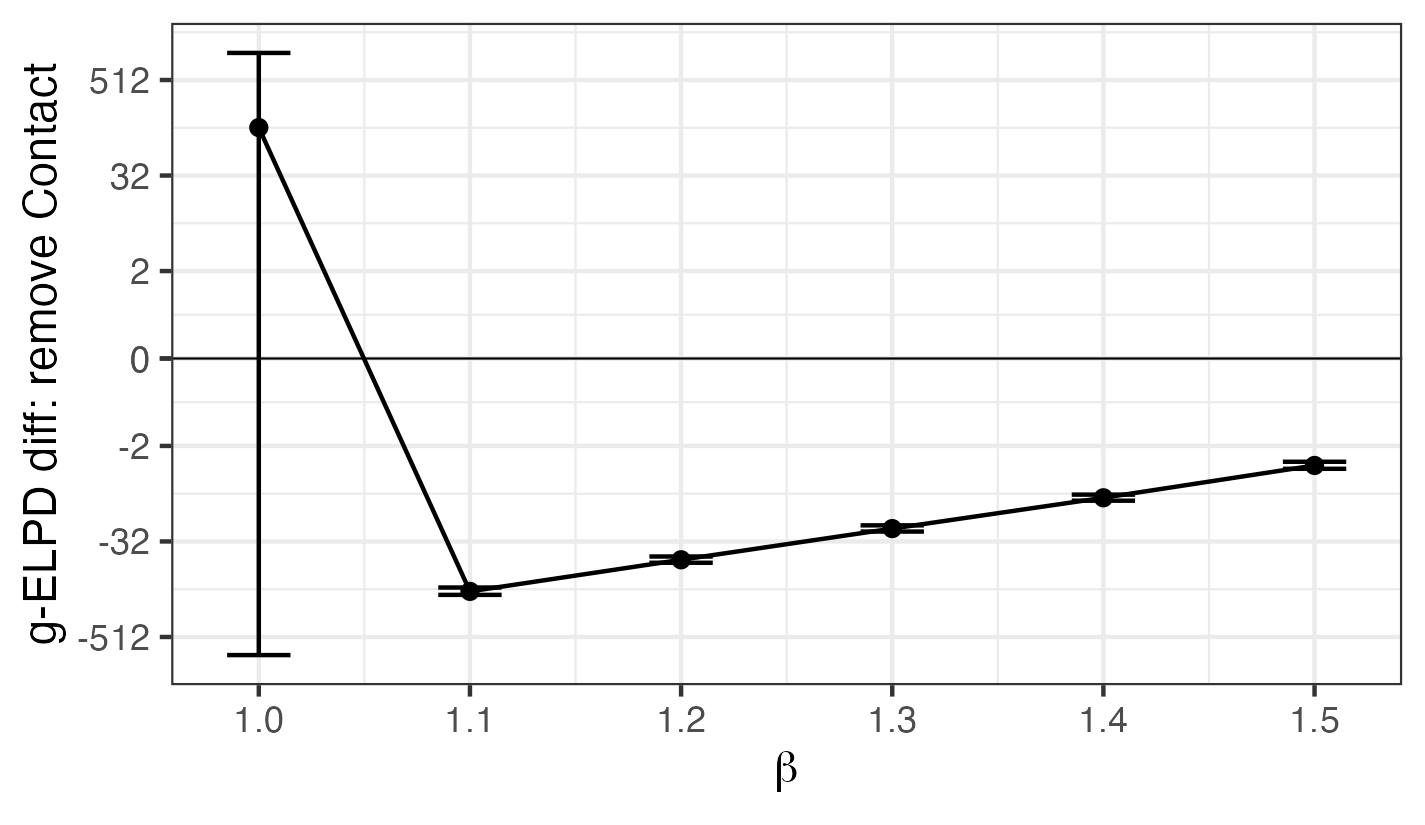}
\end{subfigure}\hfill
\begin{subfigure}{0.48\linewidth}
    \caption{Contribution of the largest $|\Delta_i^{(\beta)}|$}
    \label{sub:shoe_fraction}
    \includegraphics[width=\linewidth]{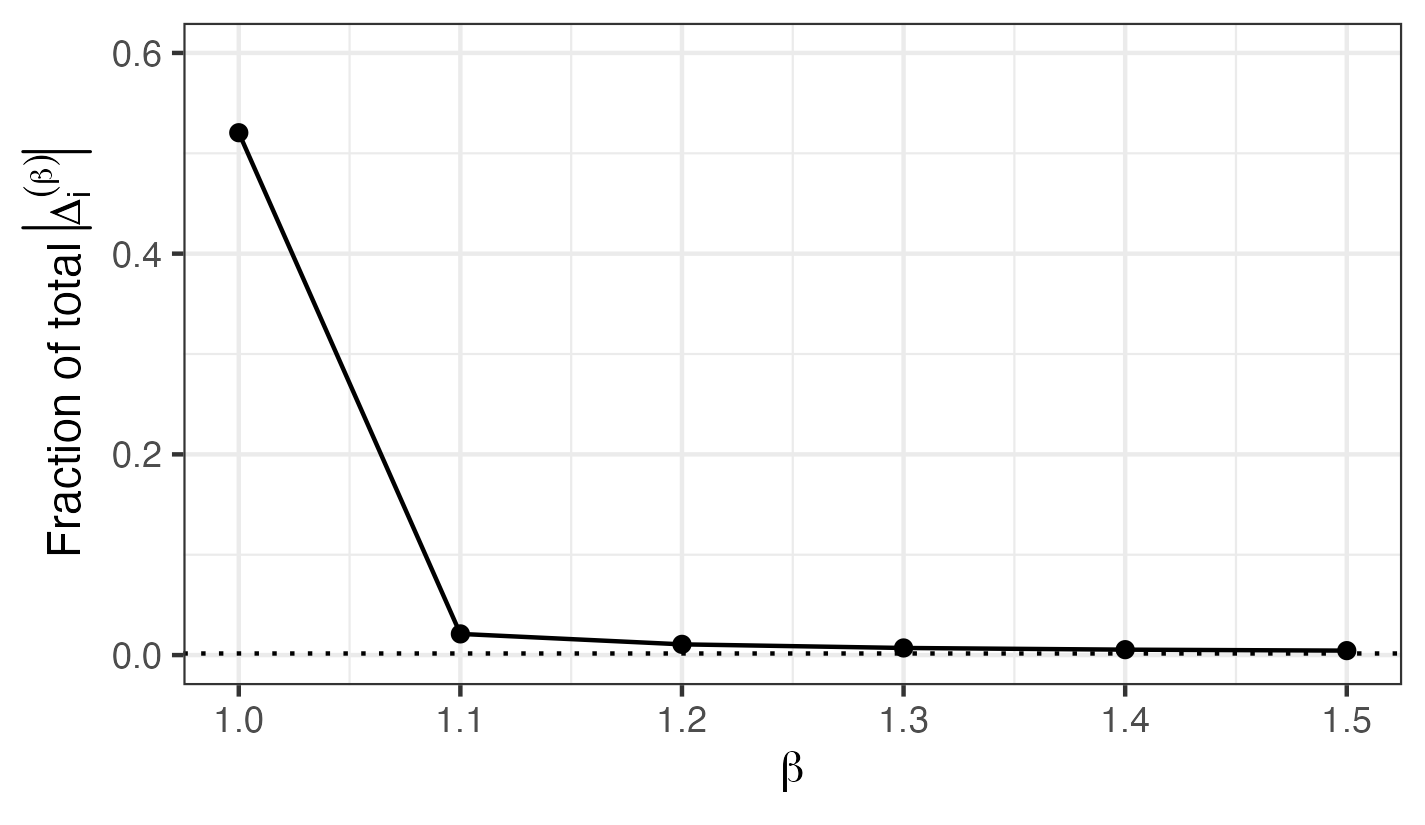}
\end{subfigure}
\captionsetup{font=small}
\caption{Robust predictive model selection for the synthetic JESA dataset. (a) Posterior mean fits of $\lambda_{\text{C}}$ for the four candidate models under the log score ($\beta = 1$). (b) Model-specific g-ELPD differences from the best model at each $\beta$, shown on a logarithmic scale; (c) Signed pairwise g-ELPD difference of the Spatial + Contact model versus the Spatial Model, with $2$ times the standard error; (d) Fraction of the total absolute pointwise difference contributed by the observation with the largest $|\Delta_i^{(\beta)}|$; the horizontal dotted line marks the lower bound, $1/625$, corresponding to equal contributions across all observations. Standard ELPD selects the Spatial Model at $\beta=1$, whereas g-ELPD selects Spatial + Contact for all $\beta > 1$ considered. The reduction in the largest pointwise contribution is consistent with \cref{prop:beta_robust}.
}
\label{fig:shoe_main_grid}
\end{figure}


\cref{fig:shoe_main_grid} summarizes the results of the g-ELPD model comparison. Under the traditional log score ($\beta = 1$), models without the contact surface component perform better, whereas models incorporating the contact surface dominate for all $\beta > 1$ (\cref{sub:shoe_comparison}). \cref{sub:shoe_minus} reinforces this finding for the two models incorporating the spatial component, and highlights the high degree of uncertainty in the model comparison for $\beta = 1$. \cref{sub:shoe_fraction} then establishes that the majority of this absolute performance difference at $\beta = 1$ is attributed to a single RAC, thereby explaining the high degree of uncertainty.

Indeed, the RAC responsible for these differences is the same outlier observation on shoe C discussed above. The two contact surface models assign effectively no probability to locations for which $c_s(y) = 0$ (\cref{sub:shoe_predictives2}), whereas the Uniform and Spatial models do not use the contact surface. Consequently, under the log score, the contact models perform worse (albeit with a great deal of uncertainty) due to the very large penalty incurred for poor performance on this single RAC. In contrast, for the g-ELPD scores with $\beta > 1$, the impact of any single observation is bounded, limiting the penalty for the outlier. This reveals that the contact surface models exhibit superior performance for the vast majority of RACs. The change in rankings as $\beta$ increases demonstrates \cref{prop:beta_robust}.

A secondary finding of this analysis is that models which incorporate the spatial component outperform those that do not, but the relative advantage is attenuated for larger values of $\beta$. This suggests that
the implicit penalty on location-wise probability variation implied by the second term in \cref{eq:beta_score_comp} outweighs the performance gains obtained by spatial variation as $\beta$ increases, flattening the optimal spatial effect. This effect is demonstrated directly by \cref{fig:shoe_predictives3} in the Supplementary Material.


\section{Discussion}\label{sec:conclusion}
We developed a score-matched Bayesian predictive model selection framework in which the same Bregman scoring rule is used for generalized posterior updating, posterior predictive construction, and LOO predictive evaluation through g-ELPD. This framework recovers standard ELPD under the log score, while allowing model selection to target different predictive pseudo-truths under different Bregman divergences; for the $\beta$-divergence family, $\beta>1$ reduces the influence of isolated low-density observations. In practice, we view the choice of $\beta$ as part of the broader choice of $\beta$-score, rather than as a universal tuning parameter to be optimized. Therefore, we recommend reporting g-ELPD over a range of $\beta$ values, including $\beta=1$ as the log-score baseline, and interpreting changes in model ranking as sensitivity to the chosen divergence.

We demonstrate the practical efficacy of our approach on three separate examples. However, some settings may require additional methodological advances for practical implementation.
First, for $\beta>1$, g-ELPD requires evaluating the $\beta$-score integral, which may be difficult in multivariate, discrete-continuous, or high-dimensional observation spaces. Second, although we report conventional LOO standard errors, uncertainty quantification for generalized-score model comparisons remains less developed than for standard ELPD and should be interpreted cautiously in finite samples.

Future work should develop scalable approximations for the Bregman score integral, including Monte Carlo, quadrature, or variational approximations that remain stable in higher dimensions. Another important direction is uncertainty quantification for generalized-score predictive comparisons, for example through Bayesian-bootstrap or finite-sample approaches for pointwise score differences. A further open problem is principled divergence selection. The $\beta$-divergence family provides a robustness path, but other Bregman divergences may be better aligned with different scientific goals.  Finally, the same score-matched idea could be extended from winner-take-all selection to predictive combination, such as Bregman stacking or Bayesian-bootstrap model weights under general proper scoring rules.

\section*{Acknowledgments}
We thank Zhiyi Chi, Gyuhyeong Goh, Steve MacEachern, Jacob Fontana, Jack Jewson, David Frazier, Colin Kremer, and Jeff Miller for helpful conversations.
N.A.S. was supported in part by the Makuch Faculty Fund Award in Mathematical and Data Sciences at University of Connecticut.

\putbib
\end{bibunit}



\clearpage
\setcounter{page}{1}
\setcounter{section}{0}
\setcounter{table}{0}
\setcounter{figure}{0}
\setcounter{algorithm}{0}
\renewcommand{\theHsection}{SIsection.\arabic{section}}
\renewcommand{\theHtable}{SItable.\arabic{table}}
\renewcommand{\theHfigure}{SIfigure.\arabic{figure}}
\renewcommand{\theHalgorithm}{SIalgorithm.\arabic{algorithm}}
\renewcommand{\thepage}{S\arabic{page}}  
\renewcommand{\thesection}{S\arabic{section}}   
\renewcommand{\thetable}{S\arabic{table}}   
\renewcommand{\thefigure}{S\arabic{figure}}
\renewcommand{\thealgorithm}{S\arabic{algorithm}}

\begin{bibunit}
\begin{center}
{\Large Supplementary material for ``Robust Bayesian Predictive Model Selection Using Bregman Divergence''}
\end{center}

\thispagestyle{empty}

\section{Bregman centroid and information identity}\label{supp:bd}
Bregman divergences have a mean property that is important for our predictive construction. Suppose a collection of densities $p_1, \ldots, p_K$ is summarized by a single representative density. For an arbitrary divergence, the linear mixture density $\bar p = \sum_k w_kp_k$ need not be the Bayes action. For Bregman divergences, however, the ordinary mixture is exactly the right Bregman centroid: it minimizes the weighted average Bregman divergence from the component densities to a single representative density. This identity is what allows the ordinary posterior predictive mixture to retain a Bayes-action interpretation under the same Bregman divergence used for updating and evaluation.

\cref{prop:bd_identity} is the density-level form of the Bregman information identity. \citet{banerjee2005clustering} use the corresponding finite-dimensional identity to define Bregman information and to develop centroid-based Bregman geometry. \citet{Frigyik2008} extend BD to functional settings, including functions and distributions. More recently, \citet{Chodrow2025} shows that the agreement between Jensen-gap information and divergence-from-centroid information characterizes BD: they are the divergences for which these two notions of information agree for arbitrary weighted collections.

\begin{proposition}\label{prop:bd_identity}
    Let $p_1,\ldots,p_J$ be densities with respect to $\lambda$, and let $w_1,\ldots,w_J \ge 0$ satisfy $\sum_{j=1}^J w_j = 1$. Define the mixture density $\bar p(x) = \sum_{j=1}^J w_jp_j(x)$. Assume that the divergences below are finite for all $j=1,\ldots,J$. Then for any density $r$,
    \begin{equation*}
        \sum_{j=1}^J w_j D^\phi(p_j \| r) = \sum_{j=1}^J w_j D^\phi(p_j \| \bar p) + D^\phi(\bar p \| r).
    \end{equation*}
    Consequently,
    \begin{equation*}
        \bar p \in \argmin_r \sum_{j=1}^J w_j D^\phi(p_j \| r).
    \end{equation*}
    If $\phi$ is strictly convex, the minimizer is unique up to equality as a density. Moreover, 
    \begin{equation*}
        \sum_{j=1}^J w_j D^\phi(p_j \| \bar p) = \sum_{j=1}^J w_j \int_\cX \phi(p_j(x)) \lambda (dx) - \int_\cX \phi(\bar p(x)) \lambda (dx).
    \end{equation*}
    Thus, the mean Bregman divergence from the densities to their centroid equals the Jensen-gap of the convex functional $p \mapsto \int_\cX \phi(p(x)) \lambda(dx)$, i.e., the difference between the weighted average of this functional and its value at the mixture density $\bar p$.
\end{proposition}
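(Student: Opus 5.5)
The proof is a direct pointwise computation with the definition of $D^\phi$ in \cref{eq:bregman_div}, followed by integration over $\cX$ and then summation over $j$ with weights $w_j$. The key observation is that the terms that are nonlinear in $p_j$ are $\phi(p_j(x))$. These terms do not depend on the second argument of the divergence, so they cancel when we take differences. Everything else is affine in $p_j(x)$, and averaging in $j$ therefore turns $p_j$ into $\bar p$.

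First fix $x$. Write the integrand of $D^\phi(p_j\|r)$ minus the integrand of $D^\phi(p_j\|\bar p)$:
\[
\phi(\bar p(x)) - \phi(r(x)) - \phi'(r(x))\big(p_j(x)-r(x)\big) + \phi'(\bar p(x))\big(p_j(x)-\bar p(x)\big).
\]
The $\phi(p_j(x))$ terms have cancelled. Multiply by $w_j$ and sum over $j$, using $\sum_j w_j = 1$ and $\sum_j w_j p_j(x) = \bar p(x)$. The last term vanishes, and the remaining terms become
\[
\phi(\bar p(x)) - \phi(r(x)) - \phi'(r(x))\big(\bar p(x)-r(x)\big),
\]
which is exactly the integrand of $D^\phi(\bar p\|r)$.

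Next integrate over $x$. The finiteness assumptions let us split the integrals and interchange them with the finite sum over $j$. This gives the three-term identity. The argmin statement then follows from $D^\phi(\bar p\|r)\ge 0$, which holds because the integrand is nonnegative by convexity of $\phi$. For uniqueness under strict convexity: the integrand of $D^\phi(\bar p\|r)$ is strictly positive wherever $\bar p(x)\neq r(x)$. So if $D^\phi(\bar p\|r)=0$, then $r=\bar p$ $\lambda$-a.e.

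For the Jensen-gap formula, expand $\sum_j w_j D^\phi(p_j\|\bar p)$ pointwise:
\[
\sum_j w_j\phi(p_j) - \phi(\bar p) - \phi'(\bar p)\Big(\sum_j w_j p_j - \bar p\Big).
\]
The linear term is identically zero, and integrating gives the claim.

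The only delicate step is the bookkeeping of integrability. Splitting $\int(\cdots)$ into separate integrals of $\phi(p_j)$, $\phi(\bar p)$ and so on requires each piece to be finite, and the hypothesis only asserts that the divergences are finite. To avoid this, I would keep the combined integrands together. The pointwise identity above is an exact equality of integrands, so it integrates directly: the left side is a finite combination of integrable functions, hence integrable. The Jensen-gap expression is then interpreted whenever $\int\phi(p_j)\,d\lambda$ is finite, which the paper implicitly assumes. Finally, \cref{prop:breg_cent} follows by replacing the finite weights with the posterior measure and the sum with an integral over $\Theta_k$, using Fubini.
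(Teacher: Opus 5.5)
Your proposal is correct and follows the paper's proof essentially line by line. You average the pointwise difference $D^\phi(p_j\|r)-D^\phi(p_j\|\bar p)$ over $j$ so that the term linear in $p_j-\bar p$ vanishes, leaving $D^\phi(\bar p\|r)\ge 0$ for the argmin; the Jensen-gap formula comes from the same cancellation, and your extra care in keeping the combined integrands together and stating uniqueness $\lambda$-a.e.\ only makes explicit bookkeeping the paper leaves implicit.
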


\begin{proof}
    By the definition of $D^\phi$, we have
    \begin{align}
        &D^\phi(p_j \| r) - D^\phi(p_j \| \bar p) \\
        &= \int_\cX \Big( \phi(\bar p(x)) - \phi(r(x)) - \phi'(r(x)) \big(p_j(x) - r(x) \big) + \phi'(\bar p(x)) \big(p_j(x) - \bar p(x) \big) \Big) \lambda(dx). \nonumber
    \end{align}
    Multiply by $w_j$ and sum over $j$. Since $\bar p(x) = \sum_{j=1}^J w_jp_j(x)$, we have $\sum_{j=1}^J w_j \big(p_j(x) - \bar p(x) \big) =0$. Therefore, 
    \begin{align}
        \sum_{j=1}^J w_j \Big(D^\phi(p_j \| r) - D^\phi(p_j \| \bar p) \Big) &= \int_\cX \Big( \phi(\bar p(x)) - \phi(r(x)) - \phi'(r(x)) \big(\bar p(x) - r(x) \big) \Big) \lambda(dx)  \nonumber \\
        & = D^\phi(\bar p \| r).
    \end{align}
    Rearranging yields 
    \begin{equation}
        \sum_{j=1}^J w_j D^\phi(p_j \| r) = \sum_{j=1}^J w_j D^\phi(p_j \| \bar p) + D^\phi(\bar p \| r).
    \end{equation}
    Since $D^\phi(\bar p \| r) \ge 0$, $\bar p \in \argmin_r \sum_{j=1}^J w_j D^\phi(p_j \| r)$. Strict convexity of $\phi$ gives uniqueness as a density. 

    For the Jensen-gap identity, use the definition of $D^\phi$ again:
    \begin{equation}
        \sum_{j=1}^J w_j D^\phi(p_j \| \bar p) = \sum_{j=1}^J w_j \int_\cX \Big( \phi(p_j(x)) - \phi(\bar p(x)) - \phi'(\bar p(x)) \big(p_j(x) - \bar p(x) \big) \Big) \lambda (dx).
    \end{equation}
    The linear term vanishes because $\sum_{j=1}^J w_j \big(p_j(x) - \bar p(x) \big) =0$. Therefore,
    \begin{equation}
        \sum_{j=1}^J w_j D^\phi(p_j \| \bar p) = \int_\cX \bigg(\sum_{j=1}^J w_j \phi(p_j(x)) - \phi(\bar p(x)) \bigg) \lambda (dx),
    \end{equation}
    which is the stated Jensen gap. This completes the proof. 
\end{proof}

\paragraph{Proof of \cref{prop:breg_cent}}\label{supp:proof:breg_cent}
\begin{proof}
    Write $p_{k,n}^\phi$ simply as $p_{k,n}^\phi$. Using the definition of $D^\phi$, for any density $r$,
    \begin{align}\label{D_diff_1}
        & D^\phi \big(f_k(\cdot; \theta_k) \| r \big) - D^\phi \big(f_k(\cdot; \theta_k) \| p_{k,n}^\phi \big) \\
        &= \int_\cX \Big( \phi(p_{k,n}^\phi(x)) - \phi(r(x)) - \phi'(r(x)) \big(f_k(x; \theta_k) - r(x) \big) + \phi'(p_{k,n}^\phi(x)) \big(f_k(x;\theta_k) -p_{k,n}^\phi(x) \big)  \Big) \lambda(dx). \nonumber
    \end{align}
    Integrate \cref{D_diff_1} with respect to $\pi_{k,n}^\phi(\theta_k \mid x_{1:n}) d\theta_k$. By definition of $p_{k,n}^\phi$, we have $\int_{\Theta_k} \big(f_k(x; \theta_k) - p_{k,n}^\phi(x)\big) \pi_{k,n}^\phi(\theta_k \mid x_{1:n}) d\theta_k = 0$. Hence, after posterior averaging,
    \begin{align}
        &\int_{\Theta_k} \Big(D^\phi \big(f_k(\cdot; \theta_k) \| r \big) - D^\phi \big(f_k(\cdot; \theta_k) \| p_{k,n}^\phi \big) \Big) \pi_{k,n}^\phi(\theta_k \mid x_{1:n}) d\theta_k \\
        & =\int_\cX \Big( \phi(p_{k,n}^\phi(x)) - \phi(r(x)) - \phi'(r(x)) \big(p_{k,n}^\phi(x) - r(x) \big) \Big) \lambda(dx) \\
        &= D^\phi(p_{k,n}^\phi \| r ).
    \end{align}
    Rearranging gives the claim. Since $D^\phi(p_{k,n}^\phi \| r ) \ge 0$, the posterior average Bregman divergence is minimized at $r = p_{k,n}^\phi$. Strict convexity gives uniqueness as a predictive density. The LOO version follows by the same argument with $x_{-i}$ in place of $x_{1:n}$.
\end{proof}

\section{Additional details on the $\beta$-divergence}\label{supp:proof:meth:beta}

In this section, we first give a complementary decision-theoretic interpretation of the $\beta$ posterior predictive used in g-ELPD. Later, we prove the results given in \cref{sec:meth:beta}.

\begin{corollary}\label{cor:beta_jensen}
    Let $\phi_\beta(u) = u^\beta / \big( \beta (\beta-1) \big)$ with $1 \le \beta \le 2$, with $\beta=1$ defined by the log-score limit. Fix $M_k$ and let $p_{k,n}^\beta(\cdot \mid x_{1:n})$ be the $\beta$ posterior predictive density. Then,
    \begin{equation*}
        D^\beta \big( g \, \| \, p_{k,n}^\beta (\cdot \mid x_{1:n}) \big) \le \int_{\Theta_k} D^\beta \big( g \, \| \, f_k(\cdot; \theta_k) \big) \pi_{k,n}^\beta (\theta_k \mid x_{1:n}) d\theta_k.
    \end{equation*}
    Equivalently, the population expected $\beta$ score of $p_{k,n}^\beta(\cdot \mid x_{1:n})$ is at least as large as the posterior average of the population expected $\beta$ scores of $f_k(\cdot; \theta_k)$. The same statement holds for leave-one-out posterior predictive $p_{k,n-1}^\beta (\cdot \mid x_{-i})$. 
\end{corollary}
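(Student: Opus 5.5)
The plan is to prove the inequality pointwise in $x$ and then integrate. Proposition~\ref{prop:bd_identity} does not apply here, because it averages over the first argument of $D^\phi$, whereas the statement averages over the second (predictive) argument. Instead, I will use the fact that $D^\beta(g\,\|\,\cdot)$ is convex in the predictive density when $1\le\beta\le 2$. The posterior predictive $p^\beta_{k,n}(x\mid x_{1:n})=\int_{\Theta_k} f_k(x;\theta_k)\,\pi^\beta_{k,n}(\theta_k\mid x_{1:n})\,d\theta_k$ is a posterior average of $f_k(x;\theta_k)$, so Jensen's inequality applied to a convex integrand gives the result.

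\textbf{Step 1 (pointwise convexity).} From \cref{eq:beta_div}, for fixed $x$ write the integrand as
\[
\psi_{a}(p)=\frac{a^\beta}{\beta(\beta-1)}+\frac{p^\beta}{\beta}-\frac{a\,p^{\beta-1}}{\beta-1},\qquad a=g(x)\ge 0,\ p\ge 0 .
\]
A direct computation gives
\[
\psi_a''(p)=(\beta-1)p^{\beta-2}-(\beta-2)\,a\,p^{\beta-3}=p^{\beta-3}\big((\beta-1)p+(2-\beta)a\big),
\]
which is nonnegative for $p>0$ when $1<\beta\le 2$. Hence $\psi_a$ is convex on $(0,\infty)$, and by continuity on $[0,\infty)$ (with the value $+\infty$ allowed at $p=0$ when $a>0$ and $\beta<2$ is not an issue for $\beta>1$, since $p^{\beta-1}\to 0$). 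For the limiting case $\beta=1$ the integrand is $a\log a - a\log p + p - a$, which is convex in $p$ because $-\log$ is convex. This is the only place the restriction $\beta\le 2$ enters: for $\beta>2$ the factor $(2-\beta)a$ can make $\psi_a''$ negative at small $p$.

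\textbf{Step 2 (Jensen in $\theta_k$).} For each $x$, apply Jensen's inequality to the probability measure $\pi^\beta_{k,n}(\cdot\mid x_{1:n})$ and the convex function $\psi_{g(x)}$:
\[
\psi_{g(x)}\big(p^\beta_{k,n}(x\mid x_{1:n})\big)\le \int_{\Theta_k}\psi_{g(x)}\big(f_k(x;\theta_k)\big)\,\pi^\beta_{k,n}(\theta_k\mid x_{1:n})\,d\theta_k .
\]
Integrating over $x$ with respect to $\lambda$ and swapping the order of integration on the right yields the stated inequality. The equivalent statement in terms of expected scores then follows from $D^\beta(g\,\|\,p)=H(g)-S^\beta(p,g)$, since $H(g)$ does not depend on $p$. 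The leave-one-out version is identical, with $\pi^\beta_{k,n-1}(\cdot\mid x_{-i})$ in place of the full-data posterior.

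\textbf{Main obstacle (interchanging integrals).} The step I expect to need the most care is justifying the Fubini/Tonelli swap in Step 2 when $\psi$ is not of one sign in an obvious integrable way. In fact $\psi_a(p)\ge 0$ for every $a,p\ge 0$: it is the pointwise Bregman gap of the convex function $\phi_\beta$, i.e.\ a convex function minus its tangent line at $p$. Nonnegativity lets Tonelli apply directly, with both sides possibly $+\infty$, under the standing assumption that the divergences are finite. For the pointwise Jensen step with $\beta=1$, I would note that the integrand is nonnegative and lower semicontinuous, so Jensen holds in the extended-real sense without further integrability conditions.
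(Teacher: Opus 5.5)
Your proof is correct and follows the paper's argument: the same second-derivative computation $p^{\beta-3}\big((\beta-1)p+(2-\beta)g\big)\ge 0$ gives convexity of $D^\beta(g\,\|\,\cdot)$ in its second argument, and Jensen's inequality over the posterior then finishes. Two details the paper leaves implicit are handled in your version: you justify the interchange by Tonelli, using nonnegativity of the pointwise Bregman gap, and you treat $\beta=1$ directly through convexity of $-\log$ rather than by passing to the limit.
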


\paragraph{Proof of \cref{cor:beta_jensen}}\label{supp:proof:beta_jensen}
\begin{proof}
    We prove the result for $1 < \beta \le 2$. The case $\beta=1$ follows by the usual KL limit. For fixed $g(x)$, define
    \begin{equation}
        a(p) = \frac{p^\beta}{\beta} - \frac{g p^{\beta-1}}{\beta-1}.
    \end{equation}
    Since $a$ is continuous at $p=0$ for $\beta >1$, this convexity extends to $[0,\infty)$ by continuity. Then 
    \begin{equation}
        a''(p) = p^{\beta-3} \big((\beta-1)p + (2-\beta)g \big).
    \end{equation}
    For $1 < \beta \le 2$, $p>0$, and $g \ge 0$, this second derivative is nonnegative. Hence $p \mapsto D^\beta(g \| p)$ is convex in its second argument. By Jensen's inequality, we obtain
    \begin{equation}
        D^\beta \bigg( g \| \int_{\Theta_k} f_k(\cdot; \theta_k) \pi_{k,n}^\beta(\theta_k \mid x_{1:n}) d\theta_k \bigg) \le \int_{\Theta_k} D^\beta \big(g \| f_k(\cdot; \theta_k) \big) \pi_{k,n}^\beta(\theta_k \mid x_{1:n}) d\theta_k.
    \end{equation}
    This proves the inequality.
\end{proof}

\cref{cor:beta_jensen} clarifies the role of posterior averaging in the $\beta$ case: for $1\le\beta\le2$, posterior averaging does not increase the population $\beta$-divergence relative to the posterior average plug-in divergence. Thus, the posterior predictive density evaluated by g-ELPD preserves the divergence-based objective that defines the predictive comparison.

\cref{prop:beta_robust} shows that for $\beta>1$, the contribution of any single observation to a pairwise $\beta$-score comparison is uniformly bounded whenever the two predictive densities are bounded. We now provide the proof for this result. 

\paragraph{Proof of \cref{prop:beta_robust}}\label{supp:proof:beta_robust}
\begin{proof}
    For $1 < \beta < \infty$, we have
    \begin{align}
        \Delta_{p,q}^\beta(x) &= S^\beta(p,x) - S^\beta(q,x) \\
        &= \frac{p(x)^{\beta-1} - q(x)^{\beta-1}}{\beta-1} - \frac{1}{\beta} \int_{\cX} \big(p(t)^\beta - q(t)^\beta\big) \lambda(dt).
    \end{align}
    Taking absolute values and applying the triangle inequality gives
    \begin{equation}
        \big \lvert \Delta_{p,q}^\beta(x) \big\rvert \le \frac{\big\lvert p(x)^{\beta-1} - q(x)^{\beta-1} \big\rvert}{\beta-1} + \frac{1}{\beta} \bigg\lvert \int_{\cX} \big(p(t)^\beta - q(t)^\beta\big) \lambda(dt) \bigg\rvert.
    \end{equation}
    Since $0 \le p(x), q(x) \le B$, 
    \begin{equation}
        \big\lvert p(x)^{\beta-1} - q(x)^{\beta-1} \big\rvert \le p(x)^{\beta-1} + q(x)^{\beta-1} \le 2 B^{\beta-1}.
    \end{equation}
    Also,
    \begin{equation}
        \frac{1}{\beta} \bigg\lvert \int_{\cX} \big(p(t)^\beta - q(t)^\beta\big) \lambda(dt) \bigg\rvert \le \frac{1}{\beta} \int_\cX \big\lvert p(t)^\beta - q(t)^\beta\big\rvert \lambda(dt).
    \end{equation}
    Combining these inequalities yields the result.
\end{proof}

We next provide the proof of \cref{cor:contam_bound}, which shows that $\epsilon$-contamination can change the population pairwise $\beta$-score comparison by at most order $\epsilon$.

\paragraph{Proof of \cref{cor:contam_bound}}\label{supp:proof:contam_bound}
\begin{proof}
    Write $\Delta(X) := \Delta_{p,q}^\beta (X)$. 
    By the definition of $g_\epsilon = (1-\epsilon)g + \epsilon h$, we have
    \begin{equation}
        \E_{g_\epsilon}\big( \Delta(X) \big) = (1-\epsilon) \E_{g}\big(\Delta(X) \big) + \epsilon \, \E_h \big(\Delta(X) \big).
    \end{equation}
    Therefore,
    \begin{equation}
        \E_{g_\epsilon}\big( \Delta (X) \big) - \E_{g}\big(\Delta (X) \big) = 
        \epsilon \, \Big( \E_h \big(\Delta (X) \big) - \E_g \big(\Delta (X) \big)  \Big).
    \end{equation}
    Taking absolute values and applying the triangle inequality, 
    \begin{align}
        \Big \lvert \E_{g_\epsilon}\big( \Delta (X) \big) - \E_{g}\big(\Delta (X) \big) \Big \rvert &= \epsilon \Big\lvert \E_h \big(\Delta(X)\big) - \E_g \big(\Delta(X)\big) \Big\rvert\\
        &\le 
        \epsilon \, \Big( \Big\lvert \E_h \big(\Delta (X) \big) \Big \rvert + \Big \lvert \E_g \big(\Delta (X) \big) \Big \rvert   \Big).
    \end{align}
    By \cref{prop:beta_robust}, $ \big\lvert \Delta (X) \big\rvert \le C^\beta(p,q)$ for every $x$. Hence, $\Big\lvert \E_h \big(\Delta (X) \big) \Big \rvert \le \E_h\big\lvert \Delta(X)\big\rvert \le C_\beta(p,q)$, and similarly, $\Big\lvert \E_g \big(\Delta (X) \big) \Big \rvert \le C_\beta(p,q)$. Combining these gives
    \begin{equation*}
        \Big \lvert \E_{g_\epsilon}\big( \Delta (X) \big) - \E_{g}\big(\Delta (X) \big) \Big \rvert \le 2 \epsilon C_\beta(p,q).
    \end{equation*}
    Finally, if $\E_g \big(\Delta (X) \big) > 2\epsilon C_\beta(p,q)$, then 
    \begin{equation*}
        \E_{g_\epsilon}\big( \Delta (X) \big) \ge \E_{g}\big(\Delta (X) \big) - \Big\lvert \E_{g_\epsilon} \big(\Delta(X)\big) - \E_g \big(\Delta(X)\big) \Big\rvert > 0.
    \end{equation*}
    This proves the corollary.
\end{proof}

\section{Numerical Example: Negative-Binomial versus Poisson Regression}\label{supp:sim}
In this simulation, we illustrate the proposed criterion targets a score-dependent predictive pseudo-truth under misspecification. We generate data from a negative binomial regression model with three predictors,
\begin{equation*}
    Y_i \sim \mathrm{NegBin}(\mu_i^*, k^*), \qquad \mu_i^* = \exp(\gamma_0^* + \gamma_1^*X_{i1} + \gamma_2^*X_{i2} + \gamma_3^*X_{i3}),
\end{equation*}
where $k^* = 1$ is the true dispersion parameter, $\mu_i^*$ is the true mean, and $\gamma^* = (\gamma_1^*, \gamma_2^*, \gamma_3^*) = (0.5, -1.0, 1.5)$ are the true coefficients. We compare two misspecified models. The first is a negative binomial regression that omits $X_3$,
\begin{equation*}
    Y_i \sim \mathrm{NegBin}( \mu_{1i},1), \qquad \mu_{1i}=\exp(\gamma_{10}+\gamma_{11}X_{i1}+\gamma_{12}X_{i2}),
\end{equation*}
and the second is a Poisson regression that includes all three predictors,
\begin{equation*}
    Y_i \sim \mathrm{Pois}(\lambda_{2i}), \qquad \lambda_{2i}=\exp(\gamma_{20}+\gamma_{21}X_{i1} +\gamma_{22}X_{i2}+\gamma_{23}X_{i3}).
\end{equation*}
Hence, neither candidate model is correctly specified. The negative binomial model can accommodate overdispersion and heavier tails, but it misses part of the mean structure by omitting $X_3$. The Poisson model includes the full mean structure, but it cannot represent overdispersion because its variance is tied to its mean. 

For each $\beta$, we update each model using the corresponding $\beta$-score generalized posterior and evaluate the same $\beta$-score LOO utility. The posterior computation is implemented in Stan \citep{carpenter2017stan}, and g-ELPD is approximated using the PSIS approximation procedure described in \cref{sec:comp}. We consider sample sizes $n \in \{50,200,500,2000\}$ and $\beta \in \{1, 1.2, 1.4\}$. For each configuration, we repeat the experiment $100$ times and select the model with the larger estimated g-ELPD. 

\begin{figure}
    \centering
    \includegraphics[width=0.6\linewidth]{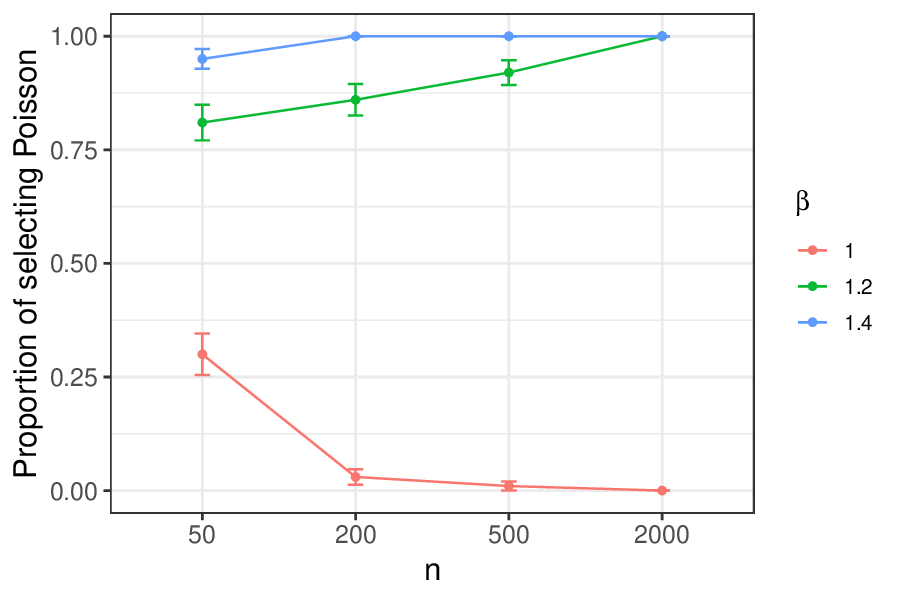}
    \captionsetup{font=small}
    \caption{$\beta$-score model selection under misspecification. Both candidate models are misspecified: the negative-binomial omits $X_3$, whereas the Poisson model includes all predictors but cannot represent overdispersion. Points show the proportion of replications in which the Poisson model has larger g-ELPD. Error bars are approximate binomial standard errors.}
    \label{fig:nb_pois}
\end{figure}
\cref{fig:nb_pois} shows the proportion of replications in which the Poisson model is selected. Under the standard log score ($\beta = 1.0$), the Poisson selection proportion decreases as $n$ grows. This reflects the log score's sensitivity to tail and dispersion mismatch: with larger samples, the overdispersion in the data becomes increasingly visible, favoring the negative binomial candidate despite its omitted covariate. In contrast, as $\beta$ increases to $1.2$ and $1.4$, tail discrepancies receive less weight in the predictive comparison. The selection target therefore shifts toward the model that better captures the covariate-driven mean structure, and the Poisson model is selected more often. This shows that different scoring rules can select different predictive pseudo-truths under misspecification.

\section{Proofs of \cref{sec:theory}}\label{supp:proofs}
In this section, we prove the results in \cref{sec:theory} under \cref{condition1}. Recall the notation in \cref{sec:pre:notation}. Our arguments are carried out for the Bregman score in \cref{def:bregman_score}, so the verification steps differ from the usual log-likelihood setting: in particular, the score contains the integral term $C(\theta)$ and the pointwise term $\phi'(f(x;\theta))$, and we need to control both uniformly in $\theta$.

We first obtain key technical properties of the Bregman score: it is finite, uniformly bounded in $x$ and $\theta$, and continuous in $\theta$.

\begin{lemma}\label{supp:proof:lemma0}(Uniform boundedness of the score)
    Assume \cref{condition1}(b)--(c). Let $C(\theta) = \int_\cX m(f(t; \theta)) \lambda(dt)$, where $m(u) := \phi'(u)u - \phi(u)$. Recall the definition of Bregman score $S^\phi$ in \cref{def:bregman_score}. Then,
    \begin{enumerate}
        \item[(a)] For every $\theta \in \Theta$, $C(\theta)$ is absolutely integrable and finite, and $\big|C(\theta)\big| \le L_m$, where $L_m$ is a Lipschitz constant of $m$ on $[0, C_f]$.
        
        \item[(b)] There exists a finite constant $B < \infty$ such that for all $\theta \in \Theta$ and all $x \in \cX$, $\big| S^\phi\big( f(\cdot; \theta), x\big) \big| \le B$. Consequently, $\big| h_n(\theta) \big| \le B$ for all $n, \theta$.

        \item[(c)] If, in addition, \cref{condition1}(d) holds, then $C: \Theta \to \bbR$ is continuous. Hence $C$ is uniformly continuous on $\Theta$. Moreover, for each fixed $x \in \cX$, the map $\theta \mapsto S^\phi \big( f(\cdot; \theta), x\big)$ is continuous on $\Theta$.
    \end{enumerate}
\end{lemma}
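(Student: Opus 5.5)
The plan is to exploit two facts: $m(0)=0$, and every $f(\cdot;\theta)$ is a probability density with values in $[0,C_f]$.

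\emph{Part (a).} By \cref{condition1}(c), $\phi(0)=0$, so $m(0)=\phi'(0)\cdot 0-\phi(0)=0$. The Lipschitz property of $m$ on $[0,C_f]$ then gives $|m(u)|=|m(u)-m(0)|\le L_m u$ for $u\in[0,C_f]$. By \cref{condition1}(b), $f(t;\theta)\in[0,C_f]$, so
\[
\int_\cX |m(f(t;\theta))|\,\lambda(dt)\le L_m\int_\cX f(t;\theta)\,\lambda(dt)=L_m .
\]
This shows that $m(f(\cdot;\theta))$ is absolutely integrable and that $|C(\theta)|\le L_m$.

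\emph{Part (b).} Since $\phi'$ is continuous on the compact interval $[0,C_f]$, we have $M_\phi:=\sup_{u\in[0,C_f]}|\phi'(u)|<\infty$. Hence
\[
|S^\phi(f(\cdot;\theta),x)|\le |\phi'(f(x;\theta))|+|C(\theta)|\le M_\phi+L_m=:B ,
\]
uniformly in $x$ and $\theta$. The bound $|h_n(\theta)|\le B$ follows because $h_n$ is an average of such terms.

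\emph{Part (c).} This is the only step needing care. The Lipschitz bound gives
\[
|C(\theta)-C(\theta')|\le L_m\int_\cX|f(t;\theta)-f(t;\theta')|\,\lambda(dt).
\]
However, the uniform (sup-norm) continuity in \cref{condition1}(d) does not by itself control the $L^1(\lambda)$ distance when $\lambda(\cX)=\infty$. To close this gap I would argue sequentially.
\begin{itemize}
\item Take $\theta_j\to\theta$. By (d), $f(\cdot;\theta_j)\to f(\cdot;\theta)$ pointwise, indeed uniformly.
\item All these functions are probability densities with respect to $\lambda$, so Scheffé's lemma yields $\int_\cX|f(\cdot;\theta_j)-f(\cdot;\theta)|\,d\lambda\to 0$.
\item Therefore $C(\theta_j)\to C(\theta)$, and $C$ is continuous on $\Theta$.
\end{itemize}
Uniform continuity of $C$ then follows from compactness of $\Theta$ by the Heine--Cantor theorem.

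For fixed $x$, the map $\theta\mapsto\phi'(f(x;\theta))$ is continuous, as the composition of the continuous map $\theta\mapsto f(x;\theta)$ (from (d)) with $\phi'$, which is continuous on $[0,C_f]$. Combined with the continuity of $C$, this gives continuity of $\theta\mapsto S^\phi(f(\cdot;\theta),x)$.

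The main obstacle is this passage from sup-norm continuity to $L^1$ continuity in part (c); Scheffé's lemma is the tool I would rely on to handle it.
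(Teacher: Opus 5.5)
Your proposal is correct and follows essentially the same route as the paper: $m(0)=0$ with the Lipschitz bound and integration of the density for (a), $\sup_{[0,C_f]}|\phi'|+L_m$ for (b), and for (c) a sequential Scheffé's-lemma argument followed by compactness of $\Theta$ and continuity of $\phi'$ composed with $\theta\mapsto f(x;\theta)$. Your remark that sup-norm continuity alone gives no $L^1$ control when $\lambda(\cX)=\infty$ is also why the paper invokes Scheffé's lemma at that step.
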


\begin{proof}[Proof of \cref{supp:proof:lemma0}]
\begin{itemize}
    \item[(a)] By \cref{condition1}(c), $m$ is Lipschitz on $[0, C_f]$, i.e., there exists $L_m < \infty$ such that $\big| m(u) - m(v) \big| \le L_m |u-v|$ for all $u,v \in [0, C_f]$. Also, $\phi$ is continuously differentiable on $[0, C_f]$ and $\phi(0)=0$, so $\phi'$ exists at $0$ and is finite, and hence $m(0) = \phi'(0) \cdot 0 - \phi(0) = 0$. Thus, for any $u \in [0, C_f]$, 
    \begin{equation}
        \big| m(u) \big| = \big|m(u) - m(0) \big| \le L_m u.
    \end{equation}
    Fix any $\theta \in \Theta$. By \cref{condition1}(b), $0 \le f(t; \theta) \le C_f$ for all $t$, so the bound applies with $u = f(t; \theta)$. Using that $f(\cdot; \theta)$ is a density with respect to $\lambda$,
    \begin{equation}\label{eq:m_finite}
        \int_\cX \big| m(f(t;\theta)) \big| \lambda(dt) \le L_m \int_\cX f(t;\theta) \lambda(dt) = L_m < \infty.
    \end{equation}
    Thus, the integral defining $C(\theta)$ is absolutely integrable, hence well-defined and finite. Moreover, 
    \begin{equation}
        \big| C(\theta) \big| \le \int_\cX \big| m(f(t;\theta)) \big| \lambda(dt) \le L_m.
    \end{equation}
    This proves part (a).
    
    \item[(b)] By \cref{def:bregman_score}, the Bregman score can be written as $S^\phi\big( f(\cdot; \theta), x\big) = \phi'(f(x;\theta)) - C(\theta)$. By \cref{condition1}(b), $f(x; \theta) \in [0, C_f]$ for all $x$. Also, by \cref{condition1}(c), $\phi'$ is continuous on $[0, C_f]$, and since $[0, C_f]$ is compact, $\phi'$ is bounded there, i.e., $\sup_{u \in [0,C_f]} \big| \phi'(u) \big| < \infty$. Meanwhile, by (a), we have $\big|C(\theta)\big| \le L_m$. Therefore, for all $\theta$ and $x$,
    \begin{equation}
        \big| S^{\phi}\big(f(\cdot; \theta), x\big) \big| = \big| \phi'(f(x;\theta)) - C(\theta) \big| \le \sup_{u \in [0,C_f]} \big| \phi'(u) \big| + L_m < \infty. 
    \end{equation}
    So the claimed uniform bound holds with $B := \sup_{u \in [0,C_f]} \big| \phi'(u) \big| + L_m$. Finally, we obtain
    \begin{equation}
        \big| h_n(\theta) \big| \le \frac{1}{n} \sum_{i=1}^n \big| S^\phi\big(f(\cdot; \theta), X_i \big) \big| \le B.
    \end{equation}
    This proves part (b).    

    \item[(c)] Take any sequence $\theta_j \to \theta$ in $\Theta$. By \cref{condition1}(d), $\sup_{x \in \cX} \big| f(x; \theta_j) - f(x; \theta) \big| \to 0$, hence for every $t \in \cX$,  $f(t; \theta_j) \to f(t; \theta)$. Since each $f(\cdot; \theta_j)$ and $f(\cdot; \theta)$ is a density, Scheffé's lemma (\citealt{williams1991probability}, 5.10) implies 
    \begin{equation}
        \int_\cX \big| f(t; \theta_j) - f(t; \theta) \big| \lambda(dt) \to 0.
    \end{equation}
    Using the Lipschitz property of $m$,
    \begin{align}
        \big| C(\theta_j) - C(\theta) \big| & \le \int_\cX \big| m(f(t; \theta_j)) - m(f(t;\theta)) \big| \lambda(dt) \\
                                            & \le L_m \int_\cX \big| f(t; \theta_j) - f(t; \theta) \big| \lambda(dt) \to 0.
    \end{align}
    Thus $C$ is continuous on $\Theta$. Since $\Theta$ is compact, $C$ is uniformly continuous on $\Theta$.

    Finally, fix $x \in \cX$. Since $f(x; \theta_j) \to f(x; \theta)$ and $\phi'$ is continuous on $[0, C_f]$ (by \cref{condition1}(c)), we have $\phi'(f(x;\theta_j)) \to \phi'(f(x;\theta))$. Together with $C(\theta_j) \to C(\theta)$, this yields $S^\phi \big( f(\cdot; \theta_j), x\big) \to S^\phi \big( f(\cdot; \theta), x\big)$. Hence $\theta \mapsto S^\phi \big( f(\cdot; \theta), x\big)$ is continuous. This completes the proof. 
\end{itemize}
\end{proof}

\cref{supp:proof:lemma1} provides the equicontinuity of the empirical average score $h_n(\theta)$ defined in \cref{condition1}.

\begin{lemma}\label{supp:proof:lemma1}(Equicontinuity of sample score)
    Assume \cref{condition1} (b)--(d). Then, $\{h_n\}_{n \ge 1}$ is equicontinuous on $\Theta$; i.e., for every $\epsilon >0$ there exists $\delta>0$ such that for all $n \in \bbN$ and all $\theta, \theta' \in \Theta$, if $\|\theta - \theta'\| < \delta$ then $\big|h_n(\theta) - h_n(\theta')\big| < \epsilon$.
\end{lemma}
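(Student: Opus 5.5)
The plan is to show that both pieces of the score, the pointwise term $\phi'(f(x;\theta))$ and the normalizing term $C(\theta)$, change by a uniformly small amount in $\theta$, with bounds that depend on neither $x$ nor the sample. Once that holds, equicontinuity of $h_n$ is immediate, since $h_n$ is an average of such scores.

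The first step is to write
\begin{equation*}
    S^\phi\big(f(\cdot;\theta),x\big) - S^\phi\big(f(\cdot;\theta'),x\big) = \phi'(f(x;\theta)) - \phi'(f(x;\theta')) - \big(C(\theta)-C(\theta')\big).
\end{equation*}
From this I get
\begin{equation*}
    |h_n(\theta)-h_n(\theta')| \le \sup_{x\in\cX}\big|\phi'(f(x;\theta))-\phi'(f(x;\theta'))\big| + |C(\theta)-C(\theta')|.
\end{equation*}
The right-hand side does not involve $n$ or the data. It therefore suffices to make each of the two terms smaller than $\epsilon/2$ whenever $\|\theta-\theta'\|<\delta$.

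For the pointwise term, $\phi'$ is continuous on the compact interval $[0,C_f]$ by \cref{condition1}(c), so it is uniformly continuous there. Given $\epsilon$, choose $\eta>0$ such that $|u-v|<\eta$ with $u,v\in[0,C_f]$ implies $|\phi'(u)-\phi'(v)|<\epsilon/2$. Then \cref{condition1}(d) gives $\delta_1$ such that $\|\theta-\theta'\|<\delta_1$ implies $\sup_x|f(x;\theta)-f(x;\theta')|<\eta$. By \cref{condition1}(b), all values of $f$ lie in $[0,C_f]$, so the pointwise term is below $\epsilon/2$ uniformly in $x$.

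For the normalizing term, \cref{supp:proof:lemma0}(c) states that $C$ is continuous on $\Theta$. Because $\Theta$ is compact, $C$ is uniformly continuous, which gives $\delta_2$ with $|C(\theta)-C(\theta')|<\epsilon/2$ whenever $\|\theta-\theta'\|<\delta_2$. Taking $\delta=\min(\delta_1,\delta_2)$ completes the argument.

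The only delicate point is this uniform control of $C$. A sup-norm bound on $f$ does not by itself control $\int|f(\cdot;\theta)-f(\cdot;\theta')|\,d\lambda$ when $\lambda$ is infinite. The way around it is to rely on \cref{supp:proof:lemma0}(c), which gets continuity of $C$ through Scheffé's lemma and the Lipschitz property of $m$, and then use compactness of $\Theta$ to turn continuity into uniform continuity.
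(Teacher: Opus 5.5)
Your proposal is correct and follows the paper's proof essentially step for step. The paper also splits $|h_n(\theta)-h_n(\theta')|$ into two pieces: the $\phi'$ term, controlled by uniform continuity of $\phi'$ on $[0,C_f]$ together with Condition~(d), and the normalizing term $C(\theta)$, controlled by the Scheffé-based continuity in \cref{supp:proof:lemma0}(c) plus compactness of $\Theta$, with $\delta=\min(\delta_1,\delta_2)$. The only cosmetic difference is that you bound the sample average by a supremum over $x$, whereas the paper bounds each $X_i$ term directly; this amounts to the same thing.
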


\begin{proof}[Proof of \cref{supp:proof:lemma1}]
    Recall from \cref{def:bregman_score} that the Bregman scoring rule can be written in the form $S^\phi\big( f(\cdot; \theta), x \big) = \phi'(f(x; \theta)) - C(\theta)$, where 
    \begin{equation}
        C(\theta) := \int_\cX \bigg( \phi'\big( f(t; \theta) \big) f(t; \theta) - \phi\big( f(t; \theta) \big) \bigg) \lambda (dt) = \int_\cX m\big(f(t;\theta)\big) \lambda(dt).
    \end{equation}
    Here, $m(u) = \phi'(u)u - \phi(u)$. Fix $\epsilon >0$. For any $\theta, \theta' \in \Theta$,
    \begin{equation}
        h_n(\theta) - h_n(\theta') = \frac{1}{n}\sum_{i=1}^n \bigg( \phi'\big( f(X_i; \theta) \big) - \phi'\big( f(X_i; \theta') \big) \bigg) - \big( C(\theta) - C(\theta') \big).
    \end{equation}
    By the triangle inequality,
    \begin{equation}\label{eq:hn_control}
        \big| h_n(\theta) - h_n(\theta') \big| \le \frac{1}{n}\sum_{i=1}^n \Big| \phi'\big( f(X_i; \theta) \big) - \phi'\big( f(X_i; \theta') \big) \Big| + \big| C(\theta) - C(\theta') \big|.
    \end{equation}
    So it suffices to bound the sample-average term and the $C(\cdot)$ term by $\epsilon/2$ each, uniformly in $n$.
    
    We first control the sample term in \cref{eq:hn_control}. By \cref{condition1}(b), $0 \le f(x; \theta) \le C_f$ for all $(x,\theta) \in \cX \times \Theta$. By \cref{condition1}(c), $\phi'$ is continuous on $[0,C_f]$, hence uniformly continuous on $[0,C_f]$. Therefore, there exists $\eta >0$ such that whenever $u,v \in [0, C_f]$ and $|u-v| < \eta$, one has $\big| \phi'(u) - \phi'(v) \big| < \epsilon/2$. By \cref{condition1}(d), there exists $\delta_1 >0$ such that $\|\theta-\theta'\| < \delta_1$ implies $\sup_{x \in \cX} \big| f(x;\theta) - f(x;\theta') \big| < \eta$. In particular, for each $i$, $\big| f(X_i; \theta) - f(X_i; \theta') \big| < \eta$, hence $\big| \phi'\big( f(X_i; \theta) \big) - \phi'\big( f(X_i; \theta') \big) \big| < \epsilon/2$. Averaging over $i=1,\ldots,n$ yields $\tfrac{1}{n}\sum_{i=1}^n \big| \phi'\big( f(X_i; \theta) \big) - \phi'\big( f(X_i; \theta') \big) \big| < \epsilon/2$ for all $n$ whenever $\|\theta-\theta'\| < \delta_1$.
    
    Next, consider the integral term in \cref{eq:hn_control}. By \cref{supp:proof:lemma0}(a), $C(\theta)$ is well-defined and finite for each $\theta$, and by \cref{supp:proof:lemma0}(c), $C$ is uniformly continuous on $\Theta$. Therefore, there exists $\delta_2 >0$ such that $\|\theta- \theta'\| < \delta_2$ implies $\big| C(\theta) - C(\theta') \big| < \epsilon/2$.
    
    Let $\delta := \min(\delta_1, \delta_2)$. If $\|\theta - \theta'\| < \delta$, then for all $n$, $\big|h_n(\theta) - h_n(\theta')\big| \le \epsilon/2 + \epsilon/2 = \epsilon$. This establishes equicontinuity of $\{h_n\}_{n \ge 1}$ on $\Theta$.
\end{proof}

Before proving the concentration result, \cref{supp:proof:lemma2} verifies that the Bregman posterior in \cref{eq:breg_post} is a valid probability measure.

\begin{lemma}\label{supp:proof:lemma2}(Bregman posterior is well-defined) 
    Assume \cref{condition1}(b)--(c). Let $\pi$ be any prior probability measure on $(\Theta, \mathcal{B}(\Theta))$ where $\mathcal{B}(\Theta)$ is the Borel $\sigma$-algebra on $\Theta$. Then for every $n \in \bbN$, the normalizing constant
    \[
        Z_n := \int_\Theta \exp \big( n h_n(\theta) \big) \pi(d\theta)
    \]
    is finite, and hence the Bregman posterior $\pi_n^\phi(\cdot \mid X_{1:n})$ defined by 
    \begin{equation}\label{eq:breg_post_full}
        \pi_n^\phi(d\theta \mid X_{1:n}) = \frac{\exp \big( n h_n(\theta) \big) \pi(d\theta)}{Z_n}
    \end{equation}
    is well-defined probability measure on $(\Theta, \mathcal{B}(\Theta))$.
\end{lemma}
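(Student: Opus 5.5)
The plan is to bound the integrand of $Z_n$ uniformly using \cref{supp:proof:lemma0}(b). Under \cref{condition1}(b)--(c), that lemma gives a finite constant $B$ with $\big|S^\phi\big(f(\cdot;\theta),x\big)\big| \le B$ for all $\theta\in\Theta$ and $x\in\cX$. Consequently $|h_n(\theta)|\le B$ for every $n$, every $\theta$, and every realization of $X_{1:n}$. Hence
\[
e^{-nB} \le \exp\big(n h_n(\theta)\big) \le e^{nB}
\]
pointwise in $\theta$.

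Before integrating, I would check measurability of $\theta\mapsto \exp\big(n h_n(\theta)\big)$ with respect to $\mathcal{B}(\Theta)$. The map $\theta\mapsto f(x;\theta)$ is assumed measurable (indeed, continuity under \cref{condition1}(d) would suffice), and $\phi'$ is continuous on $[0,C_f]$, so $\phi'\big(f(X_i;\theta)\big)$ is measurable in $\theta$. The normalization term is
\[
C(\theta)=\int_\cX m\big(f(t;\theta)\big)\,\lambda(dt).
\]
Its measurability in $\theta$ follows from joint measurability of $f$ together with Tonelli/Fubini, since the integrand is dominated by $L_m f(t;\theta)$, as shown in the proof of \cref{supp:proof:lemma0}(a). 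Without assuming (d), this measurability step is the only genuinely delicate point, and I would handle it by assuming joint measurability of $(x,\theta)\mapsto f(x;\theta)$, which is implicit in defining the posterior at all.

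Integrating the two-sided bound against the probability measure $\pi$ gives
\[
0<e^{-nB}\le Z_n\le e^{nB}<\infty.
\]
So $Z_n$ is finite and strictly positive, and the ratio in \cref{eq:breg_post_full} is well defined. It is then a nonnegative measure on $\mathcal{B}(\Theta)$, by countable additivity of integrals of a nonnegative measurable function. Its total mass is $Z_n/Z_n=1$, so it is a probability measure. No step here is hard; the lower bound (positivity of $Z_n$) is the part worth stating explicitly, since finiteness alone would not rule out division by zero.
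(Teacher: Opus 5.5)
Your proposal is correct and follows essentially the same route as the paper: the uniform bound from \cref{supp:proof:lemma0}(b) gives $Z_n \le e^{nB} < \infty$. You get $Z_n > 0$ from the lower bound $e^{-nB}$, whereas the paper uses strict positivity of the integrand; this is equivalent, and your explicit measurability remark is a detail the paper leaves implicit.
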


\begin{proof}[Proof of \cref{supp:proof:lemma2}]
    By \cref{supp:proof:lemma0}(b), there exists $B < \infty$ such that for all $\theta \in \Theta$, $\big|h_n(\theta)\big| \le B$. In particular, $h_n(\theta) \le B$, so for every $\theta$, $\exp\big(nh_n(\theta)\big) \le \exp(nB)$. Integrating both sides with respect to the prior $\pi$ yields
    \begin{equation}
        Z_n = \int_\Theta \exp \big( n h_n(\theta) \big) \pi(d\theta) \le \int_\Theta \exp(nB) \pi(d\theta) = \exp(nB) < \infty,
    \end{equation}
    since $\pi$ is a probability measure. Thus $Z_n$ is finite. Moreover, $\exp\big(nh_n(\theta)\big) >0$ for all $\theta$, so $Z_n >0$. Therefore, $\pi_n^\phi(d\theta \mid X_{1:n})$ is a well-defined probability measure on $(\Theta, \mathcal{B}(\Theta))$. 
\end{proof}

\cref{supp:proof:lemma3} shows that $h_n$ satisfies a uniform strong law of large numbers on $\Theta$. This uniform convergence is required when we apply Theorem 3 of \citet{miller2021asymptotic}.

\begin{lemma}\label{supp:proof:lemma3}(Uniform SLLN for the sample score)
    Assume \cref{condition1}(b)--(d). Then 
    \[
        \sup_{\theta \in \Theta} \big| h_n(\theta) - h(\theta) \big| \xrightarrow[n \to \infty]{\mathrm{a.s.}} 0.
    \]
\end{lemma}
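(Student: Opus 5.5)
The plan is the standard compactness-plus-equicontinuity argument for a uniform strong law. We work with the decomposition $S^\phi(f(\cdot;\theta),x) = \phi'(f(x;\theta)) - C(\theta)$ from \cref{supp:proof:lemma0}. Since the deterministic term $C(\theta)$ appears in both $h_n(\theta)$ and $h(\theta)$, it cancels:
\[
h_n(\theta)-h(\theta) = \frac1n\sum_{i=1}^n \phi'(f(X_i;\theta)) - \E\big(\phi'(f(X;\theta))\big).
\]
So it suffices to prove a uniform SLLN for the bounded functions $\psi_\theta(x) := \phi'(f(x;\theta))$. These are uniformly bounded by $\sup_{[0,C_f]}|\phi'|$ by \cref{condition1}(b)--(c), so every expectation involved is finite.

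First, fix $\epsilon>0$. As in the proof of \cref{supp:proof:lemma1}, uniform continuity of $\phi'$ on $[0,C_f]$ and \cref{condition1}(d) give a $\delta>0$ such that $\|\theta-\theta'\|<\delta$ implies $\sup_{x\in\cX}|\psi_\theta(x)-\psi_{\theta'}(x)|<\epsilon/3$. Taking expectations, the population map $\theta\mapsto \E\psi_\theta(X)$ then varies by less than $\epsilon/3$ on $\delta$-balls. The empirical averages inherit the same bound for every $n$ and every realization of the sample.

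Second, by compactness of $\Theta$, choose a finite $\delta$-net $\theta_1,\dots,\theta_N$. For each fixed $\theta_j$, the ordinary SLLN applies to the i.i.d. bounded variables $\psi_{\theta_j}(X_i)$. Intersecting these finitely many probability-one events gives an almost sure event on which $\max_j|\frac1n\sum_i\psi_{\theta_j}(X_i)-\E\psi_{\theta_j}(X)|<\epsilon/3$ for all large $n$.

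Third, for arbitrary $\theta$, pick a net point $\theta_j$ with $\|\theta-\theta_j\|<\delta$ and apply the triangle inequality in three pieces: the empirical oscillation between $\theta$ and $\theta_j$, the pointwise SLLN error at $\theta_j$, and the population oscillation between $\theta_j$ and $\theta$. This yields $\sup_\theta|h_n(\theta)-h(\theta)|<\epsilon$ eventually, almost surely. Letting $\epsilon$ run over $1/m$, $m\in\bbN$, and intersecting the countably many events gives the almost sure convergence.

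The step needing most care is ensuring the equicontinuity bound holds uniformly in $x$, and hence for every sample path, rather than only in expectation. This is where the sup-norm continuity in \cref{condition1}(d) is essential. It avoids any need for a measurable envelope or bracketing argument, and also sidesteps measurability issues for the supremum: the supremum is controlled by finitely many measurable quantities on the net.
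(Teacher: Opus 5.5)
Your proof is correct and follows essentially the same route as the paper. Both arguments use a finite $\delta$-net from compactness, the pointwise SLLN at the net points, equicontinuity of the empirical and population scores, a three-term triangle inequality, and a countable intersection over $\epsilon = 1/m$. The only difference is a mild streamlining: you cancel the deterministic term $C(\theta)$ in $h_n - h$ and get uniform continuity of the population mean directly from the sup-in-$x$ bound, whereas the paper invokes \cref{supp:proof:lemma1}, which needs continuity of $C$ via Scheff\'e's lemma, and proves continuity of $h$ by dominated convergence.
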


\begin{proof}[Proof of \cref{supp:proof:lemma3}]
    Fix any $\theta \in \Theta$. By \cref{supp:proof:lemma0}(b), there exists $B < \infty$ such that $\big|S^\phi\big(f(\cdot; \theta), x\big)\big| \le B$ for all $x \in \cX$. Hence $\E \big(\big|S^\phi\big(f(\cdot; \theta), X\big)\big|\big) \le B < \infty$, and the strong law of large numbers (SLLN) yields $h_n(\theta) \to h(\theta)$ almost surely. 

    We then show the continuity of $h$ on $\Theta$. Let $\theta_m \to \theta$. By \cref{supp:proof:lemma0}(c), for each fixed $x \in \cX$, the map $\theta \mapsto S^\phi\big(f(\cdot; \theta),x\big)$ is continuous. Together with the uniform bound $\big|S^\phi\big(f(\cdot; \theta_m), X\big)\big| \le B$ for all $m$, the dominated convergence theorem gives $h(\theta_m) \to h(\theta)$. Thus, $h$ is continuous on $\Theta$. Since $\Theta$ is compact, $h$ is uniformly continuous on $\Theta$.

    We now show the uniform convergence on $\Theta$, following the argument in Lemma 36 of \citet{miller2021asymptotic}. Fix $\epsilon>0$. By \cref{supp:proof:lemma1}, $\{h_n\}_{n\ge1}$ is equicontinuous on $\Theta$. Hence there exists $\delta_1 >0$ such that for all $n \ge 1$ and all $\theta,\theta' \in \Theta$, if $\|\theta-\theta'\|<\delta_1$ then $|h_n(\theta) - h_n(\theta')| < \epsilon$. By uniform continuity of $h$, there exists $\delta_2 >0$ such that for all $\theta,\theta' \in \Theta$, if $\|\theta-\theta'\|<\delta_2$ then $|h(\theta) - h(\theta')| < \epsilon$. 

    Let $\delta := \min\{\delta_1, \delta_2\}$. Since $\Theta$ is totally bounded (in particular, any compact subset of $\bbR^d$ is totally bounded), there exist $\theta_1, \ldots, \theta_J \in \Theta$ such that for every $\theta \in \Theta$ there is some $j \in \{1,\ldots,J\}$ with $\|\theta-\theta_j\| < \delta$. Since we have pointwise convergence, for each fixed $j$, $h_n(\theta_j) \to h(\theta_j)$ almost surely. Because there are only finitely many $\theta_j$'s, we also have 
    \begin{equation}
        \max_{1\le j\le J} \big| h_n(\theta_j) - h(\theta_j) \big| \to 0 \quad \mathrm{a.s.}
    \end{equation}
    Therefore, there exists an event $\Omega_\epsilon$ with $P(\Omega_\epsilon) =1$ such that for every $\omega \in \Omega_\epsilon$ there exists $N_\epsilon(\omega)$ with 
    \begin{equation}
        \max_{1\le j\le J} \big| h_n(\theta_j) - h(\theta_j) \big| < \epsilon, \quad \forall n \ge N_\epsilon(\omega).
    \end{equation}
    Fix $\omega \in \Omega_\epsilon$ and $n \ge N_\epsilon(\omega)$. For any $\theta \in \Theta$, choose $j$ with $\|\theta-\theta_j\| < \delta$. Then
    \begin{equation}
        \big|h_n(\theta) - h(\theta) \big| \le \big|h_n(\theta) - h_n(\theta_j) \big| + \big|h_n(\theta_j) - h(\theta_j) \big| + \big|h(\theta_j) - h(\theta) \big| < \epsilon + \epsilon + \epsilon = 3\epsilon.
    \end{equation}
    Taking the supremum over $\theta \in \Theta$, we obtain that on $\Omega_\epsilon$,
    \begin{equation}\label{eq:uniform_hn}
        \sup_{\theta \in \Theta}\big|h_n(\theta) - h(\theta)\big| < 3\epsilon, \quad \forall n \ge N_\epsilon(\omega).
    \end{equation}

    Finally, let $\epsilon_m := 1/m$ and define $\Omega_m := \Omega_{\epsilon_m}$. Each $\Omega_m$ satisfies $P(\Omega_m) = 1$, hence the countable intersection $\Omega_0 := \cap_{m=1}^\infty \Omega_m$ still satisfies $P(\Omega_0) = 1$. Fix any $\omega \in \Omega_0$. Let $\eta >0$ be arbitrary and choose $m := \lceil 3/\eta \rceil + 1$. Since $\omega \in \Omega_m$, by \cref{eq:uniform_hn} we have $\sup_{\theta \in \Theta}\big|h_n(\theta) - h(\theta)\big| < 3\epsilon_m < \eta$ for all $n \ge N_{\epsilon_m}(\omega)$. Since $\eta>0$ was arbitrary, this completes the proof. 
\end{proof}

The proof of \cref{thm:posterior_consistency} is an application of \cite{miller2021asymptotic} Theorem 3, and the supporting lemmas above are used to verify its case 2 assumptions in our Bregman score setup.


\begin{proof}[Proof of \cref{thm:posterior_consistency}]
    The Bregman posterior in \cref{eq:breg_post_full} can be written as 
    \begin{equation}\label{eq:breg_post_hn}
        \pi_n^\phi(d\theta \mid X_{1:n}) \propto \pi(d\theta) \, \exp \bigg( \sum_{i=1}^n S^{\phi}\big(f(\cdot;\theta), X_i\big) \bigg) = \pi(d\theta) \exp\big( nh_n(\theta) \big).
    \end{equation}
    By \cref{supp:proof:lemma2}, $\pi_n^\phi(d\theta \mid X_{1:n})$ is well-defined for all $n$. We apply Theorem 3 of \citet{miller2021asymptotic} for generalized posteriors on a metric space. Define $f_n(\theta) := - h_n(\theta)$, $f(\theta) := - h(\theta)$, and set $\theta_0 = \theta^*$. Equip $\Theta$ with the metric $d(\theta, \theta') = \|\theta - \theta'\|$. 
    
    \cref{condition1}(e) states that the prior $\pi$ assigns strictly positive mass to every open neighborhood of $\theta^*$. This matches the prior-positivity requirement. By \cref{supp:proof:lemma3}, we have $\sup_{\theta \in \Theta} \big| h_n(\theta) - h(\theta) \big| \to 0$ almost surely, which implies $f_n(\theta) \to f(\theta)$ almost surely for every $\theta \in \Theta$.

    We proceed to verify the assumption case (2) of Theorem 3 in \citet{miller2021asymptotic}. Let $K := \Theta$. By \cref{condition1}, $\Theta$ is compact. Then it suffices to show that $(f_n)$ is equicontinuous and $f(\theta) > f(\theta^*)$ for all $\theta \in \Theta \setminus \{\theta_0\}$. \cref{supp:proof:lemma1} gives that $(h_n)$ is equicontinuous on $\Theta$. Therefore, $(f_n)$ is also equicontinuous on $\Theta$. Next, by \cref{condition1}(a), since $\theta^*$ is the unique maximizer of $h$, for every $\eta >0$, $\inf_{\|\theta - \theta^*\| \ge \eta} f(\theta) > f(\theta^*)$. 

    Thus, all conditions required by Theorem 3 are verified, and the results follows. 
\end{proof}

We now prove \cref{thm:ppc}.

\begin{proof}[Proof of \cref{thm:ppc}]
    By \cref{thm:posterior_consistency}, for every $\delta >0$,
    \begin{equation}
        \pi_n^\phi\big( \{\theta \in \Theta: \|\theta-\theta^*\| \ge \delta \} \mid X_{1:n}\big) \xrightarrow[n \to \infty]{\mathrm{a.s.}} 0.
    \end{equation}
    Fix $\epsilon >0$. By \cref{condition1}(d), there exists $\delta >0$ such that whenever $\|\theta-\theta^*\|<\delta$,
    \begin{equation}
        \sup_{x \in \cX} \big| f(x; \theta) - f(x; \theta^*) \big| < \epsilon.
    \end{equation}
    
    Fix $x \in \cX$. By the triangle inequality,
    \begin{align}
        \big|p_n^\phi(x \mid X_{1:n}) - f(x; \theta^*) \big| &= \Bigg| \int_\Theta  \big( f(x; \theta) - f(x; \theta^*) \big) \pi^\phi_n(d\theta \mid X_{1:n}) \Big| \\
            &\le \int_\Theta \big| f(x; \theta) - f(x; \theta^*) \big| \pi^\phi_n(d\theta \mid X_{1:n}).
    \end{align}
    Split the integral over the sets $\{ \|\theta-\theta^*\| < \delta\}$ and $\{ \|\theta-\theta^*\| \ge \delta\}$:
    \begin{equation}
        \int_\Theta \big| f(x; \theta) - f(x; \theta^*) \big| \pi^\phi_n(d\theta \mid X_{1:n}) \le I_{n,1}(x) + I_{n,2}(x),
    \end{equation}
    where 
    \begin{equation}
        I_{n,1}(x) := \int_{\|\theta - \theta^*\| < \delta} \big| f(x; \theta) - f(x; \theta^*) \big| \pi^\phi_n(d\theta \mid X_{1:n}),
    \end{equation}
    and 
    \begin{equation}
        I_{n,2}(x) := \int_{\|\theta - \theta^*\| \ge \delta} \big| f(x; \theta) - f(x; \theta^*) \big| \pi^\phi_n(d\theta \mid X_{1:n}).
    \end{equation}
    For the first term, by the choice of $\delta$,
    \begin{align}
        I_{n,1}(x) &\le \sup_{\substack{\theta \in \Theta: \\ \|\theta-\theta^*\| < \delta}} \big| f(x; \theta) - f(x; \theta^*) \big| \\
                   &\le \sup_{\substack{\theta \in \Theta: \\ \|\theta-\theta^*\| < \delta}} \sup_{x \in \cX} \big| f(x; \theta) - f(x; \theta^*) \big| < \epsilon.
    \end{align}
    For the second term, \cref{condition1}(b) implies $0 \le f(x; \theta) \le C_f$ for all $x \in \cX$, $\theta \in \Theta$, hence $\big|f(x;\theta) - f(x; \theta^*)\big| \le 2C_f$, so
    \begin{equation}
        I_{n,2}(x) \le 2C_f \,\pi_n^\phi\big( \{\theta: \|\theta-\theta^*\| \ge \delta \} \mid X_{1:n}\big). 
    \end{equation}

    Combining the bounds, for all $n$ and all $x \in \cX$,
    \begin{equation}
        \big|p_n^\phi(x \mid X_{1:n}) - f(x; \theta^*) \big| \le \epsilon + 2C_f \,\pi_n^\phi\big( \{\theta: \|\theta-\theta^*\| \ge \delta \} \mid X_{1:n}\big). 
    \end{equation}
    Taking $\sup_{x \in \cX}$ on both sides gives
    \begin{equation}
        \sup_{x \in \cX} \big|p_n^\phi(x \mid X_{1:n}) - f(x; \theta^*) \big| \le \epsilon + 2C_f \,\pi_n^\phi\big( \{\theta: \|\theta-\theta^*\| \ge \delta \} \mid X_{1:n}\big). 
    \end{equation}
    By \cref{thm:posterior_consistency}, the posterior mass term on the right converges to $0$ almost surely. Therefore,
    \begin{equation}
        \limsup_{n \to \infty} \, \sup_{x \in \cX} \big|p_n^\phi(x \mid X_{1:n}) - f(x; \theta^*) \big|  \le \epsilon, \quad \mathrm{a.s.}
    \end{equation}
    Since $\epsilon>0$ was arbitrary, it follows that 
    \begin{equation}
        \sup_{x \in \cX} \big| p_n^\phi(x \mid X_{1:n}) - f(x; \theta^*) \big| \xrightarrow[n \to \infty]{\mathrm{a.s.}} 0.
    \end{equation}
\end{proof}

We establish a technical lemma to help in proving \cref{supp:proof:loo_consistency}. 
\begin{lemma}\label{supp:proof:lemma4}
    Recall the Bregman score $S^\phi$ in \cref{def:bregman_score}, and assume \cref{condition1}(b)--(c). Let $q$ be a density with $0 \le q(x) \le C_f$, and let $\{q_n\}_{n \ge 1}$ be (possibly random) densities with $0 \le q_n(x) \le C_f$ for all $x$, such that $\sup_{x \in \cX} \big|q_n(x) - q(x)\big| \xrightarrow[]{\mathrm{a.s.}} 0$ as $n \to \infty$. Let $X \sim G$ and define $\Delta_n := S^\phi(q_n, X) - S^\phi(q, X)$. Then, $\Delta_n \xrightarrow[]{\mathrm{a.s.}} 0$ as $n \to \infty$, and $\E|\Delta_n| \to 0$.
\end{lemma}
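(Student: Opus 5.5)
The plan is to split the score difference into its pointwise term and its normalization term, as in the proof of \cref{supp:proof:lemma0}. We write $S^\phi(q,x)=\phi'(q(x))-C(q)$ with $C(q):=\int_\cX m(q(t))\,\lambda(dt)$ and $m(u)=\phi'(u)u-\phi(u)$. Then
\begin{equation*}
    |\Delta_n| \le \big|\phi'(q_n(X))-\phi'(q(X))\big| + |C(q_n)-C(q)|,
\end{equation*}
and the goal is to show that both terms tend to zero uniformly in $X$.

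For the pointwise term, \cref{condition1}(c) makes $\phi'$ continuous, hence uniformly continuous, on the compact interval $[0,C_f]$. Both $q_n(x)$ and $q(x)$ lie in this interval, so $\sup_x|\phi'(q_n(x))-\phi'(q(x))|\to 0$ on the almost-sure event where $\sup_x|q_n-q|\to 0$. This is the same argument used in \cref{supp:proof:lemma1}.

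For the normalization term, the Lipschitz property of $m$ gives
\begin{equation*}
    |C(q_n)-C(q)| \le L_m\int_\cX |q_n-q|\,d\lambda .
\end{equation*}
Sup-norm convergence does not by itself give $L^1$ convergence on infinite-measure spaces, so this is the step I expect to need the most care. I would handle it with Scheffé's lemma, exactly as in \cref{supp:proof:lemma0}(c). Fix $\omega$ in the probability-one event. Each $q_n$ and $q$ is a probability density, and $q_n\to q$ pointwise, so Scheffé gives $\int|q_n-q|\,d\lambda\to 0$. Both densities lie in $[0,C_f]$, so the bound $|m(u)|\le L_m u$ from \cref{supp:proof:lemma0}(a) guarantees that $C(q_n)$ and $C(q)$ are finite. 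Because $q_n$ may be random, Scheffé is applied pathwise, for each fixed $\omega$; this is legitimate because the integral is over $t$ only and is independent of $X$.

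Combining the two terms gives $\sup_x|S^\phi(q_n,x)-S^\phi(q,x)|\to 0$ almost surely. In particular $\Delta_n\to 0$ almost surely, and this holds whether $X$ is independent of the $q_n$ or not.

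For $\E|\Delta_n|\to 0$, \cref{supp:proof:lemma0}(b) applies verbatim to any density bounded by $C_f$. It gives $|S^\phi(q_n,X)|\le B$ and $|S^\phi(q,X)|\le B$, so $|\Delta_n|\le 2B$. Dominated convergence then yields $\E|\Delta_n|\to 0$.
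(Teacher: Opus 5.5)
Your proposal is correct and follows the paper's proof: the same split into the pointwise term $\phi'(q_n(X))-\phi'(q(X))$, handled by uniform continuity of $\phi'$ on $[0,C_f]$, and the normalization term, handled by the Lipschitz property of $m$ plus Scheffé's lemma, then the $2B$ bound and dominated convergence. Your remark that the convergence is uniform in $x$, so $\Delta_n\to 0$ a.s.\ regardless of any dependence between $X$ and $q_n$, is a small clarification of a step the paper states more tersely.
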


\begin{proof}[Proof of \cref{supp:proof:lemma4}]
    Let $m(u) := \phi'(u)u-\phi(u)$. Then, the Bregman score can be written as $S^\phi(P,x) = \phi'(p(x)) - \int_\cX m(p(t)) \lambda(dt)$. Write 
    \begin{equation}
        \Delta_n = \underbracket{\phi'(q_n(X)) - \phi'(q(X))}_{=: A_n}  - \underbracket{\int_{\cX} \Big( m(q_{n}(t)) - m(q(t)) \Big) \lambda(dt)}_{=: B_n}.
    \end{equation}
    
    We first control $A_n$. Since $\sup_x \big|q_n(x) - q(x)\big| \to 0$ a.s., we have $\big| q_n(X) - q(X) \big| \to 0$ a.s. By \cref{condition1}(b), $q_{n}(x), q(x) \in [0, C_f]$. By \cref{condition1}(c), $\phi'$ is continuous on $[0, C_f]$, hence uniformly continuous there. Hence, for any $\epsilon >0$, there exists $\delta >0$ such that $|u-v| < \delta$ implies $\big|\phi'(u) - \phi'(v)\big| < \epsilon$, and thus $|A_n| \to 0$ a.s.

    Next, control $B_n$. By \cref{condition1}(c), $m$ is Lipschitz on $[0,C_f]$ with some constant $L_m$. Thus, 
    \begin{equation}
        |B_n| \le \int_{\cX} \big|m(q_{n}(t)) - m(q(t)) \big| \lambda(dt) \le L_m \int_{\cX}|q_{n}(t) - q(t)| \lambda(dt).
    \end{equation}
    Since both $q_n$ and $q$ are densities and $q_{n}(t) \to q(t)$ pointwise for every $t$, Scheffé's lemma (\citealt{williams1991probability}, 5.10) gives 
    \begin{equation}
        \int_{\cX}|q_{n}(t) - q(t)| \lambda(dt) \xrightarrow[n \to \infty]{\mathrm{a.s.}} 0.
    \end{equation}
    Hence $|B_n| \to 0$ a.s. 

    Combining, $\Delta_n = A_n - B_n \xrightarrow[]{\mathrm a.s.} 0$ as $n \to \infty$. Moreover, by the same argument as in the proof of \cref{supp:proof:lemma0}(b), $\big| S^\phi(p,x) \big| \le B$ for all densities $p$ with $0\le p \le C_f$ and all $x \in \cX$, where $B= \sup_{u \in [0,C_f]}|\phi'(u)| + L_m$. Therefore,
    \begin{equation}
        |\Delta_n| = \big|S^\phi(q_{n}, X) - S^\phi(q, X)\big| \le \big|S^\phi(q_{n}, X)\big| + \big|S^\phi(q, X)\big| \le 2B,
    \end{equation}
    and dominated convergence yields $\E|\Delta_n| \to 0$. This completes the proof.    
\end{proof}

To work with leave-one-out predictive criteria, we need to know that randomly deleting observations produces the same distributional structure as taking the first $n-s$ observations and holding out the last $s$ observations. \cref{supp:proof:lemma5} formalizes this fact: if a random subset of indices is deleted independently of data, then the retained and deleted observations have the same joint distribution as the first $n-s$ and last $s$ observations of an i.i.d sample. 

\begin{lemma}\label{supp:proof:lemma5} 
    Fix $s \in \bbN$, and suppose $n > s$. Let $X_1, \ldots, X_n$ be i.i.d. random variables drawn from a probability distribution $g$ on a measurable space $(\cX, \cA)$. Let $I \subseteq \{1, \ldots, n\}$ be a random subset of size $s$, drawn independently of the data. Write $X_{-I}$ and $X_I$ for the retained and deleted observations, respectively, ordered by their original indices. Then $(X_{-I}, X_I)$ has the same distribution as $(X_{1:n-s}, X_{n-s+1:n})$. In particular, $X_{-I}$ consists of $n-s$ i.i.d. draws from $g$, $X_I$ consists of $s$ i.i.d. draws from $g$, and the two blocks are independent. 
\end{lemma}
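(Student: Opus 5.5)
The plan is to condition on $I$ and use exchangeability of the i.i.d. sample. Because $I$ is independent of $X_{1:n}$, for any fixed subset $A \subseteq \{1,\ldots,n\}$ with $|A|=s$ we have
\[
P\big((X_{-I},X_I)\in E\big)=\sum_{A} P(I=A)\,P\big((X_{-A},X_A)\in E\big)
\]
for every measurable $E \subseteq \cX^{n-s}\times\cX^{s}$. Here the conditioning on $\{I=A\}$ is removed precisely because of this independence. It therefore suffices to show that for each deterministic $A$ of size $s$, the pair $(X_{-A},X_A)$ has the same law as $(X_{1:n-s},X_{n-s+1:n})$. Then every summand equals $P\big((X_{1:n-s},X_{n-s+1:n})\in E\big)$, and the weights $P(I=A)$ sum to one.

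For fixed $A$, let $\sigma$ be the permutation of $\{1,\ldots,n\}$ with the following property: it lists the elements of $\{1,\ldots,n\}\setminus A$ in increasing order in positions $1,\ldots,n-s$, and the elements of $A$ in increasing order in positions $n-s+1,\ldots,n$. Then $(X_{-A},X_A)=(X_{\sigma(1)},\ldots,X_{\sigma(n)})$, split into blocks of lengths $n-s$ and $s$. Since the $X_i$ are i.i.d. with law $g$, the joint law of $(X_1,\ldots,X_n)$ is the product measure $g^{\otimes n}$. This product measure is invariant under coordinate permutations: on measurable rectangles $\prod_j E_j$ both sides equal $\prod_j g(E_j)$, and rectangles form a $\pi$-system generating the product $\sigma$-algebra, so the $\pi$–$\lambda$ theorem extends the equality to all of $\cA^{\otimes n}$. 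Hence $(X_{\sigma(1)},\ldots,X_{\sigma(n)})\overset{d}{=}(X_1,\ldots,X_n)$, and applying the measurable block-splitting map to both sides gives the claim for fixed $A$.

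The remaining statements follow from the structure of the target law $g^{\otimes n}=g^{\otimes(n-s)}\otimes g^{\otimes s}$. Under it, $X_{1:n-s}$ consists of $n-s$ i.i.d. draws from $g$, $X_{n-s+1:n}$ consists of $s$ i.i.d. draws from $g$, and the two blocks are independent. Equality in distribution transfers all three properties to $(X_{-I},X_I)$.

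The only step needing care is the first one, since no later step is hard. There we must justify that conditioning on $\{I=A\}$ does not alter the law of the data, which is exactly where independence of $I$ and $X_{1:n}$ enters. Sets $A$ with $P(I=A)=0$ are simply dropped from the sum.
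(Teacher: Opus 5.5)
Your proposal is correct and follows essentially the same route as the paper's proof. Both condition on $\{I=A\}$ using independence of $I$ from the data. Both then note that $(X_{-A},X_A)$ is a fixed permutation of an i.i.d. sample whose joint law is permutation-invariant, and average over $A$. You additionally spell out details the paper leaves implicit: the explicit mixture formula, the permutation $\sigma$, and the $\pi$--$\lambda$ argument for invariance of $g^{\otimes n}$.
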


\begin{proof}[Proof of \cref{supp:proof:lemma5}]
    Because $I$ is drawn independently of data, it suffices to condition on the event $I=A$, where $A$ is any fixed subset of $\{1,\ldots,n\}$ with $|A|=s$. Conditional on $I=A$, the vector $(X_{-A}, X_A)$ is just a reordering of the original i.i.d. sample $(X_1,\ldots, X_n)$. Since an i.i.d. sample has the same joint distribution under any permutation of its coordinates, $(X_{-A}, X_A)$ has the same distribution as $(X_{1:n-s}, X_{n-s+1:n})$. The conditional distribution is the same for every subset $A$ of size $s$. Therefore, averaging over the random choice of $I$ gives the distributional equality. Since the right-hand side consists of two disjoint blocks of an i.i.d. sample, $X_{-I}$ consists of $n-s$ i.i.d. draws from $g$, $X_I$ consists of $s$ i.i.d. draws from $g$, and the two blocks are independent.
\end{proof}

\begin{theorem}\label{supp:proof:loo_consistency}
    Assume \cref{condition1}(a)--(e) holds for model $k$. Let $\hat{u}^{\mathrm{loo}}_{n,k}$ be the LOO utility estimator in \cref{eq:loo_utility}, and let $\bar u_k$ be the limiting population utility in \cref{eq:lim_ut}. Then, $\hat{u}^{\mathrm{loo}}_{n,k} \xrightarrow[n \to \infty]{\mathrm{p}} \bar{u}_k$.
\end{theorem}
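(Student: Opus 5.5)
We prove $L^1$ convergence, $\E|\hat{u}^{\mathrm{loo}}_{n,k} - \bar u_k| \to 0$, which implies convergence in probability. Write $T_{n,i} := S^\phi\big(p_{k,n-1}^\phi(\cdot \mid X_{-i}), X_i\big)$, so that $\hat{u}^{\mathrm{loo}}_{n,k} = n^{-1}\sum_{i=1}^n T_{n,i}$. Introduce the plug-in average $\bar T_n := n^{-1}\sum_{i=1}^n S^\phi\big(f_k(\cdot;\theta_k^*), X_i\big)$ and decompose
\[
\hat{u}^{\mathrm{loo}}_{n,k} - \bar u_k = \frac{1}{n}\sum_{i=1}^n \Big( T_{n,i} - S^\phi\big(f_k(\cdot;\theta_k^*), X_i\big) \Big) + \big( \bar T_n - \bar u_k \big).
\]
The second term tends to $0$ almost surely by the SLLN. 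The summands are bounded by the constant $B$ of \cref{supp:proof:lemma0}(b), and since $f_k(\cdot;\theta_k^*)$ is a fixed density bounded by $C_f$, this bound applies. Bounded convergence then gives $L^1$ convergence of this term as well.

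For the first term, apply the triangle inequality and exchangeability:
\[
\E\Big|\frac1n\sum_i \big(T_{n,i} - S^\phi(f_k(\cdot;\theta_k^*),X_i)\big)\Big| \le \frac1n\sum_i \E|\Delta_{n,i}| = \E|\Delta_{n,n}|,
\]
where $\Delta_{n,i} := T_{n,i} - S^\phi(f_k(\cdot;\theta_k^*),X_i)$. The equality holds because every $\Delta_{n,i}$ has the same distribution as $\Delta_{n,n}$, by \cref{supp:proof:lemma5} with $s=1$ (a deterministic index is a special case of a random index independent of the data). Now $\Delta_{n,n} = S^\phi(q_{n-1}, X_n) - S^\phi(q, X_n)$, with $q_{n-1} := p_{k,n-1}^\phi(\cdot \mid X_{1:n-1})$ and $q := f_k(\cdot;\theta_k^*)$. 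Here $q_{n-1}$ is a density bounded by $C_f$, since it is a posterior mixture of densities bounded by $C_f$. By \cref{thm:ppc}, applied to the first $n-1$ observations, $\sup_x|q_{n-1}(x)-q(x)| \to 0$ almost surely.

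The one subtlety is that $X_n$ varies with $n$, so \cref{supp:proof:lemma4} does not apply verbatim. I would redo its argument, which only needs the supremum bound. The pointwise part satisfies $|\phi'(q_{n-1}(X_n)) - \phi'(q(X_n))| \le \omega_{\phi'}\big(\sup_x|q_{n-1}-q|\big)$, where $\omega_{\phi'}$ is the modulus of continuity of $\phi'$ on $[0,C_f]$. The integral part is bounded by $L_m\|q_{n-1}-q\|_{L^1(\lambda)}$, which tends to $0$ almost surely by Scheffé's lemma. Both bounds depend only on $X_{1:n-1}$ and vanish almost surely. Hence $|\Delta_{n,n}| \le \omega_{\phi'}\big(\sup_x|q_{n-1}-q|\big) + L_m\|q_{n-1}-q\|_{L^1(\lambda)} =: R_{n-1}$, where $R_{n-1} \to 0$ almost surely and $R_{n-1} \le 2B$. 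Dominated convergence gives $\E|\Delta_{n,n}| \le \E R_{n-1} \to 0$.

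The main obstacle is exactly this uniformity step: controlling the LOO term with a held-out point that moves with $n$. Routing the bound through $\sup_x$ and the $L^1$ distance, which is where the uniform statement of \cref{thm:ppc} is needed, resolves it. Combining the two terms, $\E|\hat{u}^{\mathrm{loo}}_{n,k} - \bar u_k| \to 0$, and Markov's inequality yields $\hat{u}^{\mathrm{loo}}_{n,k} \xrightarrow{\mathrm{p}} \bar u_k$.
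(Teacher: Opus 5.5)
Your proof is correct and follows essentially the same route as the paper's. You use the same split into a LOO-versus-plug-in term and a plug-in-versus-$\bar u_k$ term, the SLLN for the latter, and exchangeability (\cref{supp:proof:lemma5}) plus \cref{thm:ppc} and a \cref{supp:proof:lemma4}-type argument for the former. The differences are cosmetic: you average over deterministic indices and prove $L^1$ convergence, where the paper uses a random index $L_n$ with Markov's inequality. Your bound on $|\Delta_{n,n}|$ by a function of $X_{1:n-1}$ alone makes the moving-holdout point explicit; the paper handles it implicitly by invoking \cref{supp:proof:lemma4}, which is legitimate because $(\tilde q_{n-1}, X_n)$ has the same law as $(\tilde q_{n-1}, X)$ for an independent $X \sim G$.
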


\begin{proof}[Proof of \cref{supp:proof:loo_consistency}]
    Let $L_n \sim \mathrm{Uniform}(1, \ldots, n)$, independently of the sample $X_1, \ldots, X_n$. Then, 
    \begin{equation}
        \hat{u}^{loo}_{n,k} = \frac{1}{n}\sum_{i=1}^n S^\phi \big( p_{n-1}^\phi(\cdot \mid X_{-i}), X_i \big) = \E_{L_n} \Big( S^\phi\big(p^\phi_{n-1}(\cdot \mid X_{-L_n}), X_{L_n}\big) \mid X_{1:n} \Big).
    \end{equation}
    We want to show that $\hat{u}^{loo}_{n,k} \xrightarrow[]{\mathrm{p}} \bar{u}_k$ as $n \to \infty$, where $\bar{u}_k := \E \big( S^\phi\big(f(\cdot; \theta^*), X\big) \big)$. 

    Let $q(\cdot) := f(\cdot; \theta^*)$ and $q_{n-1}^{(-L_n)}(\cdot) := p_{n-1}^\phi(\cdot \mid X_{-L_n})$. Then
    \begin{equation}\label{eq:uloo_diff}
        \hat{u}^{loo}_{n,k} - \bar{u}_k = \E_{L_n}\big(S^\phi(q^{(-L_n)}_{n-1},X_{L_n}) - S^\phi(q, X_{L_n}) \mid X_{1:n}\big) + \left( \frac{1}{n} \sum_{i=1}^n S^\phi(q, X_i) - \bar{u}_k \right).
    \end{equation}
    By \cref{supp:proof:lemma0}(b), $S^\phi\big(f(\cdot;\theta),x\big)$ is uniformly bounded, hence $S^\phi(q,X)$ is integrable. Since $\E_{L_n} \big(S^\phi(q,X_{L_n}) \mid X_{1:n}\big) = \tfrac{1}{n}\sum_{i=1}^n S^\phi(q, X_i)$, the second term in \cref{eq:uloo_diff} converges to $0$ almost surely by the SLLN. It remains to show the first term in \cref{eq:uloo_diff} converges in probability to $0$. 
    
    Set $\Delta_n := S^\phi(q^{(-L_n)}_{n-1}, X_{L_n}) - S^\phi(q, X_{L_n})$. By Jensen's inequality, $\big| \E_{L_n}(\Delta_n \mid X_{1:n}) \big| \le \E_{L_n} \big(|\Delta_n| \, \big| X_{1:n}\big)$. Then, for $\epsilon >0$, Markov's inequality gives
    \begin{equation}
        P\big(\big|\E_{L_n}(\Delta_n \mid X_{1:n})\big| > \epsilon \big) \le \frac{1}{\epsilon}\E \big( \big|\E_{L_n}(\Delta_n \mid X_{1:n})\big|\big) \le \frac{1}{\epsilon} \E\big( \E_{L_n} (|\Delta_n| \, \big| X_{1:n}) \big) = \frac{1}{\epsilon}\E|\Delta_n|.
    \end{equation}
    So it suffices to show $\E|\Delta_n| \to 0$.

    Apply \cref{supp:proof:lemma5} with $s=1$ and $I=\{L_n\}$. Then $(X_{-L_n}, X_{L_n})$ has the same distribution as $(X_{1:n-1}, X_n)$. Hence $(q^{(-L_n)}_{n-1}, X_{L_n})$ has the same distribution as $(\tilde{q}_{n-1}, X_n)$, where $\tilde{q}_{n-1}(\cdot) := p_{n-1}^\phi (\cdot \mid X_{1:n-1})$. Therefore, $\E|\Delta_n| = \E\big| S^\phi(\tilde{q}_{n-1}, X_n) - S^\phi(q, X_n) \big|$, so we may work with $\tilde{q}_{n-1}$ and the independent holdout $X_n$. By \cref{thm:ppc} applied to the sample $X_{1:n-1}$ (as $n-1 \to \infty$), $\sup_{x \in \cX} \big|\tilde{q}_{n-1}(x) - q(x) \big| \xrightarrow[]{\mathrm{a.s.}} 0$. Moreover, by \cref{condition1}(b), we have $0 \le f_k(x;\theta) \le C_f$ for all $x$ and $\theta$, hence $0\le q(x)\le C_f$ for all $x$. Since $\tilde q_{n-1}$ is a posterior predictive mixture of such bounded model densities, it also satisfies $0\le \tilde q_{n-1}(x)\le C_f$ for all $x$. Therefore, \cref{supp:proof:lemma4} applies to the sequence $\tilde{q}_{n-1}$ and yields $\E|\Delta_n| \to 0$. This proves the claim.
\end{proof}

Finally, we provide the proof of \cref{sec:theory:consistency}.

\begin{proof}[Proof of Theorem \ref{sec:theory:consistency}]
Fix a model index $k \in \{1,\ldots, K\}$. Under \cref{condition1} applied to model $k$, \cref{supp:proof:loo_consistency} establishes that $\hat{u}_n^{loo}(M_k) \xrightarrow[]{\mathrm{p}} \bar{u}(M_k)$ as $n \to \infty$. Define $\Delta_n := \max_{1 \le k \le K} \big| \hat{u}^{\mathrm{loo}}_{n,k} - \bar{u}_k\big|$. We claim $\Delta_n \xrightarrow[]{\mathrm{p}} 0$. To see this, fix $\epsilon >0$. By a union bound,
\begin{equation}
    P(\Delta_n > \epsilon) = P\bigg(\bigcup_{k=1}^K \big\{ \big| \hat{u}^{\mathrm{loo}}_{n,k} - \bar{u}_k > \epsilon\big| \big\} \bigg) \le \sum_{k=1}^K P\big( \big| \hat{u}^{\mathrm{loo}}_{n,k} - \bar{u}_k\big| > \epsilon \big).
\end{equation}
Since $K < \infty$ and each term on the right converges to $0$, the sum also converges to $0$. Hence, $P(\Delta_n > \epsilon) \to 0$, i.e., $\Delta_n \xrightarrow[]{\mathrm{p}} 0$.

By assumption, the optimal model is unique: there exists a unique $k^*$ such that $\bar{u}_{k^*} > \bar{u}_k$ for all $k \neq k^*$. Define $\gamma := \bar{u}_{k^*} - \max_{k \neq k^*} \bar{u}_k > 0$. Consider the event $E_n := \{\Delta_n < \gamma/2\}$. On $E_n$, for any $k \neq k^*$,
\begin{equation}
    \hat{u}^{\mathrm{loo}}_{n,k^*} \ge \bar{u}_{k^*} - \Delta_n > \bar{u}_{k^*} - \frac{\gamma}{2} = \max_{j \neq k^*} \bar{u}_j + \frac{\gamma}{2} \ge \bar{u}_k + \frac{\gamma}{2} \ge \hat{u}^{\mathrm{loo}}_{n,k}.
\end{equation}
Hence, on $E_n$, the maximizer of $k \mapsto \hat{u}^{\mathrm{loo}}_{n,k}$ is uniquely $k^*$, i.e., $\widehat{M}_n = M^*$. Thus, since $\Delta_n \xrightarrow[]{\mathrm{p}} 0$, 
\begin{equation}
    P(\widehat{M}_n \neq M^*) \le P(E_n^c) \to 0.
\end{equation}
Therefore, $\widehat{M}_n \xrightarrow[]{\mathrm p} M^*$, proving the theorem. 
\end{proof}

\paragraph{Proof of \cref{thm:loo_incoherent}}
\begin{proof}[Proof of \cref{thm:loo_incoherent}]
    Let us prove (a) first. The argument is identical to the proof of \cref{supp:proof:loo_consistency}, except that the posterior predictive is learned under $\phi_2$ while the LOO utility is evaluated under $\phi_1$. Thus, we only record the changes relative to the proof of \cref{supp:proof:loo_consistency}. Specifically, the claim follows from the proof of \cref{supp:proof:loo_consistency} with the substitutions $S^\phi \rightsquigarrow S^{\phi_1}$, $q_{n-1}(\cdot) := p^\phi_{n-1}(\cdot\mid X_{-I}) \rightsquigarrow p^{\phi_2}_{n-1}(\cdot\mid X_{-I})$, $q(\cdot):=f(\cdot;\theta^*) \rightsquigarrow f(\cdot;\theta^{*(2)}_k)$, and with $\bar u_k \rightsquigarrow \bar u^{(1|2)}_k$.
    
    In particular, the proof of \cref{supp:proof:loo_consistency} reduces (via the same Jensen/Markov and exchangeability steps) to showing
    $\E\Big|S^{\phi_1}(\tilde q_{n-1},X_n)-S^{\phi_1}(q,X_n)\Big|\to 0$ where $\tilde q_{n-1}(\cdot) := p^{\phi_2}_{n-1}(\cdot\mid X_{1:n-1})$ and $q(\cdot)=f_k(\cdot;\theta^{*(2)}_k)$. By assumption, \cref{thm:ppc} holds for $\phi_2$, so applied to the sample $X_{1:n-1}$ it yields $\sup_{x\in\cX}|\tilde q_{n-1}(x) - q(x)|\to 0$ almost surely. As in the proof of \cref{supp:proof:loo_consistency}, $0\le q(x)\le C_f$ and $0\le \tilde q_{n-1}(x)\le C_f$ for all $x$.

    Finally, since \cref{condition1}(b)--(c) holds for $\phi_1$, \cref{supp:proof:lemma4} applies to the score $S^{\phi_1}$ with $q_n=\tilde q_{n-1}$ and $q=f_k(\cdot;\theta^{*(2)}_k)$, and therefore implies
    \begin{equation}
        \E \Big| S^{\phi_1}(\tilde q_{n-1},X_n)-S^{\phi_1}(q,X_n)\Big| \to 0.
    \end{equation}
    This completes the proof of (a). 

    To prove (b), recall from \cref{sec:pre:scoring} that each proper score induces a divergence via $D(P,Q) := H(Q) - S(P,Q)$, where $H(Q) = S(Q,Q)$ is the generalized entropy, and $S(P,Q) = \E_{X\sim Q}S(P,X)$. For the Bregman score $S^{\phi_1}$, the induced divergence coincides with the separable Bregman divergence $D^{\phi_1}$. Therefore, for each fixed $k$,
    \begin{equation}
        D^{\phi_1}\big( g \| f_k(\cdot; \theta^{*(2)}_k) \big) = H^{\phi_1}(g) - \E\Big( S^{\phi_1}\big( f_k(\cdot;\theta^{*(2)}_k), X\big) \Big) = H^{\phi_1}(g) - \bar{u}^{(1|2)}_{k},
    \end{equation}
    where $H^{\phi_1}(g)$ does not depend on $k$. Hence maximizing $\bar{u}^{(1|2)}_{k}$ over $k$ is equivalent to minimizing
    $D^{\phi_1}(g \| f_k(\cdot;\theta^{*(2)}_k))$ over $k$, proving (b).
\end{proof}

\newpage
\section{Additional Figures}

\begin{figure}
    \centering
    \includegraphics[width=1\linewidth]{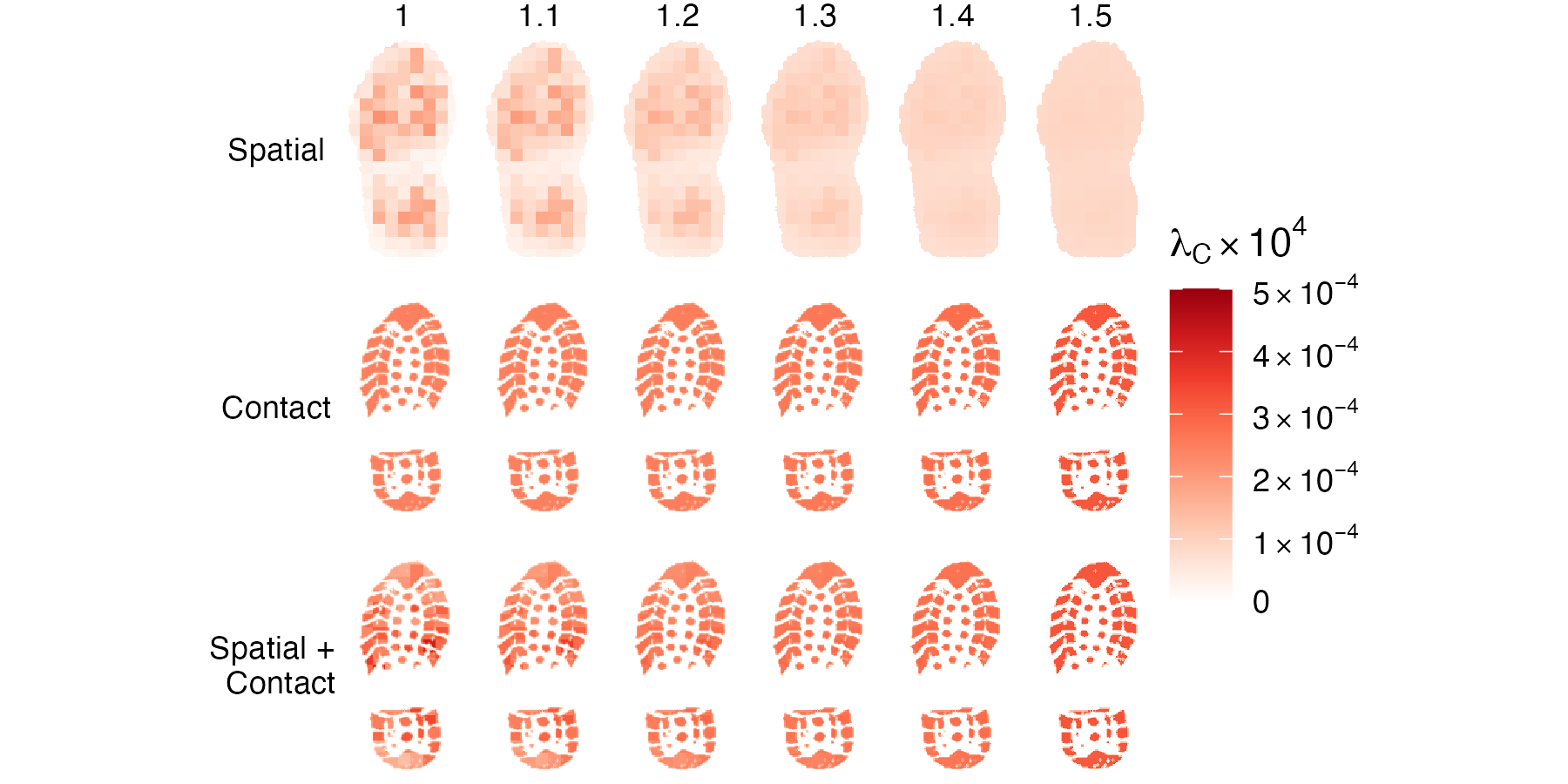}
    \captionsetup{font=small}
    \caption{A companion to \cref{fig:shoe_main_grid}. The posterior mean of $\lambda_{\text{C}}$ across different values of $\beta$ according to different models. Note that the Uniform Model is excluded as its fit does not depend on $\beta$. As $\beta$ increases, the Spatial Model fit approaches that of the Uniform Model, and the Spatial + Contact Model fit approaches that of the Contact Model.}
    \label{fig:shoe_predictives3}
\end{figure}

\putbib
\end{bibunit}
\end{document}